\numberwithin{equation}{section}
\newtheorem{definition}{Definition}[section]
\newtheorem{lemma}[definition]{Lemma}
\newtheorem{theorem}[definition]{Theorem}
\newtheorem{proposition}[definition]{Proposition}
\newtheorem{corollary}[definition]{Corollary}
\newtheorem{remarkth}[definition]{Remark}
\newenvironment{remark}{\begin{remarkth}\upshape}{\hfill$\diamond$\end{remarkth}}
\renewcommand{\emph}[1]{{\bfseries\itshape{#1}}}
\numberwithin{figure}{section}
\newcommand{\R}{\mathbb{R}}      %Numeros reales
\newcommand{\ltilde}[3][0]{\altura=0 \advance\altura by #1
           \ancho=#2 \anchom=\ancho \divide\anchom by 2
           \anchoa=\ancho \divide\anchoa by 4
           \anchob=\anchom \advance\anchob by \anchoa
           \kern-3pt \begin{array}[b]{c}
           \begin{picture}(1,1)(\anchom,-\altura)
        \qbezier(0,2)(\anchoa,5)(\anchom,2)
        \qbezier(\anchom,2)(\anchob,-1)(\ancho,4)
        \qbezier(0,2)(\anchoa,4.5)(\anchom,1.8)
        \qbezier(\anchom,1.8)(\anchob,-1.5)(\ancho,4)
       \end{picture} \\[-4pt]{#3}
                       \end{array} \kern-4pt    }
\newcommand{\lhat}[3][0]{\altura=0 \advance\altura by #1
           \ancho=#2 \anchom=\ancho \divide\anchom by 2
           \anchoa=\ancho \divide\anchoa by 4
           \anchob=\anchom \advance\anchob by \anchoa
           \kern-3pt \begin{array}[b]{c}
           \begin{picture}(1,1)(\anchom,-\altura)
        \qbezier(0,2)(\anchoa,4)(\anchom,6)
        \qbezier(\anchom,6)(\anchob,4)(\ancho,2)
        \qbezier(0,2)(\anchoa,3.8)(\anchom,5.6)
        \qbezier(\anchom,5.6)(\anchob,3.8)(\ancho,2)
       \end{picture} \\[-4pt] {#3}
                       \end{array} \kern-4pt    }
\newcommand{\lvec}[1]{\overleftarrow{#1}}
\newcommand{\rvec}[1]{\overrightarrow{#1}}
\newcommand{\I}{I\mkern-7muI}
\newcommand\prol{\@ifstar{\@proldf}{\@prolpf}}  %% if * dual else primal
\def\@prolpf{\@ifnextchar[{\@prolpf@wrt}{\@prolpf@}}
\def\@prolpf@wrt[#1]#2{\@ifnextchar[{\@prolpf@wrt@at{#1}{#2}}{\@prolpf@wrt@{#1}{#2}}}
\def\@prolpf@wrt@at#1#2[#3]{\prolsymbol^{#1}_{#3}#2}
\def\@prolpf@wrt@#1#2{\prolsymbol^{#1}#2}
\def\@prolpf@#1{\@ifnextchar[{\@prolpf@at{#1}}{\@prolpf@@{#1}}}
\def\@prolpf@at#1[#2]{\prolsymbol_{#2}#1}
\def\@prolpf@@#1{\prolsymbol#1}
\def\@proldf{\@ifnextchar[{\@proldf@wrt}{\@proldf@}}
\def\@proldf@wrt[#1]#2{\@ifnextchar[{\@proldf@wrt@at{#1}{#2}}{\@proldf@wrt@{#1}{#2}}}
\def\@proldf@wrt@at#1#2[#3]{\prolsymbol^{*#1}_{#3}#2}
\def\@proldf@wrt@#1#2{\prolsymbol^{*#1}#2}
\def\@proldf@#1{\@ifnextchar[{\@proldf@at{#1}}{\@proldf@@{#1}}}
\def\@proldf@at#1[#2]{\prolsymbol^*_{#2}#1}
\def\@proldf@@#1{\prolsymbol^*#1}
\def\prolsymbol{\mathcal{T}}
\begin{document}
{\Large

\title[Unimodularity and preservation of volumes in
nonholonomic Mechanics]{Unimodularity and
preservation of volumes in nonholonomic Mechanics}

\author[Y.\ N.\ Fedorov]{Yuri N.\ Fedorov}
\address{Yuri N.\ Fedorov:
Department de Matematica Aplicada I \\
Universitat Politecnica de Catalunya, Barcelona, E-08028 Spain}
\email{Yuri.Fedorov@upc.edu}

\author[L.\ C. \ Garc\'{\i}a-Naranjo]{Luis C. \ Garc\'{\i}a-Naranjo}
\address{L. C.\ Garc\'{\i}a-Naranjo:
Departamento de Matem\'aticas y Mec\'anica \\
IIMAS-UNAM \\
Apdo Postal 20-726,  Mexico City,  01000, Mexico}
\email{luis@mym.iimas.unam.mx}

\author[J.\ C.\ Marrero]{Juan C.\ Marrero}
\address{Juan C.\ Marrero:
ULL-CSIC Geometr\'{\i}a Diferencial y Mec\'anica Geom\'etrica\\
Departamento de Matem\'atica Fundamental, Facultad de
Ma\-te\-m\'a\-ti\-cas, Universidad de La Laguna, La Laguna,
Tenerife, Canary Islands, Spain} \email{jcmarrer@ull.es}

\thanks{This work has been partially supported by MEC (Spain)
Grants MTM2009-13383, MTM2011-15725-E, MTM2012-34478, MTM2012-31714 and the project of the Canary Government ProdID20100210. All the authors are grateful to their institutions for funding our research visits, which allowed the completion of the present article.
%LGN acknowledges the hospitality at the Department de Matem\'atica Aplicada I, 
%and IV, at UPC Barcelona for his recent stay there, and the Departamento de Matem\'atica Fundamental,
%at Universidad de la Laguna, for its hospitality in numerous visits. JCM acknowledges the hospitality and support
%of the Department de Matem\'atica Aplicada I (UPC, Barcelona, Spain) and the Section de Math\'ematiques
%of EPFL (Lausanne, Switzerland) in several visits.
}

\keywords{nonholonomic mechanical systems,  linear almost Poisson
structure, Hamiltonian dynamics, kinetic energy Hamiltonians, modular vector field, unimodularity,
invariant volume forms, symmetries, reduction}

\subjclass[2010]{37C40,37J60,70F25,70G45,70G65}

\begin{abstract}
The equations of motion of a mechanical system subjected to nonholonomic linear constraints can be formulated  in terms of a linear almost Poisson structure in a vector bundle. We study the existence of invariant measures for the system
in terms of the unimodularity of this structure. In the presence of symmetries, our approach
allows us to give necessary and sufficient conditions for the existence of an invariant volume,
that unify and improve results existing in the literature. We  present an algorithm to study 
the existence of a smooth invariant volume for nonholonomic mechanical systems with symmetry and we
apply it to several concrete mechanical examples.

\end{abstract}

\maketitle

\tableofcontents

\section{Introduction}

The existence of an invariant measure for a  system of differential equations is a very important property.
From the point of view of dynamical systems, it is a key ingredient for the application of ergodic theory. It is also a crucial hypothesis in Jacobi's theorem of the last multiplier that
establishes integrability of the system via quadratures, see e.g. \cite{arnold}. Moreover, the existence
of a smooth invariant measure imposes certain restrictions on the qualitative nature of the 
fixed points of the system; namely, it prohibits the existence of asymptotic equilibria and limit cycles.
The methods developed in this paper address the general question of the existence of an invariant measure for
nonholonomic mechanical systems.

Suppose that a given system of differential equations is described by a vector field $X$ on an orientable  phase-space manifold $M$
that is equipped with a volume form $\mu$. If $f$ is a positive function on $M$, then $X$ preserves the measure
$f\mu$ if and only if $f$ satisfies {\em Liouville's equation}
\begin{equation}
\label{E:Liouville}
\dot f + f \,\mbox{div}_\mu (X)=0,
\end{equation}
where $\dot f=X(f)$ is the derivative of $f$ along the flow of $X$. In local coordinates, the above equation is
a linear partial differential equation for $f$. 
We remark that the existence of an invariant measure is a global property of the system. 
By the Flow-Box Theorem, Liouville equation can always be solved locally (away from equilibrium points).

{\bf Liouville equation for  nonholonomic systems with symmetry}

Let us summarize the local version of our results. In the case of a nonholonomic mechanical system with linear constraints, the phase-space manifold $M$ is a vector bundle 
$D^*$ over a configuration manifold $Q$. 
The fiber variables $p$ are physically interpreted as momenta.

For simplicity, we restrict our attention to systems without potential forces. We are especially interested in systems that possess   a  symmetry group $G$ that acts on $Q$
leaving the constraints and the kinetic energy function invariant.  In section 
\ref{S:Nonho-Syst} we carry out the details of the reduction of the equations of motion. 
It turns out that the reduced phase-space $D^*/G$  is a vector bundle over the {\em shape space} $\widehat Q:=Q/G$, and the fiber 
variables can be split into $(p_a, p_\alpha)$ where, roughly speaking, the $p_a$  are compatible with the symmetries. 

 The Hamiltonian $H$ of the system is derived from the kinetic energy and  is a positive definite quadratic form, with our choice of $(p_a, p_\alpha)$, is of    block type $H=\frac{1}{2}(\mathcal G^{\alpha \beta}(\hat q^{\hat \iota})p_\alpha p_\beta +\mathcal G^{a b}(\hat q^{\hat \iota})p_a p_b)$, where  $\hat q^{\hat \iota}$ are local coordinates on $\widehat Q$. The reduced equations of motion that 
 obey the Lagrange-D'Alembert principle have the form 
 \begin{equation}
 \label{E:Intro-Ham-Eqns}
 \begin{split}
\frac{d\hat q^{\hat \iota}}{dt}&= \widehat Y_\alpha^{\hat \iota}\frac{\partial H}{\partial p_\alpha} ,\\
\frac{dp_\alpha}{dt}&=-\widehat Y_\alpha ^{\hat \iota}\frac{\partial H}{\partial \hat q^{\hat \iota}}
 - C_{\alpha I}^Jp_J\frac{\partial H}{\partial p_I} ,
 \\
 \frac{dp_a}{dt}&=-C_{a I}^Jp_J\frac{\partial H}{\partial p_I},
 \end{split}
\end{equation}
where the upper-case  indices
$I, J,\dots$ run over the joint range of $a, b, \dots $ and $\alpha, \beta, \dots$,
while
 $\widehat Y_\alpha^{\hat \iota}$  and   $C_{IJ}^K$
are smooth functions of  $\hat q^{\hat \iota}$.  Moreover, the functions $C_{IJ}^K$ are skew-symmetric with respect to the lower indices, i.e. $C_{IJ}^K=-C_{JI}^K$. 

The Lagrangian version of the previous equations was considered in \cite{LeMaMa0} (see also \cite{GrLeMaMa}) for the reduction of a nonholonomic mechanical system. In the particular case when the indices $\hat{\iota}$ run over the same range that the indices $\alpha$, that is, if {\em dimension assumption} holds (see \cite{BKMM}) then the previous equations were considered in \cite{KoMa2}.

As we shall see (consequence of Theorems \ref{T:Basic-Unimodularity} and \ref{T:Main}),
 if there is a  preserved measure $\Phi$ for the system (\ref{E:Intro-Ham-Eqns}) then it may be  chosen of {\em basic type}, i.e.
\begin{equation*}
\Phi=e^{\widehat \sigma (\hat q^{\hat \iota})}\,d\hat q^{\hat \iota} \wedge dp_\alpha \wedge dp_a,
\end{equation*}
for a certain real smooth function $\widehat{\sigma}$ on $\widehat Q$ that needs to be determined. If we enforce the Liouville equation
(\ref{E:Liouville}) (for $f = e^{\widehat{\sigma}}$, $\mu = \Phi$ and $X$ the vector field defined by equations \eqref{E:Intro-Ham-Eqns}),
we get
\begin{equation*}
e^{\widehat \sigma} \left ( \frac{\partial \widehat Y_\alpha^{\hat \iota}}{\partial  \hat q^{\hat \iota}}\frac{\partial H}{\partial p_\alpha}
+  \widehat Y_\alpha^{\hat \iota} \frac{\partial \widehat{\sigma}}{\partial \hat q^{\hat \iota}}\frac{\partial H}{\partial p_\alpha}
-C_{\alpha I}^\alpha \frac{\partial H}{\partial p_I}-C_{aI}^a \frac{\partial H}{\partial p_I}\right )=0,
 \end{equation*}
where we have used the skew-symmetry of the coefficients $C_{IJ}^K$ and the 
equality of mixed partial derivatives of $H$. Rearranging terms we
get
\begin{equation*}
 \left ( \frac{\partial  \widehat  Y_\alpha^{\hat \iota}}{\partial  \hat q^
 {\hat \iota}}
+  \widehat  Y_\alpha^{\hat \iota} \frac{\partial \widehat{\sigma}}{\partial \hat q^{\hat \iota}} + C_{\alpha I}^I\right ) \frac{\partial H}{\partial p_\alpha}
+C_{ a I}^I \frac{\partial H}{\partial p_a}=0.
 \end{equation*}

The form of the Hamiltonian $H$ implies that the latter identity can only hold for all values of $p_a, p_\alpha$ if  the following identities hold separately:
\begin{equation*}
 \frac{\partial  \widehat Y_\alpha^{\hat \iota}}{\partial  \hat q^{\hat \iota}}
+ \widehat  Y_\alpha^{\hat \iota} \frac{\partial \widehat{\sigma}}{\partial \hat q^{\hat \iota}} + C_{\alpha I}^I=0 \quad \mbox{for all $\alpha$,} \qquad \mbox{and} \qquad C_{ a I}^I =0 \quad \mbox{for all $a$}.
 \end{equation*}
This paper is devoted at giving an intrinsic  version of the above conditions and 
 applying them
to several concrete mechanical examples. Notice that only the first condition involves the unknown $\widehat{\sigma}$ and that none of them involves the
Hamiltonian function $H$. All of the relevant information for the existence of an invariant measure is contained in the coefficients $\widehat Y_\alpha^{\hat \iota}$ and $C_{IJ}^K$. 
We shall see that these coefficients determine a 
{\em  linear almost Poisson structure} on $D^*/G$.
The formulation of nonholonomic mechanics in terms of a linear almost Poisson structure
is well-known and has its origins in   \cite{SchMa} (see also \cite{CaLeMa,IbLeMaMa,KoMa2,Marle}).
In this paper we show that
 the above conditions for a preserved measure can 
be formulated intrinsically (Theorem \ref{T:Intrinsic-Cond_Unimd}) in terms of a geometric notion: the {\em unimodularity } of the underlying linear
almost Poisson structure. This formulation was announced in \cite{FeGaMa}.

\medskip
{\bf Unimodularity and measure preservation in mechanics}

The  well known theorem of Liouville states that a Hamiltonian system on a symplectic manifold preserves the
symplectic volume, see e.g. \cite{AbMa}. The situation is not so simple for a Hamiltonian system on a Poisson manifold.
For instance, if the Poisson manifold is the dual space $\mathfrak{g}^*$ of a Lie algebra $\mathfrak{g}$ equipped
with the Lie-Poisson structure, then Kozlov \cite{Ko} showed that the  flow of  a Hamiltonian
of  kinetic type  on $\mathfrak{g}^*$ preserves a smooth measure if and only if the Lie algebra $\mathfrak{g}$
is unimodular.

More generally, a sufficient condition for the existence of a smooth measure for a Hamiltonian system
on an abstract Poisson manifold $M$ is that  the {\em modular class of M} vanishes, 
see e.g. \cite{We}. The modular class is the Poisson cohomology class of the modular vector field of $M$ with respect to a volume form. This
vector field plays an important role in the classification of certain Poisson structures (see \cite{DuHa,GrMaPe,LiXu}). On the other hand, 
the modular class of $M$ is the obstruction to the existence of a duality between the canonical homology and the Poisson
cohomology of $M$ (see \cite{BrZu,EvLuWe,Xu}).

The modular class of a Lie algebroid  was introduced in \cite{EvLuWe} and generalizes the
modular character of a Lie algebra. 
 In Marrero \cite{Marrero} the results of Kozlov \cite{Ko} are generalized
to consider the  preservation of volumes for {\em mechanical} Hamiltonian systems 
 on vector bundles equipped with a linear Poisson structure, which is the dual object
 of a Lie algebroid.
A  Hamiltonian   is  mechanical  if it can be expressed as the sum of the kinetic 
and potential energies. Just as in the case studied by Kozlov, the unimodularity of the Lie algebroid is intimately related
with necessary and sufficient conditions for the existence of an invariant measure.

\medskip

{\bf Contributions and outline of the paper}

In section \ref{S:Unimodularity-General} we introduce the  notion of modular
vector field with respect to a volume form and unimodularity of a general  almost Poisson structure. An almost Poisson structure 
on a manifold $M$ is unimodular if there exists a volume form on $M$ which is preserved by all the Hamiltonian vector fields. 
In section \ref{S:Unimodularity-VecBundle} we specialize to the case when
the almost Poisson structure is linear in a vector bundle and we analyze the relationship between
preserved measures for Hamiltonian vector fields and unimodularity.
The main theoretical result of the paper is contained in Theorem \ref{T:Main}, that 
shows that a kinetic energy Hamiltonian vector field on a vector bundle possesses an
invariant measure if and only if the underlying almost Poisson structure is unimodular. Recently, Grabowski \cite{grabowski}  proved, independently
and using a different approach, a similar result. He introduced the notion of the modular section of a skew-symmetric
algebroid and from this theory he obtained the result.

Section \ref{S:Nonho-Syst} focuses on preservation of volumes for nonholonomic systems.
After carrying out the reduction by a symmetry group from the almost Poisson perspective,
we state and prove Theorem \ref{T:Intrinsic-Cond_Unimd} and Corollary \ref{C:Nonho-Invariant-Measure} that allow us to recover several known results
in the literature  
\cite{CaCoLeMa,Jo,Ko,ZeBo} in a unified framework. Moreover, we are able to express the related 
necessary and sufficient conditions for the existence of an invariant measure
 both in global and local form, and to extend the results to the case where the 
{\em dimension assumption} (see \cite{BKMM}) does not hold.

In section \ref{S:algorithm} we give an algorithm to study the existence of an invariant measure
for the reduction of a nonholonomic system with symmetry, that is applied in section \ref{S:examples}
to several concrete mechanical examples. This allows us to obtain the following original
results:

We prove non-existence of an invariant measure for the reduced equations of motion
of the following problems:
\begin{enumerate}
\item[(i)] As considered in \cite{Vor1, Vor2}, the motion of a rigid body with a planar section,
that is not planar or axially symmetric, and  rolls without slipping over a fixed sphere  (Theorem \ref{T:planar-section-body}).
\item[(ii)] The motion of an inhomogeneous sphere whose center of mass does not
coincide with its geometric center that rolls without slipping on the plane (Chaplygin's top) and is not axially symmetric (Theorem \ref{T:Chaplygin-top}).
\item[(iii)] The rolling without slipping of an inhomogeneous sphere, whose center of mass coincides with
its geometric center,  over/inside a 
circular cylinder (Theorem \ref{T:Ball-on-cylinder}).  
\item[(iv)] We also show that the measure
fails to exist in the limit when the  radius of the cylinder in  (iii)  goes to zero and the given ball is
axially symmetric but not homogeneous (Theorem \ref{T:Ball-on-wire}). This example
utilizes our extended theoretical results since 
the dimension assumption does not hold.
\end{enumerate}

During the referee process of this paper, we applied the methods developed here to show
that the reduced equations for a homogeneous ellipsoid rolling on the plane possess an invariant measure if and only if
the ellipsoid is axially symmetric \cite{GNMarrero}.

%On the other hand, we consider the motion of an  axially symmetric Chaplygin top subjected
%to an additional {\em rubber} constraint and we give an explicit formula for the preserved measure (formula \eqref{E:Measure-Rubber-Chap-Top}). 

\medskip 

\section{Almost Poisson structures and unimodularity}
\label{S:Unimodularity-General}

We begin by introducing the notion of the modular vector field of an almost Poisson structure with
respect to a volume form. This is the first step towards the definition of the concept of unimodularity.

Recall that an {\em almost Poisson structure} on a manifold $M$ is an $\R$-bilinear bracket 
of functions,
\begin{equation*}
\{ \cdot , \cdot \} : C^{\infty}(M)\times C^{\infty}(M) \to  C^{\infty}(M),
\end{equation*}
that satisfies 
\begin{enumerate}
\item[(i)] skew-symmetry: $\{f,g\}=-\{g,f\}$,
\item[(ii)] Leibniz rule:  $\{fh,g\}=f\{h,g\}+h\{f,g\}$,
\end{enumerate}
for all $f,g, h \in C^\infty(M)$. We speak of an {\em almost} Poisson bracket since we do not 
require the Jacobi identity
\begin{equation*}
\{\{f,g\},h\}+ \{\{h,f\},g\}+  \{\{g,h\},f\}=0
\end{equation*}
to hold. If the above identity holds for all functions,
 then $\{ \cdot, \cdot \}$ is a usual Poisson structure on $M$ \cite{Va}.
Our interest in almost Poisson structures comes from nonholonomic mechanics 
 \cite{SchMa} (see also \cite{CaLeMa,IbLeMaMa,KoMa2,Marle}).

As a consequence of Leibniz' rule, for any function $f\in C^\infty(M)$ we can associate a vector
field $X_f$ on $M$, the {\em Hamiltonian vector field}\footnote{Note that $X_f$  is not a Hamiltonian vector field in the usual sense since we work with an almost Poisson bracket. It would be more
accurate to talk about an almost Hamiltonian vector field. We eliminate the ``almost" in our
terminology for brevity.} {\em of $f$}, whose action on $g\in C^\infty (M)$ is defined
by $X_f(g):=\{g,f\}$. Note that by skew-symmetry we have $X_f(g)=-X_g(f)$.

Let $(x^i)$ be a local system of coordinates on $M$, and denote by $\pi_{ij}(x):=\{x^i,x^j\}$. The local
expression for $X_f$ is
\begin{equation*}
X_f(x)=\pi_{ij}(x)\frac{\partial f}{\partial x^j}(x)\frac{\partial }{\partial x^i},
\end{equation*}
and the integral curves of $X_f$ satisfy {\em Hamilton's equations}
\begin{equation*}
\frac{dx^i}{dt}=\pi_{ij}(x)\frac{\partial f}{\partial x^j}(x).
\end{equation*}

Now suppose that $M$ is an orientable manifold and that $\Phi$ is a volume form on $M$. The following proposition is key to the developments in this paper.
\begin{proposition}
The map $\mathcal{M}_\Phi:C^\infty(M)\to C^\infty(M)$ given by
\begin{equation*}
\mathcal{M}_\Phi(f)=\mbox{\em div}_\Phi ( X_f)
\end{equation*}
defines a vector field on $M$, where $\mbox{\em div}_\Phi (X_f)$ denotes the divergence of the
vector field $X_f$ with respect to the volume form $\Phi$.
\end{proposition}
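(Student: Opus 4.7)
The plan is to show that $\mathcal{M}_\Phi$ is a derivation of the algebra $C^\infty(M)$, which by standard theory implies it is a vector field on $M$. The $\R$-linearity of $\mathcal{M}_\Phi$ is immediate, since the assignment $f\mapsto X_f$ is $\R$-linear (from $\R$-bilinearity of $\{\cdot,\cdot\}$) and the divergence operator is $\R$-linear. So the whole issue reduces to verifying the Leibniz rule
\begin{equation*}
\mathcal{M}_\Phi(fg)=f\,\mathcal{M}_\Phi(g)+g\,\mathcal{M}_\Phi(f)\qquad\text{for all }f,g\in C^\infty(M).
\end{equation*}

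The first key ingredient is that the Leibniz property (ii) of the almost Poisson bracket translates into the identity $X_{fg}=fX_g+gX_f$. Indeed, for any $h\in C^\infty(M)$, $X_{fg}(h)=\{h,fg\}=-\{fg,h\}=-f\{g,h\}-g\{f,h\}=f\{h,g\}+g\{h,f\}=fX_g(h)+gX_f(h)$. Applying $\operatorname{div}_\Phi$ to this identity and using the standard product formula $\operatorname{div}_\Phi(hY)=h\,\operatorname{div}_\Phi(Y)+Y(h)$ (valid for any smooth function $h$ and vector field $Y$), one obtains
\begin{equation*}
\mathcal{M}_\Phi(fg)=f\,\operatorname{div}_\Phi(X_g)+X_g(f)+g\,\operatorname{div}_\Phi(X_f)+X_f(g).
\end{equation*}

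The second key ingredient is the skew-symmetry (i) of the bracket: $X_g(f)+X_f(g)=\{f,g\}+\{g,f\}=0$, so the two ``cross terms'' cancel and we are left with exactly $f\,\mathcal{M}_\Phi(g)+g\,\mathcal{M}_\Phi(f)$, as required. Since $\mathcal{M}_\Phi$ is an $\R$-linear derivation of $C^\infty(M)$, it corresponds to a unique smooth vector field on $M$.

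There is no real obstacle here: the proof is essentially a two-line manipulation once the right product rules are identified. The only point worth noting is the conceptual role of each axiom --- the Leibniz axiom of the bracket produces the derivation property of $f\mapsto X_f$, while the skew-symmetry axiom is precisely what is needed to cancel the first-order ``error terms'' arising from $\operatorname{div}_\Phi(hY)$. This is the reason the construction makes sense for an \emph{almost} Poisson structure, and does not require the Jacobi identity.
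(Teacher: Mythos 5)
Your proof is correct and follows essentially the same route as the paper: decompose $X_{fg}=fX_g+gX_f$ via the Leibniz property of the bracket, apply the divergence product rule (which is exactly what the paper derives from $\pounds_{fX_g}\Phi = f\pounds_{X_g}\Phi + df\wedge {\bf i}_{X_g}\Phi$ and $df\wedge {\bf i}_{X_g}\Phi = X_g(f)\Phi$), and cancel the cross terms by skew-symmetry. The only cosmetic difference is that you quote the formula $\operatorname{div}_\Phi(hY)=h\,\operatorname{div}_\Phi(Y)+Y(h)$ as standard, whereas the paper proves it inline.
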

\begin{proof}
We need to show that $\mathcal{M}_\Phi$ is a derivation, i.e. that 
\begin{equation}
\label{E:M_is_derivation}
\mathcal{M}_\Phi(fg)=f\mathcal{M}_\Phi(g)+g\mathcal{M}_\Phi(f).
\end{equation}
We shall see that this is a consequence of the skew-symmetry of the bracket. The reader can 
check that the above identity holds with a simple calculation in local coordinates. Here we
present an intrinsic proof. Denote by $\pounds_X\alpha$ the Lie derivative of the form $\alpha$ along the vector field $X$. Using the properties of the Lie derivative we have:
\begin{equation}
\label{E:Aux_proof_Mod_vectorfield}
\begin{split}
\mathcal{M}_\Phi(fg)\Phi&=\pounds_{X_{fg}}\Phi=\pounds_{fX_{g}}\Phi+\pounds_{gX_{f}}\Phi \\
&=\left ( f\mathcal{M}_\Phi(g)+ g\mathcal{M}_\Phi(f) \right ) \Phi + df\wedge {\bf i}_{X_g}\Phi
+dg\wedge {\bf i}_{X_f}\Phi.
\end{split}
\end{equation}
Using that $df\wedge \Phi=0$, one obtains
\begin{equation*}
0={\bf i}_{X_g}(df\wedge \Phi)=X_g(f)\Phi-df\wedge {\bf i}_{X_g}\Phi,
\end{equation*}
which implies
\begin{equation*}
df\wedge {\bf i}_{X_g}\Phi=X_g(f)\Phi.
\end{equation*}
Similarly, $dg\wedge {\bf i}_{X_f}\Phi=X_f(g)\Phi$. Substituting the latter expressions into \eqref{E:Aux_proof_Mod_vectorfield} and using the skew-symmetry of the bracket yields \eqref{E:M_is_derivation}.
\end{proof}

If $\dim M = m$ and the system of coordinates $(x^i)$ is such that $\Phi=dx^1\wedge\dots\wedge dx^m,$ then the local
expression for $\mathcal{M}_\Phi$ is
\begin{equation}
\label{E:Mod-Vect-Field-Local}
\mathcal{M}_\Phi(x)=\frac{\partial \pi_{ij}}{\partial x^i}(x)\frac{\partial }{\partial x^j}.
\end{equation}

\begin{definition}
The vector field $\mathcal{M}_\Phi$ defined above is called the {\em modular vector field} 
of the almost Poisson structure on $M$ with respect to the volume form $\Phi$. 
\end{definition}

\begin{remark} \label{R:Mod-Class} The notion of modular vector field for a Poisson manifold has been extensively
used \cite{DuHa, GrMaPe, LiXu}. Moreover, Weinstein shows in \cite{We} that it  defines a Poisson 
cohomology class of order 1 which is independent of the choice of volume form. The Poisson manifold
is said to be {\em unimodular} if this cohomology class is zero, or, equivalently, if the modular
vector field is Hamiltonian.
\end{remark}

It is natural to ask how the modular vector field is affected  when we consider a different volume
form $\Phi'=e^\sigma \Phi$ on $M$. Here $\sigma$ is any smooth function on $M$. A straightforward 
calculation gives
\begin{equation}
\label{E:Modular-VF-Diff-Measures}
\mathcal{M}_{\Phi'}=\mathcal{M}_{e^\sigma\Phi}= \mathcal{M}_{\Phi}-X_\sigma.
\end{equation}

The above calculation, together with the comments in Remark \ref{R:Mod-Class}, suggest that
we make the following definition:

\begin{definition} The orientable  almost Poisson manifold $M$ is said to be {\em unimodular}
if there exists a volume form $\Phi$ on $M$ and a real function $\sigma\in C^\infty(M)$ such that 
$\mathcal{M}_\Phi=X_\sigma$.\end{definition}
 
The following Theorem is an immediate consequence of the definition.

\begin{theorem}
\label{T:suff}
If the orientable almost Poisson manifold $M$ is  unimodular  there exists a volume form on $M$ that is invariant
by the flow of all Hamiltonian vector fields.
\end{theorem}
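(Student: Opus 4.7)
The plan is to exhibit an explicit conformally rescaled volume form whose modular vector field vanishes identically, and then observe that this vanishing is precisely the statement of invariance under every Hamiltonian flow.

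First I would unpack what is needed. Saying that a volume form $\Phi'$ is preserved by the flow of every Hamiltonian vector field $X_f$ is saying that $\pounds_{X_f}\Phi' = 0$ for all $f\in C^\infty(M)$, equivalently $\mathrm{div}_{\Phi'}(X_f) = 0$ for all $f$, equivalently $\mathcal{M}_{\Phi'}(f) = 0$ for all $f$. In other words, the goal reduces to producing a volume form whose modular vector field is the zero vector field on $M$.

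Next I would use the hypothesis of unimodularity to get a pair $(\Phi,\sigma)$ with $\mathcal{M}_\Phi = X_\sigma$, and then invoke the transformation rule \eqref{E:Modular-VF-Diff-Measures} with this $\sigma$. Setting $\Phi' := e^\sigma \Phi$ yields
\begin{equation*}
\mathcal{M}_{\Phi'} = \mathcal{M}_{e^\sigma \Phi} = \mathcal{M}_\Phi - X_\sigma = X_\sigma - X_\sigma = 0,
\end{equation*}
so by the reformulation in the previous paragraph, $\Phi'$ is invariant under the flow of every Hamiltonian vector field.

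I do not expect a genuine obstacle here: the entire proof rests on the identity \eqref{E:Modular-VF-Diff-Measures}, which is already established, and on the tautology that $\mathcal{M}_{\Phi'}(f) = \mathrm{div}_{\Phi'}(X_f)$ controls Lie derivatives of $\Phi'$ along Hamiltonian vector fields. The only mild subtlety worth flagging is conventional: one should verify the sign in \eqref{E:Modular-VF-Diff-Measures} (namely that rescaling by $e^\sigma$ subtracts $X_\sigma$ rather than adding it), so that the appropriate conformal factor is $e^\sigma$ and not $e^{-\sigma}$; this is a one-line check using $\pounds_{X_f}(e^\sigma\Phi) = e^\sigma X_f(\sigma)\Phi + e^\sigma \pounds_{X_f}\Phi$ together with the skew-symmetry of the bracket.
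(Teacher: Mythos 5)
Your proof is correct and is essentially the paper's own argument: the paper's one-line proof (``if $\mathcal{M}_\Phi = X_\sigma$ then $\mathrm{div}_{e^\sigma\Phi}(X_f)=0$ for all $f$'') is exactly your computation $\mathcal{M}_{e^\sigma\Phi} = \mathcal{M}_\Phi - X_\sigma = 0$ via \eqref{E:Modular-VF-Diff-Measures}, just stated more tersely. Your sign check of \eqref{E:Modular-VF-Diff-Measures} is also right, so there is nothing to add.
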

\begin{proof}
If $\mathcal{M}_\Phi=X_\sigma$, then $\mbox{div}_{e^\sigma \Phi}(X_f)=0$ for any function $f\in C^\infty (M)$.
\end{proof}

\section{Linear almost Poisson structures, kinetic energy hamiltonians and unimodularity}
\label{S:Unimodularity-VecBundle}

Theorem  \ref{T:suff} shows that unimodularity is a sufficient condition for the existence of an invariant measure for Hamiltonian vector fields. In this section we  investigate the degree to which this condition is also necessary in the special case when we deal with  linear almost Poisson structures on vector bundles.
We will prove in Theorem \ref{T:Main} that if the Hamiltonian vector field associated to a kinetic energy Hamiltonian preserves a smooth measure then the corresponding almost Poisson structure is
unimodular. This is the main theoretical result of the the paper and will be applied in section 
 \ref{S:examples}
to analyze the existence of an invariant measure of several nonholonomic mechanical systems with symmetry.

\subsection{Linear almost Poisson structures and unimodularity}

Let $\tau: E\to Q$ be a real vector bundle of rank $n$ on a manifold $Q$ of dimension $m$. We shall denote by $E_q$ the fiber over the point $q\in Q$ and by $\Gamma (\tau)$ the space of sections
of $\tau: E\to Q$.

A real function $\varphi:E\to \R$ is said to be {\em linear} if its fiberwise 
restriction $\left . \varphi \right |_{E_q}: E_q\to \R$ is linear for all $q\in Q$. 
On the other hand, a real function  $\tilde \sigma:E\to \R$ is said to be {\em basic} if 
it can be written as $\tilde \sigma=\sigma \circ \tau$ for a certain $\sigma\in C^\infty (Q)$.

It is well known that there exists a one-to-one correspondence between the space of sections
$\Gamma (\tau^*)$ of the dual bundle $\tau^*: E^* \to Q$ and the space of linear functions on $E$.
Such a correspondence is the following: to the section $X\in \Gamma (\tau^*)$, we associate the
linear function $X_\ell:E \to \R$  defined by
\begin{equation*}
X_\ell(\alpha) =X(\tau(\alpha))(\alpha), \qquad \mbox{for} \quad \alpha\in E.
\end{equation*}

For every point $q\in Q$ we can choose local coordinates $(q^i)$ on a chart $ U$
and a local basis $\{e_I\}$  of $ \Gamma (\tau^*)$ on $ U$. We denote by $p_I$ the 
local linear functions: 
\begin{equation*}
p_I=(e_I)_\ell.
\end{equation*}
Then $(q^i, p_I)$ is a system of local bundle coordinates on $\tau^{-1}({U})\subset E$, where $q^{i}$ and $p_I$ are 
basic and linear functions, respectively.

\begin{definition}\cite{LeMaMa0} An almost Poisson structure on a real vector bundle $\tau:E\to Q$
is said to be {\em linear} if the almost Poisson bracket of linear functions is linear.
\end{definition}

The following proposition explains how basic and linear functions behave under the action of linear almost Poisson brackets. Its proof is given in  \cite{LeMaMa0}.

\begin{proposition}\cite{LeMaMa0} Let $\{ \cdot , \cdot \}$ be a linear almost Poisson bracket on $E$.
\begin{enumerate}
\item[(i)] If $X$ is a section of $\tau^*:E^*\to Q$ and $\sigma$ is a smooth function on $Q$, then the bracket
$\{\sigma\circ \tau, X_\ell\}$ is a basic function on $E$ (the bracket of a linear function and a basic function is a basic function).
\item[(ii)]If $\sigma, \sigma'$ are smooth functions on $Q$ then  $\{\sigma \circ \tau, \sigma' \circ \tau\}=0$ (the bracket of
basic functions is zero).
\end{enumerate}
\end{proposition}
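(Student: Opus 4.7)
The plan is to work in local bundle coordinates $(q^i,p_I)$ on $\tau^{-1}(U)\subset E$ and exploit the Liouville (Euler) vector field $\Delta=p_I\,\partial/\partial p_I$. This vector field characterizes fiberwise homogeneity: a smooth function $\varphi$ on $E$ is fiberwise homogeneous of degree $k$ iff $\Delta\varphi=k\varphi$. In particular, $\varphi$ is linear iff $\Delta\varphi=\varphi$ and basic iff $\Delta\varphi=0$; moreover, the linearity hypothesis on the bracket yields $\Delta\{p_I,p_J\}=\{p_I,p_J\}$.

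First I would establish the particular case that $\{p_I,\sigma\circ\tau\}$ is basic for every $\sigma\in C^\infty(Q)$. Since $(\sigma\circ\tau)p_J=(\sigma e_J)_\ell$ is linear, the Leibniz expansion
\begin{equation*}
\{p_I,(\sigma\circ\tau)p_J\}=(\sigma\circ\tau)\{p_I,p_J\}+p_J\,\{p_I,\sigma\circ\tau\}
\end{equation*}
has a linear function on its left-hand side. Applying $\Delta$ to both sides and using its Leibniz rule together with $\Delta(\sigma\circ\tau)=0$, $\Delta p_J=p_J$ and $\Delta\{p_I,p_J\}=\{p_I,p_J\}$, every term balances except for an additional $p_J\,\Delta\{p_I,\sigma\circ\tau\}$ on the right. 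Therefore $p_J\,\Delta\{p_I,\sigma\circ\tau\}=0$ for every $J$, which forces $\Delta\{p_I,\sigma\circ\tau\}=0$ off the zero section and hence, by continuity, everywhere. So $\{p_I,\sigma\circ\tau\}$ is basic, and by skew-symmetry the same holds for $\{\sigma\circ\tau,p_I\}$.

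Next I would obtain (ii) by applying the same technique to the Leibniz expansion of $\{(\sigma_1\circ\tau)p_I,(\sigma_2\circ\tau)p_J\}$, whose left-hand side is a bracket of two linear functions. Using the partial result above to compute $\Delta$ of both sides and cancelling like terms leaves $p_I p_J\bigl(\Delta g+g\bigr)=0$, where $g:=\{\sigma_1\circ\tau,\sigma_2\circ\tau\}$. Hence $\Delta g=-g$, i.e., $g$ is fiberwise homogeneous of degree $-1$. For $p\neq 0$ this gives $g(q,tp)=t^{-1}g(q,p)$, and since $g$ is smooth (hence bounded) near the zero section this forces $g\equiv 0$, proving (ii). The full statement (i) then follows by writing $X=X^I e_I$ locally so that $X_\ell=X^I p_I$ with $X^I$ basic, and noting that
\begin{equation*}
\{\sigma\circ\tau,X_\ell\}=X^I\{\sigma\circ\tau,p_I\}+p_I\{\sigma\circ\tau,X^I\}
\end{equation*}
is the sum of a product of basic functions and a term that vanishes by (ii).

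The main subtlety is the homogeneity-to-vanishing step used twice: converting $p_J\,\Delta\phi=0$ (for every $J$) or $\Delta g=-g$ into pointwise information on $\phi$ and $g$, requiring a careful continuity argument across the zero section. With the Liouville-vector-field framework in place, the rest of the proof reduces to a routine bookkeeping of Leibniz and skew-symmetry.
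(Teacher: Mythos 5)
Your argument is correct. The paper itself does not prove this proposition (it defers to \cite{LeMaMa0}), and your Euler-vector-field argument is essentially the standard homogeneity proof, usually phrased instead via the fiber dilations $h_t(\alpha)=t\alpha$ (a function is linear iff $\varphi\circ h_t=t\varphi$, basic iff $\varphi\circ h_t=\varphi$); the two formulations carry identical content. All the delicate points are handled properly: the linearity hypothesis is used exactly where needed (to get $\Delta$ of the left-hand sides and $\Delta\{p_I,p_J\}=\{p_I,p_J\}$), the passage from $p_J\,\Delta\phi=0$ to $\Delta\phi=0$ off the zero section and then everywhere by continuity is sound, and the degree $-1$ homogeneity of $g$ combined with boundedness near the zero section does force $g\equiv 0$. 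Since basicness is a fiberwise local property, the local computation on trivializing charts suffices for the global statement.
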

The content of the above proposition can be stated in terms of the local bundle coordinates $(q^i, p_I)$ as
\begin{equation}
\{p_I,p_J\}=-C_{IJ}^Kp_K, \qquad \{q^i,p_I\}=\rho_I^i, \qquad \{q^i,q^j\}=0,
\end{equation}
for certain smooth functions $C_{IJ}^K,\, \rho_I^i$ on $Q$. These functions are called the 
{\em local structure functions} of the linear almost Poisson structure. In view of this, we obtain the following expression for the Hamiltonian vector field $X_H$ of a smooth function $H:E \to \R$:
\begin{equation}
\label{E:Local-exp-Ham_VF}
X_H=\left ( \rho_I^i\frac{\partial H}{\partial p_I}\right )\frac{\partial }{\partial q^i} - \left ( \rho_I^i\frac{\partial H}{\partial q^i} +C_{IJ}^Kp_K\frac{\partial H}{\partial p_J} \right )\frac{\partial }{\partial p_I},
\end{equation}
and the corresponding Hamilton's equations
\begin{equation}
\label{E:General-Ham-Eqn}
\frac{dq^i}{dt}=\rho_I^i\frac{\partial H}{\partial p_I},\qquad \frac{dp_I}{dt}= - \left ( \rho_I^i\frac{\partial H}{\partial q^i} +C_{IJ}^Kp_K\frac{\partial H}{\partial p_J} \right ).
\end{equation}
It is hard to underestimate the importance of the above universal form of equations in 
mechanics. The equations of motion of a great number of mechanical systems can
be formulated as above.

Let $\sigma\in C^\infty(Q)$. For further reference 
we note that the Hamiltonian vector field of the basic function
$\sigma \circ \tau:E\to \R$  is the vertical vector field
\begin{equation}
\label{E:Local-basic-Ham-VF}
X_{\sigma \circ \tau}= - \rho_I^i\frac{\partial \sigma}{\partial q^i}\frac{\partial }{\partial p_I}.
\end{equation}

\medskip{{\bf Basic-unimodularity}} 

Suppose that the manifold $Q$ and the vector bundle
$\tau:E\to Q$ are orientable (then the same is true about the dual bundle $\tau^*:E^*\to \R$).
Denote by $\nu$ a volume form on $Q$, and by $\Omega \in \Gamma(\Lambda ^n\tau^*)$
a volume form on the fibers of $E^*$ (that is, $\Omega$ is an everywhere non-vanishing section of the vector bundle $\Lambda^n\tau^*: \Lambda^nE^*\to Q$, where $n$ is the rank
of $E$). As indicated in the Appendix, we can consider {\em a 
basic volume form} $\Phi=\nu\wedge \Omega$ on $E$ (actually any basic volume form on $E$ 
can be obtained through this construction).

Local expressions for the modular vector field $\mathcal{M}_\Phi$ are obtained as follows. Recall that we had selected a 
chart ${U}$ with local coordinates $(q^i)$, and that $\{e_I\}$ was a local basis
of $\Gamma(\tau^*)$ on ${U}$. Suppose moreover that locally we have 
\begin{equation*}
\nu=dq^1\wedge\dots \wedge dq^m, \qquad \Omega=e_1\wedge \dots \wedge e_n.
\end{equation*}
 Then (see Appendix),
\begin{equation*}
\Phi=\nu \wedge \Omega= dq^1\wedge\dots \wedge dq^m\wedge dp_1\wedge \dots \wedge dp_n.
\end{equation*}
Under these assumptions, the local expression \eqref{E:Mod-Vect-Field-Local} for the modular
vector field $\mathcal M_\Phi$ becomes
\begin{equation}
\label{E:Local-basic-Mod-VF}
\mathcal M_{\nu\wedge \Omega}=\left ( \frac{\partial \rho_I^i}{\partial q^i} +C_{IJ}^J  \right )
 \frac{\partial }{\partial p_I} .
\end{equation}
As a consequence we have
\begin{proposition}
\label{P:Basic-Vertical}
The modular vector field of a linear almost Poisson structure on a vector bundle 
$\tau :E\to Q$ with respect
to a basic volume form is $\tau$-vertical.
\end{proposition}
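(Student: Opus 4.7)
The plan is to exploit the intrinsic characterization of $\tau$-verticality: a vector field $X$ on the total space of $\tau : E \to Q$ is $\tau$-vertical if and only if $X(\sigma \circ \tau) = 0$ for every $\sigma \in C^\infty(Q)$, i.e., $X$ annihilates all basic functions. Thus the proposition reduces to checking that $\mathcal{M}_{\nu\wedge\Omega}(\sigma\circ\tau)=0$ for every $\sigma \in C^\infty(Q)$.

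By the defining property of the modular vector field, $\mathcal{M}_\Phi(\sigma\circ\tau) = \mathrm{div}_{\nu\wedge\Omega}(X_{\sigma\circ\tau})$. Next I would invoke formula \eqref{E:Local-basic-Ham-VF}, which tells us that the Hamiltonian vector field associated with a basic function is itself $\tau$-vertical, with components depending only on the base coordinates $q^i$, namely $X_{\sigma\circ\tau} = -\rho_I^i(\partial\sigma/\partial q^i)\,\partial/\partial p_I$. Since in bundle coordinates adapted to the local basis $\{e_I\}$ we have $\Phi = dq^1\wedge\dots\wedge dq^m \wedge dp_1\wedge\dots\wedge dp_n$, the divergence reads
\begin{equation*}
\mathrm{div}_{\nu\wedge\Omega}(X_{\sigma\circ\tau}) \;=\; \sum_{I}\frac{\partial}{\partial p_I}\!\left(-\rho_I^i\frac{\partial\sigma}{\partial q^i}\right) \;=\; 0,
\end{equation*}
because the coefficients $\rho_I^i$ and $\sigma$ are pulled back from $Q$. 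This yields $\mathcal{M}_{\nu\wedge\Omega}(\sigma\circ\tau) = 0$ for every basic function, hence $\tau$-verticality of $\mathcal{M}_{\nu\wedge\Omega}$.

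Alternatively, one could deduce the statement directly from the local expression \eqref{E:Local-basic-Mod-VF} for $\mathcal{M}_{\nu\wedge\Omega}$, which in any adapted chart contains only $\partial/\partial p_I$ terms; since $\tau$-verticality is coordinate independent, this suffices. I would probably mention both viewpoints, with the intrinsic argument as the main line and the coordinate formula as a confirmation. There is no serious obstacle here: the proposition is essentially a structural observation, the only point requiring a little care being the choice of the appropriate characterization of verticality (annihilation of basic functions) so that the divergence computation applies cleanly and globally rather than only in a single chart.
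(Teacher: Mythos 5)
Your proposal is correct and is essentially the paper's argument: the paper simply reads off verticality from the local expression \eqref{E:Local-basic-Mod-VF} of $\mathcal{M}_{\nu\wedge\Omega}$ in adapted coordinates, which is exactly your ``confirmation'' viewpoint, and your main line (showing $\mathcal{M}_\Phi$ annihilates basic functions via $\mathrm{div}_\Phi(X_{\sigma\circ\tau})=0$ using \eqref{E:Local-basic-Ham-VF}) is just a mild intrinsic repackaging of the same local computation. Both are sound; no gap.
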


The following definition specializes our interest in the relationship between basic volumes
and unimodularity.

\begin{definition}
Consider a vector bundle $\tau:E\to Q$ equipped with a linear almost Poisson structure.
The structure is said to be {\em basic-unimodular} if there exists a basic volume form $\Phi$
on $E$, and a basic  function $\sigma\circ \tau\in C^\infty (E)$ (with $\sigma \in C^\infty (Q)$)
such that 
$
\mathcal M_\Phi = X_{\sigma \circ \tau}.
$
\end{definition}
\begin{remark}
\label{basic-unimod-independent} Suppose that a linear almost Poisson structure on a vector bundle $\tau: E \to Q$ is basic-unimodular
and that $\Phi'$ is an arbitrary basic volume form on $E$. Then, from (\ref{E:Modular-VF-Diff-Measures}), it follows 
that there exists $\sigma' \in C^{\infty}(Q)$ such that ${\mathcal M}_{\Phi'} = X_{\sigma' \circ \tau}$.
\end{remark}

The following theorem shows that unimodularity, basic-unimodularity, and the fact that all Hamiltonian
vector fields preserve a basic volume are equivalent notions for linear almost Poisson structures on 
vector bundles.
 
\begin{theorem}
\label{T:Basic-Unimodularity}
Let  $\tau:E\to Q$ be a vector bundle equipped with a linear almost Poisson structure.
Suppose that both $Q$ and $E$ are orientable. Then the following statements are equivalent:
\begin{enumerate}
\item[(i)] The almost Poisson structure is unimodular.
\item[(ii)] The almost Poisson structure is basic-unimodular.
\item[(iii)] There exists a basic volume form that is invariant under the flow of all the Hamiltonian vector fields. 
\end{enumerate}

\end{theorem}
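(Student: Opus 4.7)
The plan is to establish the cycle of implications (i) $\Rightarrow$ (ii) $\Rightarrow$ (iii) $\Rightarrow$ (i). The implications (ii) $\Rightarrow$ (iii) and (iii) $\Rightarrow$ (i) are elementary. For (ii) $\Rightarrow$ (iii): if $\mathcal{M}_\Phi = X_{\sigma\circ\tau}$ with $\Phi$ basic, then $e^{\sigma\circ\tau}\Phi$ is again basic (because $e^{\sigma\circ\tau}$ is a basic function) and, by \eqref{E:Modular-VF-Diff-Measures}, $\mathcal{M}_{e^{\sigma\circ\tau}\Phi} = \mathcal{M}_\Phi - X_{\sigma\circ\tau} = 0$, so every Hamiltonian vector field preserves it. For (iii) $\Rightarrow$ (i): if $\Phi_b$ is a basic invariant volume form, then $\mathcal{M}_{\Phi_b}(f) = \mbox{div}_{\Phi_b}(X_f) = 0$ for every $f\in C^\infty(E)$, hence $\mathcal{M}_{\Phi_b}\equiv 0 = X_0$, which is unimodularity.

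The main content of the theorem is the implication (i) $\Rightarrow$ (ii). The plan is to fix any basic volume form $\Phi_0$ on $E$, which exists because $Q$ and $E$ are orientable (see the Appendix). Writing the given unimodular certificate as $\Phi = e^h\Phi_0$ and applying \eqref{E:Modular-VF-Diff-Measures} yields $\mathcal{M}_{\Phi_0} = X_{F}$ with $F := f + h\in C^\infty(E)$ in general not basic; it therefore suffices to exhibit a basic function $\sigma\circ\tau$ whose Hamiltonian vector field agrees with $\mathcal{M}_{\Phi_0}$. The crucial input is Proposition \ref{P:Basic-Vertical}: $\mathcal{M}_{\Phi_0}$ is a $\tau$-vertical basic vector field which in adapted local coordinates reads $V^I(q)\,\partial/\partial p_I$ with $V^I = \partial\rho_I^i/\partial q^i + C_{IJ}^J$, cf.\ \eqref{E:Local-basic-Mod-VF}. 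Comparing the $\partial/\partial p_I$ coefficient of $X_F$ from \eqref{E:Local-exp-Ham_VF} with $V^I(q)$ and evaluating at the zero section $p = 0$ yields the key identity $-\rho_I^i(q)\,\partial F/\partial q^i|_{p=0} = V^I(q)$.

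This motivates the intrinsic global definition $\sigma := F\circ z \in C^\infty(Q)$, where $z:Q\to E$ denotes the zero section, so that $\partial\sigma/\partial q^i = \partial F/\partial q^i|_{p=0}$ and, by \eqref{E:Local-basic-Ham-VF}, $X_{\sigma\circ\tau} = V^I(q)\,\partial/\partial p_I = \mathcal{M}_{\Phi_0}$, establishing basic-unimodularity with the pair $(\Phi_0, \sigma\circ\tau)$. The main obstacle is precisely this upgrading of an arbitrary unimodular certificate $(\Phi, f)$ to a basic one; the key to overcoming it is the rigidity provided by Proposition \ref{P:Basic-Vertical}, which forces the modular vector field of a basic volume to be $\tau$-vertical with coefficients depending only on $q$, so that its full information is concentrated along the zero section and naturally produces the candidate $\sigma = F\circ z$.
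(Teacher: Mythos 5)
Your proposal is correct and follows essentially the same route as the paper: reduce to a basic reference volume form via \eqref{E:Modular-VF-Diff-Measures}, invoke Proposition \ref{P:Basic-Vertical} to see that the modular vector field is $\tau$-vertical with fiberwise-constant coefficients, and extract $\sigma$ by restricting the potential to the zero section. The only cosmetic difference is that you manipulate the unimodularity certificate $\mathcal{M}_\Phi=X_f$ directly instead of first passing through Theorem \ref{T:suff} to obtain a preserved volume, but the underlying computation is identical.
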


\begin{proof}
The implications (iii)$\implies $(ii) and (ii)$\implies $ (i) are trivial.
Moreover, (ii) $\implies $(iii) is a direct consequence of Theorem \ref{T:suff}. Therefore, it only remains
to  prove the 
implication (i) $\implies $(ii). 

Suppose that the linear almost Poisson structure on $E$ is unimodular.
Again, by  Theorem \ref{T:suff}, there exists a volume form $\Phi$ on $E$ that is preserved by
all Hamiltonian vector fields. Such form can be written as
\begin{equation*}
\Phi=e^{\tilde \sigma}\nu \wedge \Omega,
\end{equation*}
where $\nu$ is a volume form on $Q$, $\Omega\in \Gamma(\Lambda^n\tau^*)$ is a volume form
on the fibers of $\tau^*:E^*\to Q$, and $\tilde \sigma$ is a smooth function on $E$. We shall prove that 
$\tilde \sigma$ can be chosen to be basic. First note that for any Hamiltonian function $f\in C^\infty(E)$
we have
\begin{equation*}
0=(\mbox{div}_\Phi X_f) \Phi = \pounds_{X_f}\Phi=\left (X_f(\tilde \sigma) + \mathcal M_{\nu\wedge 
\Omega}(f) \right ) \Phi.
\end{equation*}
Hence, $X_{\tilde \sigma}(f) = \mathcal M_{\nu\wedge 
\Omega}(f)$ for any function $f\in C^\infty (E)$ which implies
\begin{equation}
\label{E:AuxThm}
X_{\tilde \sigma} = \mathcal M_{\nu\wedge \Omega}.
\end{equation}
It follows from Proposition \ref{P:Basic-Vertical} that the vector field $X_{\tilde \sigma}$ is $\tau$-vertical.
Now let $0:Q\to E$ be the zero section of the vector bundle $\tau :E\to Q$. Consider the function
$\sigma:=\tilde \sigma \circ 0\in C^\infty(Q)$. Using the verticality of $X_{\tilde \sigma}$ together with the
 local expression for the Hamiltonian vector fields \eqref{E:Local-exp-Ham_VF} one obtains
 $X_{\tilde \sigma}\circ 0 = X_{ \sigma \circ \tau}\circ 0$. Therefore, evaluation of \eqref{E:AuxThm} along the
 zero section yields
 \begin{equation*}
X_{ \sigma\circ \tau}\circ 0=\mathcal M_{\nu\wedge \Omega}\circ 0.
\end{equation*}
Since both  $X_{ \sigma\circ \tau}$ and $M_{\nu\wedge \Omega}$ are vertical lifts of vector fields
in $Q$, the above identity implies  $X_{ \sigma\circ \tau}=\mathcal M_{\nu\wedge \Omega}$.
In other words, the almost Poisson structure is basic-unimodular. 

\end{proof}

The above theorem, together with the local expressions \eqref{E:Local-basic-Ham-VF}
and \eqref{E:Local-basic-Mod-VF}, give us the following local characterization of unimodularity.
\begin{corollary}
A linear almost Poisson structure on an orientable vector bundle $\tau:E\to Q$ is unimodular
if and only if there exists a smooth function $\sigma:Q\to \R$ such that for any system
of bundle coordinates $(q^i,p_I)$ we have
\begin{equation}
\label{E:Unimodulartiy-Local}
\rho_I^i\frac{\partial \sigma}{\partial q^i}+\frac{\partial \rho_I^i}{\partial q^i} +C_{IJ}^J=0, \qquad
\mbox{for all} \qquad I=1,\dots, n.
\end{equation}

\end{corollary}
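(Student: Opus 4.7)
The plan is essentially a bookkeeping exercise that combines Theorem \ref{T:Basic-Unimodularity} with the local formulas already derived in \eqref{E:Local-basic-Ham-VF} and \eqref{E:Local-basic-Mod-VF}. The key observation is that both the modular vector field of a basic volume and the Hamiltonian vector field of a basic function are $\tau$-vertical, so the intrinsic identity $\mathcal{M}_\Phi = X_{\sigma \circ \tau}$ can be read off componentwise along $\partial/\partial p_I$ once both sides are expressed in bundle coordinates.

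For the forward direction, I would start from unimodularity, invoke Theorem \ref{T:Basic-Unimodularity} to get basic-unimodularity, and thus obtain a basic volume $\Phi = \nu \wedge \Omega$ on $E$ and a function $\sigma \in C^\infty(Q)$ with $\mathcal{M}_\Phi = X_{\sigma \circ \tau}$. In a bundle chart with coordinates $(q^i,p_I)$ coming from a local frame $\{e_I\}$ of $\tau^*$, I would take the canonical local basic volume $\Phi_0 = dq^1 \wedge \cdots \wedge dq^m \wedge dp_1 \wedge \cdots \wedge dp_n$; the relation $\Phi = e^{h \circ \tau} \Phi_0$ for a smooth $h$ on $Q$ together with \eqref{E:Modular-VF-Diff-Measures} shows that $\mathcal{M}_{\Phi_0} = X_{(\sigma+h)\circ \tau}$, so after absorbing $h$ into the local representative of $\sigma$ we may assume $\mathcal{M}_{\Phi_0} = X_{\sigma \circ \tau}$. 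Equating \eqref{E:Local-basic-Ham-VF} and \eqref{E:Local-basic-Mod-VF} and comparing coefficients of $\partial/\partial p_I$ yields exactly \eqref{E:Unimodulartiy-Local}.

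For the converse, assume a function $\sigma \in C^\infty(Q)$ satisfies \eqref{E:Unimodulartiy-Local} in every bundle chart. In any such chart, \eqref{E:Local-basic-Ham-VF} and \eqref{E:Local-basic-Mod-VF} then give $\mathcal{M}_{\Phi_0} = X_{\sigma \circ \tau}$ with $\Phi_0$ the canonical local basic volume. Since this identity is intrinsic and holds on each chart of a covering, it holds globally: the structure is basic-unimodular, and therefore unimodular by Theorem \ref{T:Basic-Unimodularity}.

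The main subtle point, rather than any hard computation, is the global-versus-local bookkeeping: different bundle charts produce different canonical basic volumes $\Phi_0$, and these differ by basic multiplicative factors. Equation \eqref{E:Modular-VF-Diff-Measures} shows that such a rescaling shifts the relevant function by a basic function, which is exactly the flexibility needed to reconcile the global $\sigma$ provided by basic-unimodularity with its appearance in an arbitrary chart. Once this compatibility is recorded, the corollary follows by direct comparison of the two vertical vector fields in \eqref{E:Local-basic-Ham-VF} and \eqref{E:Local-basic-Mod-VF}.
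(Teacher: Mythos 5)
Your argument is correct and is exactly the route the paper intends: the corollary is stated there as an immediate consequence of Theorem \ref{T:Basic-Unimodularity} together with the local formulas \eqref{E:Local-basic-Ham-VF} and \eqref{E:Local-basic-Mod-VF}, with no further proof given, and your componentwise comparison of the two $\tau$-vertical fields is precisely that computation. The one point worth recording is that the identity $\mathcal{M}_{\Phi_0}=X_{\sigma\circ\tau}$ is not chart-independent (the coordinate volume $\Phi_0$ changes by a basic factor $e^{h\circ\tau}$, shifting the function to $\sigma+h$ via \eqref{E:Modular-VF-Diff-Measures}), so the clean reading of \eqref{E:Unimodulartiy-Local} is in charts adapted to a fixed global basic volume $\nu\wedge\Omega$ --- which is exactly what your ``absorbing $h$'' step and the paper's Remark \ref{basic-unimod-independent} amount to.
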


\subsection{Kinetic energy Hamiltonians and unimodularity}

Now suppose that the vector bundle $\tau:E\to Q$ is also endowed with a bundle metric
$\mathcal{G}:E\times_QE\to \R$. For each $q\in Q$, we have a scalar product $\mathcal{G}(q):E_q\times E_q\to \R$. In applications to nonholonomic mechanics this metric comes from the kinetic energy of the problem.

The bundle metric $\mathcal{G}$ defines a Hamiltonian function $H$ on $E$. Namely, the {\em kinetic
energy associated to $\mathcal{G}$} that is given by $H(\alpha)=\frac{1}{2}\mathcal{G}(q)(\alpha,\alpha)$
for $\alpha \in E_q$. In terms of our local bundle coordinates $(q^i,p_I)$ we have
\begin{equation}
\label{E:Ham-Kin-Energy}
H(q^i,p_I)=\frac{1}{2}\mathcal{G}^{JK}(q^i)p_Jp_K,
\end{equation}
where the $q$-dependent coefficients $\mathcal{G}^{JK}(q^i)$ are obtained from the local
basis $\{e^I\}$ of $\Gamma(\tau)$ as $\mathcal{G}^{JK}(q):=\mathcal{G}(q)(e^J(q),e^K(q))$. Note that
the condition that $\mathcal{G}$ is a bundle metric implies that the matrix $\mathcal{G}^{JK}(q)$
is symmetric and positive definite for all $q\in Q$. In particular, it is non-degenerate.

We are now ready to state and prove the main theoretical result of the paper that relates the 
existence of an invariant measure for the Hamiltonian vector field of  $H$ defined as above, and the unimodularity of the underlying linear almost Poisson
structure.

\begin{theorem}
\label{T:Main}
Let $\tau:E\to Q$ be a vector bundle equipped with a linear almost Poisson structure. Suppose that 
both the vector bundle and $Q$ are orientable. Let $\mathcal{G}$ be a bundle metric on $E$
and let $H:E\to \R$ be the kinetic energy Hamiltonian associated to   $\mathcal{G}$. Then the
following conditions are equivalent:
\begin{enumerate}
\item[(i)] The almost Poisson structure is unimodular.
\item[(ii)] There exists a volume form $\Phi$ on $E$ that is invariant under the flow of the Hamiltonian vector
field $X_H$. 
\end{enumerate}
\end{theorem}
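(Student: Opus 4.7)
The implication (i) $\Rightarrow$ (ii) is immediate from Theorem \ref{T:suff}, so the work lies in proving (ii) $\Rightarrow$ (i). Suppose $\Phi$ is a volume form on $E$ invariant under $X_H$. Using the orientability hypotheses, fix a basic volume $\nu \wedge \Omega$ built from a volume $\nu$ on $Q$ and a fiber volume $\Omega \in \Gamma(\Lambda^n \tau^*)$, and write $\Phi = e^{\tilde \sigma}\,\nu \wedge \Omega$ for some $\tilde \sigma \in C^\infty(E)$. Just as in the proof of Theorem \ref{T:Basic-Unimodularity}, invariance translates into the single Liouville identity
\[
X_H(\tilde \sigma) + \mathcal{M}_{\nu \wedge \Omega}(H) = 0
\]
on $E$. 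By Theorem \ref{T:Basic-Unimodularity}, it suffices to exhibit a function $\sigma \in C^\infty(Q)$ with $X_{\sigma \circ \tau} = \mathcal{M}_{\nu \wedge \Omega}$, which by \eqref{E:Local-basic-Ham-VF} and \eqref{E:Local-basic-Mod-VF} amounts to verifying the local condition \eqref{E:Unimodulartiy-Local}.

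The natural candidate is $\sigma := \tilde \sigma \circ 0$, where $0 \colon Q \to E$ is the zero section. The plan is to exploit the quadratic nature of $H$ to linearize the Liouville identity at the zero section. Since $H = \tfrac{1}{2}\mathcal{G}^{JK}(q)\,p_J p_K$, the partial derivatives $\partial H/\partial q^i$ are quadratic in $p$ while $\partial H/\partial p_I = \mathcal{G}^{IK}p_K$ is linear in $p$; in particular, $X_H$ vanishes along the zero section, whereas $\mathcal{M}_{\nu \wedge \Omega}(H)$ is already linear in $p$ because the modular vector field is $\tau$-vertical with $q$-dependent coefficients (Proposition \ref{P:Basic-Vertical}). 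Working in bundle coordinates adapted to $\nu$ and $\Omega$, differentiating the Liouville identity with respect to $p_K$ and evaluating at $p = 0$ kills every term involving fiber-derivatives of $\tilde \sigma$ or higher-order fiber-data, and reduces the equation to
\[
\Bigl( \rho_I^i \frac{\partial \sigma}{\partial q^i} + \frac{\partial \rho_I^i}{\partial q^i} + C_{IJ}^J \Bigr)\,\mathcal{G}^{IK} = 0 \qquad \text{for every } K.
\]

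The final step uses that $\mathcal{G}$ is positive definite and hence the matrix $(\mathcal{G}^{IK})$ is invertible, so the parenthesis must vanish for each $I$. This is precisely the local characterization \eqref{E:Unimodulartiy-Local} of unimodularity for $\sigma$, completing the proof. The main (albeit modest) obstacle is the passage from a PDE with unknown $\tilde \sigma$ defined on the whole of $E$ to a basic equation for a function on $Q$: this succeeds because two structural features conspire, namely the vanishing of $X_H$ along the zero section (forced by $H$ being quadratic) and the nondegeneracy of $\mathcal{G}$ (which allows us to separate the coefficients of independent momentum components).
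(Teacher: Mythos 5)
Your proof is correct and follows essentially the same route as the paper: write $\Phi=e^{\tilde\sigma}\,\nu\wedge\Omega$, expand the Liouville identity $X_H(\tilde\sigma)+\mathcal{M}_{\nu\wedge\Omega}(H)=0$ in bundle coordinates, differentiate once in the fiber variables, evaluate along the zero section (where $X_H$ and all the quadratic-in-$p$ terms vanish), and invoke the nondegeneracy of $\mathcal{G}^{IK}$ to extract the local unimodularity condition \eqref{E:Unimodulartiy-Local} for $\sigma=\tilde\sigma\circ 0$. This is precisely the paper's argument.
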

\begin{proof}

That (i) $\implies $(ii) follows from Theorem \ref{T:suff}. To show the converse fix a volume form
$\nu$ in $Q$ and a volume form $\Omega\in \Gamma(\Lambda^n\tau^*)$ 
  on the fibers of $E^*$. The volume
form $\Phi$ can be written as  $\Phi= e^{\tilde \sigma}\nu\wedge \Omega$ for a certain function $\tilde \sigma \in C^\infty (E)$. Choose local coordinates $(q^i)$ on $Q$ and a local basis $\{e_I\}$ of
$\Gamma(\tau^*)$
satisfying
\begin{equation*}
\nu=dq^1\wedge \dots \wedge dq^m, \qquad \Omega=e_1\wedge \dots \wedge e_n.
\end{equation*}
Denote by $(q^i,p_I)$ the corresponding system of fiber coordinates on $E$. We  then get
\begin{equation*}
\Phi=e^{\tilde \sigma}dq^1\wedge \dots \wedge dq^m\wedge dp_1\wedge \dots \wedge dp_n.
\end{equation*}
Considering the above expression for $\Phi$, together with the local expression \eqref{E:Local-exp-Ham_VF} for the Hamiltonian vector field  $X_H$, then the condition $\mbox{div}_\Phi(X_H)=0$
yields 
\begin{equation*}
\frac{\partial H}{\partial p_I} \left ( \rho_I^i \frac{\partial \tilde \sigma}{\partial q^i}
+\frac{\partial \rho_I^i}{\partial q^i} +C_{IJ}^J\right ) =\frac{\partial \tilde \sigma}{\partial p_I} 
\left (    \rho_I^i \frac{\partial H}{\partial q^i} +C_{IJ}^Kp_K\frac{\partial H}{\partial p_J}\right ).
\end{equation*}
Differentiating this expression with respect to $p_L$ gives
\begin{equation}
\label{E:Aux-Main-Theor-Thm}
\begin{split}
\frac{\partial ^2H}{\partial p_L\partial p_I} &\left ( \rho_I^i \frac{\partial \tilde \sigma}{\partial q^i}
+\frac{\partial \rho_I^i}{\partial q^i} +C_{IJ}^J\right )+\frac{\partial H}{\partial p_I}
 \left ( \rho_I^i \frac{\partial^2 \tilde \sigma}{\partial p_L\partial q^i}
\right ) \\ &=\frac{\partial^2 \tilde \sigma}{\partial p_L\partial p_I} 
\left (    \rho_I^i \frac{\partial H}{\partial q^i} +C_{IJ}^Kp_K\frac{\partial H}{\partial p_J}\right ) 
+\frac{\partial \tilde \sigma}{\partial p_I} 
\left (    \rho_I^i \frac{\partial^2 H}{\partial p_L \partial q^i} +C_{IJ}^L\frac{\partial H}{\partial p_J}
+C_{IJ}^Kp_K\frac{\partial^2 H}{\partial p_L\partial p_J} \right ).
\end{split}
\end{equation}
We now use our hypothesis that the Hamiltonian $H$ is of the form \eqref{E:Ham-Kin-Energy}.
Omitting the dependence on $q^i$ we have
\begin{equation*}
\frac{\partial ^2H}{\partial p_L\partial p_I} =\mathcal{G}^{LI}, \qquad \frac{\partial H}{\partial p_I}=\mathcal{G}^{IJ}p_J, \qquad \frac{\partial H}{\partial q^i}=\frac{1}{2}\frac{\partial \mathcal{G}^{IJ}}
{\partial q^i}p_Ip_J. 
\end{equation*}
Next we evaluate \eqref{E:Aux-Main-Theor-Thm} along the zero section $0:Q\to E$ and
use the above relations to obtain
\begin{equation*}
\mathcal{G}^{LI} \left ( \rho_I^i \left ( \frac{\partial \tilde \sigma}{\partial q^i}\circ 0\right )
+\frac{\partial \rho_I^i}{\partial q^i} +C_{IJ}^J\right )=0.
\end{equation*}
Consider the smooth function $\sigma$ on $Q$ defined as $\sigma=\tilde \sigma \circ 0$.
Then $\displaystyle \frac{\partial \tilde \sigma}{\partial q^i}\circ 0=\frac{\partial  \sigma}{\partial q^i}$. Therefore,
using the non-degeneracy of $\mathcal{G}^{LI}$, 
we obtain
\begin{equation*}
 \rho_I^i \frac{\partial  \sigma}{\partial q^i}
+\frac{\partial \rho_I^i}{\partial q^i} +C_{IJ}^J=0 \qquad \mbox{for all $I=1,\dots, n$},
\end{equation*}
which is the local characterization for unimodularity given in \eqref{E:Unimodulartiy-Local}.
\end{proof}

\section{Invariant measures for nonholonomic mechanical systems with symmetry}
\label{S:Nonho-Syst}

In this section we will layout the almost Poisson reduction of nonholonomic mechanical
systems with symmetry. 
We are particularly careful in the development of the associated geometry, to formulate in Theorem \ref{T:Intrinsic-Cond_Unimd} the necessary and sufficient conditions for the unimodularity of the reduced almost Poisson structure.
This is the key step for our study of preservation of volumes for nonholonomic mechanical
systems. Our result allows us to recover the existing results %contained in previous works 
in the area \cite{CaCoLeMa,Jo,Ko,ZeBo} in a unified framework, and to further extend them
to the case where the dimension assumption does not hold.

\subsection{Reduction of nonholonomic mechanical systems with symmetry}
\label{Renomesysy}

We begin by introducing our geometric definition of a symmetric nonholonomic system.

\begin{definition}
Let $p:Q\to \widehat Q:=Q/G$ be a principal $G$-bundle. A symmetric nonholonomic system
is a couple $(\mathcal{G},D)$, where $\mathcal{G}$ is a $G$-invariant Riemannian metric
on $Q$ and $D$ is a $G$-invariant constraint distribution on $Q$.
\end{definition}

Notice that the theory for a general nonholonomic system without symmetries is recovered by supposing that the 
symmetry group is trivial $G=\{e\}$. We will
 describe the dynamics  of the system for a general symmetry group $G$ in terms of linear almost Poisson structures.
As mentioned before,  this formulation goes back to \cite{SchMa} (see also \cite{CaLeMa,IbLeMaMa,KoMa2,Marle}).

\medskip
{\bf The generalized nonholonomic connection}

For our purposes it will be convenient to have a clearer understanding of the structure
of the constraint space $D$ and its relation with the infinitesimal symmetries defined by 
the group action.
Denote by $\mathcal{V}p$ the vertical subbundle of $p$. That is,
\begin{equation*}
\mathcal{V}p(q):=T_q \mbox{Orb}_G(q),
\end{equation*}
where $\mbox{Orb}_G(q)$ denotes the orbit  of $G$ through $q\in Q$.
We will assume that the intersection 
\begin{equation*}
\mathcal{V}^Dp := \mathcal{V}p \cap D
\end{equation*}
has constant rank along $Q$. % so that $V^Dp$ is a a vector subbundle of $TQ$. 
In addition to this hypothesis, a number of important works  (e.g. \cite{BKMM}, \cite{ZeBo}) assume that 
the {\em dimension assumption} holds, namely that
\begin{equation}
\label{E:dim-assum}
 \mathcal{V}p + D = TQ.
\end{equation}
In our approach we will not assume that the latter condition holds in general.
 
Let $ {\mathcal H}:= (\mathcal{V}^Dp)^\perp \cap D$ be the orthogonal complement of $\mathcal{V}^Dp$ in $D$. Under
our assumptions, $ {\mathcal H}$ is a $G$-invariant subbbundle of $D$ that allows us to decompose
\begin{equation}
\label{E:decomp-D}
D=\mathcal{V}^Dp\oplus{\mathcal H}.
\end{equation}

 We define a principal connection, {\em the generalized nonholonomic connection},  whose horizontal distribution is given by
\begin{equation*}
\label{E:HorDist}
\mathfrak{H}:={\mathcal H} \oplus(\mathcal{V}p \oplus {\mathcal H})^{\perp}. 
\end{equation*}
In fact, since $TQ = (\mathcal{V}p \oplus {\mathcal H}) \oplus (\mathcal{V}p \oplus {\mathcal H})^{\perp}$, it follows that
%\[
%TQ = \mathcal{V}p \oplus \left ( {\mathcal H} \oplus (\mathcal{V}p \oplus {\mathcal H})^{\perp} \right ) ,
%\]
\[
TQ = \mathcal{V}p \oplus \mathfrak{H},
\]
and, using the $G$-invariance of the bundle metric $\mathcal{G}$, one also  checks that the horizontal
 distribution % ${\mathcal H} \oplus (\mathcal{V}p \oplus {\mathcal H})^{\perp}$
 $ \mathfrak{H}$ is $G$-invariant.

If  the dimension assumption \eqref{E:dim-assum} holds, then the horizontal distribution 
%\eqref{E:HorDist} is
 $\mathfrak{H}=\mathcal{H}$, and the principal connection defined above
coincides with the  {\em the nonholonomic connection} introduced in
\cite{BKMM}.

By $G$-invariance of
the vector bundles $\mathcal{V}p$ and $ \mathcal{H}$,  the  decomposition \eqref{E:decomp-D}
drops to the quotient
\begin{equation}
\label{E:Decomp-D-Hat}
\widehat D:=D/G=\mathcal{V}^Dp/G\oplus \mathcal{H}/G.
\end{equation}

The space $\mathcal{V}^Dp/G$ is a vector subbundle (over $\widehat Q$) of $\widehat D$ that we
denote by $\tau_{\mathcal{V}^Dp/G}:\mathcal{V}^Dp/G\to \widehat Q$. The space of sections
$\Gamma(\tau_{\mathcal{V}^Dp/G})$ is naturally identified with the $G$-invariant vector fields $Z$
on $Q$ that lie on the intersection $\mathcal{V}^Dp=D\cap \mathcal{V}p$. As a consequence, 
for any $Z\in \Gamma(\tau_{\mathcal{V}^Dp/G})$, the projection $\widehat Z$ onto $\widehat Q$
via $p$ vanishes. This fact will be used in the proof of Theorem \ref{T:Intrinsic-Cond_Unimd} below.

Now consider the   distribution 
$\widehat{\mathcal H}$ on $\widehat{Q} $ whose characteristic space at the point $\hat{q} \in \widehat{Q}$ is
\[
\widehat{\mathcal H}(\hat{q}) = (T_qp)({\mathcal H}(q)), \; \; \; \mbox { with } q\in Q \mbox{ and } p(q) = \hat{q}.
\]
The above definition of  $\widehat{\mathcal H}$ makes sense since ${\mathcal H}$ is $G$-invariant. Notice that if the dimension assumption \eqref{E:dim-assum} holds, then
$\widehat{\mathcal H}(\hat{q}) =T_{\hat q} \widehat Q$.

On the other hand, the linear map $(T_qp)_{|\mathcal H(q)}: {\mathcal H}(q) \to \widehat{\mathcal H}(\hat{q})$ is an isomorphism of real vector spaces.
In fact, the horizontal lift $^{H}_{q}: \widehat{\mathcal H}(\hat{q}) \to {\mathcal H}(q)$ associated to the generalized nonholonomic connection is just the inverse morphism of $(T_qp)_{|\mathcal H(q)}$. Thus, the horizontal lift of a vector field $\widehat Y$ on $\widehat Q$ that lies on $\widehat{\mathcal H}$ is a $G$-invariant
vector field $\widehat Y^H$ on $Q$ which belongs to the distribution ${\mathcal H}$. That
is, $ \widehat Y^H\in \Gamma (\tau_{\mathcal H/G})$, where 
$\tau_{\mathcal H/G}:\mathcal H/G \to \widehat Q$ is the vector bundle projection.

Therefore, the vector bundles ${\mathcal H}/G \to \widehat{Q} = Q/G$ and $\widehat{\mathcal H} \to \widehat{Q} = Q/G$ are isomorphic and the tangent map
to $p$, $Tp$, induces a vector bundle isomorphism between them.  

\medskip
\pagebreak

{\bf Construction of the reduced almost Poisson bracket}

It is well-known that the space of orbits $T^*Q/G$ of the cotangent lifted action of $G$ to $T^*Q$
is a vector bundle over $\widehat Q$ that admits a linear Poisson structure $\{\cdot , \cdot \}_{T^*Q/G}$,
see e.g. \cite{Ortega-Ratiu}. We shall denote the natural bundle projection by $\hat \tau ^*:
T^*Q/G\to \widehat Q$.

Note that the space of sections of the dual bundle $\hat \tau:TQ/G\to \widehat Q$ may be identified
with the space of $G$-invariant vector fields on $Q$. Let $X$ and $Y$ be any such vector fields and
$\widehat f, \widehat k\in C^\infty (\widehat Q)$. Then the bracket $\{\cdot , \cdot \}_{T^*Q/G}$ is 
characterized by the relations
\begin{equation}
\label{E:bracket-holonomic}
\{ X_\ell , Y_\ell \}_{T^*Q/G}=-\left ( [X,Y] \right )_\ell\, , \qquad 
\{\widehat f \circ \hat \tau^* , X_\ell \}_{T^*Q/G}=\widehat X(\widehat f)\circ \hat \tau^*, \qquad
\{\widehat f \circ \hat \tau^* , \widehat k \circ \hat \tau^* \}_{T^*Q/G}=0,
\end{equation}
where $\widehat X$ denotes the projection of the $G$-invariant vector field $X$ onto $\widehat Q$
via $p$.
In the case where the Lie group $G$ is trivial, the above formulas characterize the canonical Poisson 
structure in $T^*Q$.

Now denote by $\iota^*:T^*Q\to D^*$ and $\mathcal{P}^*:D^*\to T^*Q$ the dual morphisms of the
canonical inclusion $\iota:D\hookrightarrow TQ$, and the $\mathcal{G}$-orthogonal projector
$\mathcal{P}:TQ\to D$. Since $D$ and ${\mathcal G}$ are $G$-invariant, the cotangent lifted action to $T^*Q$ induces
a $G$-action on $D^*$ and the morphisms $\iota^*$ and $\mathcal{P}^*$ are $G$-equivariant.
Thus, they induce the corresponding epimorphism and monomorphism of vector bundles
\begin{equation*}
\widehat \iota^*:T^*Q/G\to \widehat D^*, \qquad \mbox{and}\qquad
\widehat {\mathcal{P}}^*:\widehat D^*\to T^*Q/G
\end{equation*}
where $\widehat D^*:=D^*/G$.

The vector bundle $\tau_{\widehat D^*}:\widehat D^*\to \widehat Q$ possesses a linear almost Poisson structure. 
For functions $\widehat \varphi, \widehat \psi \in C^\infty(\widehat D^*)$ it is given
by
\begin{equation}\label{reduced-lin-almost-Poisson}
\{\widehat \varphi, \widehat \psi \}_{\widehat D^*}=\{ \widehat \varphi \circ \widehat \iota^* , 
\widehat \psi \circ \widehat \iota^* \}_{T^*Q/G}\circ \widehat {\mathcal{P}}^*.
\end{equation}
The bracket $\{\cdot, \cdot\}_{\widehat D^*}$  is the reduced linear almost Poisson structure that describes the reduced nonholonomic dynamics.

We remark that the space of sections of
the dual vector bundle $\tau_{\widehat{D}}:\widehat D\to \widehat Q$ is naturally identified with 
the set of $G$-invariant vector fields on $Q$ that are sections of $D$. Moreover, from (\ref{E:bracket-holonomic}) 
and (\ref{reduced-lin-almost-Poisson}), 
we deduce that the reduced bracket 
$\{\cdot , \cdot \}_{\widehat D^*}$ is characterized by the relations
 \begin{equation}
 \label{E:Nonho-bracket}
\{ X_\ell , Y_\ell \}_{\widehat D^*}=-\left (\mathcal P [X,Y] \right )_\ell\, , \qquad 
\{\widehat f \circ  \tau_{\widehat D^*} , X_\ell \}_{\widehat D^*}=\widehat X(\widehat f)\circ  \tau_{\widehat D^*}, \qquad
\{\widehat f \circ  \tau_{\widehat D^*} , \widehat k \circ  \tau_{\widehat D^*} \}_{\widehat D^*}=0,
\end{equation}
where $X,Y\in \Gamma(\tau_{\widehat D})$, and $\widehat f, \widehat k\in C^\infty (\widehat Q)$
while, as before, $\widehat X$ denotes the projection of $X$ onto $\widehat Q$ via $p$.
The main difference between  the above formulae and \eqref{E:bracket-holonomic}  is
 the appearance of
the projector $\mathcal P$ in the first term.

\subsection{Unimodularity of the reduced linear almost Poisson structure}

The  constructions in the previous section will be useful to study the unimodularity of the reduced almost 
Poisson  structure $\{\cdot , \cdot \}_{\widehat D^*}$. Recall from \eqref{E:decomp-D} that the 
reduced space $\widehat D=D/G$ 
 can be decomposed as $$\widehat D=\mathcal{V}p/G\oplus \mathcal{H}/G.$$ 

Our study of the unimodularity of the almost Poisson structure $\{\cdot , \cdot \}_{\widehat D^*}$ will
be performed by separately considering linear functions induced by sections of the spaces
involved in the above decomposition.

Assume that  both the manifold $\widehat Q$ and the vector bundle $\widehat D^*$
are orientable. Then we can choose a basic volume form $\Phi$ on $\widehat D^*$.

Denote by $\mathcal M_\Phi$ the modular vector field of $\widehat D^*$ with respect to the basic
volume form $\Phi$. We define the family $\mathcal{R}$ of 1-forms $\widehat \omega_\Phi$
on $\widehat Q$ that satisfy
\begin{equation}
\label{E:Def-1-form-om}
\widehat \omega_\Phi(\widehat Y)\circ \tau_{\widehat D^*} =\mathcal M_\Phi((\widehat Y^H)_\ell),
\end{equation}
where $\widehat{Y}$ is any section of $\widehat{ \mathcal{H}}$ and $(\widehat Y^H)_\ell$ denotes the linear
function on $\widehat D^*$ induced by the section $\widehat Y^H\in \Gamma( \tau_{\widehat D})$.
Note that by Proposition \ref{P:Basic-Vertical}, the vector field $\mathcal M_\Phi$ is $\tau_{\widehat D^*} $-vertical. This implies that $\mathcal M_\Phi((\widehat Y^H)_\ell)$ is a basic function with respect
to $\tau_{\widehat D^*} $ and the relation \eqref{E:Def-1-form-om} is consistent.

If the dimension assumption holds, then $\widehat{ \mathcal{H}}=T\widehat Q$ and
$\widehat \omega_\Phi$ is uniquely defined. In other cases, $\mathcal{R}$
is the space of sections of an affine subbundle of $\widehat{TQ}$ with rank 
\[ 
\dim Q - \mbox{rank} (D + \mathcal{V}p).
\]
We are now ready to give necessary and sufficient conditions for 
 the  unimodularity of $\widehat D$ in terms of the family $\mathcal{R}$ of 
 $1$-forms $\widehat{\omega}_{\Phi}$. 

\begin{theorem}
\label{T:Intrinsic-Cond_Unimd}
Let $\Phi$ be any basic volume form on $\widehat D^*$. The linear almost Poisson structure $\{\cdot , \cdot \}_{\widehat D^*}$ on $\widehat D^*$ is 
unimodular if and only if the following conditions hold:
\begin{enumerate}
\item[(i)] For every section $Z$ of $\mathcal{V}^Dp/G$ we have 
\begin{equation*}
\mbox{\em div}_\Phi (X_{Z_\ell})=0,
\end{equation*}
 where $X_{Z_\ell}$ is the Hamiltonian vector field on $\widehat D^*$ associated with
 the linear function $Z_\ell:\widehat D^* \to \R$ induced by the section $Z$.
 \item[(ii)] There exists a 1-form $\widehat \omega_\Phi$ on $\widehat Q$ belonging to the
 family $\mathcal{R}$ defined by \eqref{E:Def-1-form-om} such that
 \begin{equation*}
 \widehat \omega_\Phi=-d\widehat \sigma,
\end{equation*}
for a certain smooth function $\widehat \sigma$ on $\widehat Q$.
\end{enumerate}
\end{theorem}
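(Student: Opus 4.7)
The plan is to invoke Theorem \ref{T:Basic-Unimodularity}, together with Remark \ref{basic-unimod-independent}, so that unimodularity of $\{\cdot,\cdot\}_{\widehat D^*}$ becomes equivalent to the existence of $\widehat\sigma\in C^\infty(\widehat Q)$ with $\mathcal{M}_\Phi = X_{\widehat\sigma\circ \tau_{\widehat D^*}}$ for the specific basic volume form $\Phi$ fixed in the statement. Both vector fields in this identity are $\tau_{\widehat D^*}$-vertical, the left one by Proposition \ref{P:Basic-Vertical} and the right one by formula \eqref{E:Local-basic-Ham-VF}. Since vertical tangent vectors on a vector bundle are detected by the duality pairing encoded in linear functions, two vertical vector fields on $\widehat D^*$ coincide if and only if they agree on every linear function $X_\ell$ with $X\in \Gamma(\tau_{\widehat D})$. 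Using the splitting \eqref{E:Decomp-D-Hat} together with the horizontal-lift isomorphism $\mathcal{H}/G\cong \widehat{\mathcal{H}}$, any such $X$ decomposes as $Z+\widehat Y^H$ with $Z\in \Gamma(\tau_{\mathcal{V}^Dp/G})$ and $\widehat Y\in \Gamma(\widehat{\mathcal{H}})$, and $X_\ell = Z_\ell+(\widehat Y^H)_\ell$; the task therefore reduces to comparing the two vector fields on the families $\{Z_\ell\}$ and $\{(\widehat Y^H)_\ell\}$.

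The key computation is short: using the characterization \eqref{E:Nonho-bracket} of the reduced bracket and the observation noted right after \eqref{E:Decomp-D-Hat} that every section of $\mathcal{V}^Dp/G$ projects trivially to $\widehat Q$ via $p$, I obtain
\begin{equation*}
X_{\widehat\sigma\circ \tau_{\widehat D^*}}(Z_\ell)=-\widehat Z(\widehat\sigma)\circ \tau_{\widehat D^*}=0 \qquand X_{\widehat\sigma\circ \tau_{\widehat D^*}}((\widehat Y^H)_\ell)=\bigl(-d\widehat\sigma(\widehat Y)\bigr)\circ \tau_{\widehat D^*}.
\end{equation*}
Substituting into the defining relation \eqref{E:Def-1-form-om} of the family $\mathcal{R}$, the equality $\mathcal{M}_\Phi = X_{\widehat\sigma\circ \tau_{\widehat D^*}}$ becomes equivalent to the pair of conditions $\mathcal{M}_\Phi(Z_\ell)=\mbox{div}_\Phi(X_{Z_\ell})=0$ for every $Z\in \Gamma(\tau_{\mathcal{V}^Dp/G})$, which is exactly (i), and $-d\widehat\sigma$ agrees with $\widehat\omega_\Phi$ on $\widehat{\mathcal{H}}$, i.e.\ $-d\widehat\sigma\in \mathcal{R}$, which is exactly (ii). Both directions of the theorem follow.

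The main place where I would be most careful is the interpretation of (ii) in the absence of the dimension assumption. In that case $\widehat{\mathcal{H}}$ is a proper subbundle of $T\widehat Q$ and $\widehat\omega_\Phi$ is determined only up to a 1-form vanishing on $\widehat{\mathcal{H}}$, so (ii) asserts the existence of an exact representative inside an affine family of 1-forms rather than a single equation $\widehat\omega_\Phi=-d\widehat\sigma$ on all of $T^*\widehat Q$. This is precisely the right condition because the Hamiltonian vector field $X_{\widehat\sigma\circ \tau_{\widehat D^*}}$ only detects the restriction $d\widehat\sigma|_{\widehat{\mathcal{H}}}$, as the computation above makes transparent.
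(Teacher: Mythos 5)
Your proposal is correct and follows essentially the same route as the paper: reduce unimodularity to the single equation $\mathcal{M}_\Phi = X_{\widehat\sigma\circ \tau_{\widehat D^*}}$ via Theorem \ref{T:Basic-Unimodularity} and Remark \ref{basic-unimod-independent}, then test this equality of vertical vector fields against the linear functions $Z_\ell$ and $(\widehat Y^H)_\ell$ adapted to the splitting \eqref{E:Decomp-D-Hat}, using \eqref{E:Nonho-bracket} and the fact that sections of $\mathcal{V}^Dp/G$ project to zero. Your argument that two vertical vector fields coincide iff they agree on all linear functions makes explicit the sufficiency direction, which the paper only asserts as ``a consequence of the decomposition,'' and your closing remark on the meaning of (ii) without the dimension assumption is accurate.
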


\begin{proof}
Suppose that the almost Poisson structure is unimodular. From Remark \ref{basic-unimod-independent} 
and Theorem \ref{T:Basic-Unimodularity}, it follows that there exists $\widehat \sigma \in C^\infty(\widehat Q)$ such that
\begin{equation}
\label{E:Aux-Thm-Nonho1}
\mathcal{M}_\Phi=X_{\widehat \sigma\circ  \tau_{\widehat D^*} }.
\end{equation}
Let $Z$ as in item (i) of the statement of the theorem. Then $Z$ is a $G$-invariant vector field
on $Q$ that lies on the intersection $\mathcal{V}^Dp=\mathcal{V}p\cap D$.
Using the definition of the bracket on $\widehat D^*$ given in \eqref{E:Nonho-bracket} we find
\begin{equation*}
X_{\widehat \sigma\circ  \tau_{\widehat D^*}}(Z_\ell)=\{Z_\ell , \widehat \sigma\circ  \tau_{\widehat D^*}
\}_{\widehat D^*}=-\widehat Z(\widehat \sigma) \circ \tau_{\widehat D^*},
\end{equation*}
where $\widehat Z$ is the projection (via $p$) of the vector field $Z$. Thus, since 
$Z$ is $p$-vertical, we have $X_{\widehat \sigma\circ  \tau_{\widehat D^*}}(Z_\ell)=0$.
Therefore, as a consequence of \eqref{E:Aux-Thm-Nonho1} we obtain that the condition
(i) in the theorem is necessary for unimodularity.

Similarly, if $\widehat Y$ is a vector field on $\widehat Q$ that belongs to the distribution
$\widehat{\mathcal H}$ then, since the 1-forms in $\mathcal{R}$ satisfy \eqref{E:Def-1-form-om},
from \eqref{E:Aux-Thm-Nonho1} we obtain
\begin{equation*}
\widehat \omega_\Phi(\widehat Y)\circ \tau_{\widehat D^*} = \mathcal{M}_\Phi ((\widehat Y^H)_\ell)
=X_{\widehat \sigma\circ  \tau_{\widehat D^*}}((\widehat Y^H)_\ell) =\{(\widehat Y^H)_\ell, \widehat \sigma\circ  \tau_{\widehat D^*} \}_{\widehat D^*}=-\widehat Y(\widehat \sigma)\circ \tau_{\widehat D^*} 
\end{equation*}
where we have used again the definition of the bracket on $\widehat D^*$ given in \eqref{E:Nonho-bracket}. The last equality implies that $\widehat \omega_\Phi(\widehat Y)=-\widehat Y(\widehat \sigma)=-d\widehat \sigma(\widehat Y)$.
This shows that item (ii) in the theorem is also a necessary condition for unimodularity.
The sufficiency of these conditions is a consequence of the decomposition \eqref{E:Decomp-D-Hat}.
\end{proof}

\medskip
{\bf Local expressions for unimodularity}

We are interested in obtaining a local version of the above theorem to apply it to concrete examples.
We will use the following notation: $(\hat q^{\hat \iota})$ are local coordinates  on $\widehat Q$,
whereas $(p_a, p_\alpha)$ are linear coordinates on the fibers of $\widehat D^*$ induced by the sections
$\{ s_{I}\} = \{ Z_a, \widehat Y_\alpha^H \}$ that are a $G$-invariant basis of $D$ adapted to the decomposition $D = \mathcal{V}^Dp \oplus
\mathcal{H}$. So,  $\{ s_{I}\} = \{ Z_a, \widehat Y_\alpha^H \}$ induces a basis of $\widehat D$ 
adapted to the decomposition \eqref{E:Decomp-D-Hat}:
\begin{equation*}
\widehat D = \mathcal{V}^Dp/G\oplus \mathcal{H}/G.
\end{equation*}
Here, $\widehat Y_\alpha = \displaystyle \widehat Y_\alpha^{\hat \iota}\frac{\partial}{\partial {\hat q}^{\hat \iota}}$ 
is a local basis of sections of $\widehat{\mathcal{H}}$.  Note the difference between the Latin lower-case indices $a, b, \dots $ and the 
Greek lower-case indices $\alpha, \beta, \dots $. We also remark that the upper-case indices $I, J, \dots $ run over the joint range of $a, b, \dots $ and $\alpha, \beta, \dots $.

Consider the basic volume form $\Phi=\nu \wedge \Lambda$ on $\widehat D^*$ and assume that with our choice of coordinates we have
\begin{equation*}
\nu=\bigwedge_{\hat \iota}d{\hat q}^{\hat \iota},\qquad \Lambda=\bigwedge_{a}Z_a\bigwedge_{\alpha}\widehat Y^H_{\alpha} \ .
\end{equation*}
Then 
\begin{equation}
\label{E:Mesure-local}
\Phi= \bigwedge_{\hat \iota}d{\hat q}^{\hat \iota}\bigwedge_{a}dp_a\bigwedge_{\alpha}dp_\alpha \, .
\end{equation}
On the other hand, using (\ref{E:Local-exp-Ham_VF}) and the fact that $(Z_a)_\ell =p_a$,  $(\widehat Y^H_\alpha )_\ell=p_\alpha$, $\rho^{\hat{\iota}}_a = 0$ and  $\rho^{\hat{\iota}}_\alpha = \widehat{Y}_\alpha^{\hat{\iota}}$, we get
the following expressions for their Hamiltonian vector fields
\begin{eqnarray}
\label{E:X_Za-local}
X_{p_{a}}&=&\left ( C_{ab}^I\,p_I\right )\frac{\partial }{\partial p_b}+\left ( C_{a\alpha}^I\,p_I\right )\frac{\partial }{\partial p_\alpha}\, ,\\
\label{E:X_Yalpha-local}
X_{p_{\alpha}}&=&\widehat Y_\alpha^{\hat \iota}\frac{\partial}{\partial {\hat q}^{\hat \iota}}
+\left ( C_{\alpha a}^I\,p_I\right )\frac{\partial }{\partial p_a}+\left ( C_{\alpha \beta}^I\,p_I\right )\frac{\partial }{\partial p_\beta}\, .
\end{eqnarray}
Note that, using (\ref{E:Nonho-bracket}), we deduce that the local functions $C_{IJ}^K$ on $\widehat Q$ are determined by the equations $\mathcal P [s_I, s_J] = C_{IJ}^K s_K$, where $[\cdot, \cdot]$ is the standard Lie bracket of vector fields on $Q$.

Thus, from  \eqref{E:Mesure-local} and \eqref{E:X_Za-local}, we immediately get
the following local version of item $(i)$ in Theorem \ref{T:Intrinsic-Cond_Unimd}:
\begin{equation}
\label{E:cond1}
C_{aI}^I=C_{ab}^b+C_{a\alpha}^\alpha=0 \qquad \mbox{for all $a$.}
\end{equation}
Similarly, from \eqref{E:Def-1-form-om},  \eqref{E:Mesure-local} and \eqref{E:X_Yalpha-local},
it follows that 
\begin{equation}
\label{E:def-omega-local}
\widehat \omega_\Phi(\widehat Y_\alpha)=\frac{\partial \widehat Y_\alpha^{\hat \iota}}{\partial
\hat q^{\hat i}}+C_{\alpha a}^a+C_{\alpha \beta}^\beta.
\end{equation}
Therefore, we conclude that condition
$(ii)$ in Theorem \ref{T:Intrinsic-Cond_Unimd} is equivalent to the following equations
for the unknown $\widehat \sigma \in C^\infty (\widehat Q)$:
\begin{equation}
\label{E:cond2}
-\widehat Y_\alpha(\widehat \sigma)= \frac{\partial \widehat Y_\alpha^{\hat \iota}}{\partial \hat q^{\hat \iota}}+C_{\alpha b}^b+C_{\alpha \beta}^\beta \qquad \mbox{for all $\alpha$}.
\end{equation}

\subsection{Invariant measures for the reduced nonholonomic system}

We now state our results looking ahead at measure preservation for nonholonomic
systems.

\begin{corollary}
\label{C:Nonho-Invariant-Measure}
The reduced nonholonomic dynamics preserve a volume form if and only if conditions (i)
and (ii) in Theorem \ref{T:Intrinsic-Cond_Unimd} hold for a basic volume form $\Phi$ on
$\widehat D^*$ and a real $C^\infty$-function $\widehat \sigma$ on $\widehat Q$. Furthermore,
if these conditions hold then the dynamics preserves the basic volume form $\Phi'$ on $\widehat D^*$ given by  $\Phi'=\exp({\widehat \sigma\circ \tau_{\widehat D^*}})\,\Phi$.
\end{corollary}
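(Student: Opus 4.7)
The plan is to reduce this corollary to a direct combination of Theorem \ref{T:Main}, Theorem \ref{T:Intrinsic-Cond_Unimd} and the transformation formula \eqref{E:Modular-VF-Diff-Measures} for the modular vector field under a change of basic volume. Note that since we are discussing volume forms on $\widehat{D}^*$, both $\widehat{Q}$ and $\widehat{D}^*$ are implicitly assumed orientable, so the hypotheses of the theorems quoted below are satisfied.

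First I would identify the reduced nonholonomic vector field on $\widehat{D}^*$ with the Hamiltonian vector field $X_{\widehat{H}}$ of a kinetic energy Hamiltonian $\widehat{H}:\widehat{D}^*\to\R$ with respect to the reduced linear almost Poisson bracket $\{\cdot,\cdot\}_{\widehat{D}^*}$ of \eqref{reduced-lin-almost-Poisson}. Indeed, the $G$-invariant Riemannian metric $\mathcal{G}$ restricts to a $G$-invariant, positive-definite bundle metric on $D\to Q$, which descends to a bundle metric on $\widehat{D}\to \widehat{Q}$, whose associated quadratic Hamiltonian on $\widehat{D}^*$ is exactly the reduced kinetic energy $\widehat{H}$ of the symmetric nonholonomic system described in Section \ref{Renomesysy}.

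Once this identification is made, Theorem \ref{T:Main}, applied to the vector bundle $\tau_{\widehat{D}^*}:\widehat{D}^*\to\widehat{Q}$ with its linear almost Poisson structure and the kinetic energy Hamiltonian $\widehat{H}$, tells us that $X_{\widehat{H}}$ preserves a volume form if and only if $\{\cdot,\cdot\}_{\widehat{D}^*}$ is unimodular. Theorem \ref{T:Intrinsic-Cond_Unimd} then translates unimodularity of $\{\cdot,\cdot\}_{\widehat{D}^*}$ into the pair of conditions (i) and (ii) for some basic volume form $\Phi$ on $\widehat{D}^*$ and some $\widehat{\sigma}\in C^\infty(\widehat{Q})$. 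Concatenating these two equivalences gives the stated ``if and only if'' part.

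For the explicit form of the invariant measure in the ``furthermore'' part, I would retrace the sufficiency direction of Theorem \ref{T:Intrinsic-Cond_Unimd}: when (i) and (ii) hold, the decomposition \eqref{E:Decomp-D-Hat} together with the computations \eqref{E:cond1}--\eqref{E:cond2} of the modular vector field on generators yields $\mathcal{M}_\Phi = X_{\widehat{\sigma}\circ\tau_{\widehat{D}^*}}$. Setting $\Phi':=e^{\widehat{\sigma}\circ\tau_{\widehat{D}^*}}\Phi$, formula \eqref{E:Modular-VF-Diff-Measures} gives
\[
\mathcal{M}_{\Phi'} \;=\; \mathcal{M}_\Phi - X_{\widehat{\sigma}\circ\tau_{\widehat{D}^*}} \;=\; 0,
\]
which, exactly as in the proof of Theorem \ref{T:suff}, means that every Hamiltonian vector field — and in particular $X_{\widehat{H}}$ — preserves $\Phi'$. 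The only serious content in this argument is concentrated in the nontrivial direction of Theorem \ref{T:Main} (invariant volume $\Rightarrow$ unimodularity); everything else is bookkeeping about basic volume forms and the reduction step, so I expect no additional obstacle beyond what is already settled in the earlier sections.
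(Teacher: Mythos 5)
Your proposal is correct and follows essentially the same route as the paper: the equivalence is obtained by concatenating Theorem \ref{T:Main} (applied to the reduced kinetic energy Hamiltonian on $\widehat{D}^*$) with Theorem \ref{T:Intrinsic-Cond_Unimd}, and the ``furthermore'' part comes from observing that conditions (i) and (ii) force $\mathcal{M}_\Phi = X_{\widehat{\sigma}\circ\tau_{\widehat{D}^*}}$, whence $\mathcal{M}_{\Phi'}=0$ by \eqref{E:Modular-VF-Diff-Measures}. The extra detail you supply on identifying the reduced dynamics with $X_{\widehat{H}}$ is exactly the bookkeeping the paper carries out in Section \ref{Renomesysy}.
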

\begin{proof}
The first part follows from Theorems \ref{T:Main} and  \ref{T:Intrinsic-Cond_Unimd}.
On the other hand, if conditions (i) and (ii) in Theorem \ref{T:Intrinsic-Cond_Unimd} hold then the modular vector field $\mathcal{M}_\Phi$ in $\widehat D^*$ is just the Hamiltonian vector 
field of the basic function $\widehat \sigma \circ \tau_{\widehat D^*}$. Thus, using 
\eqref{E:Modular-VF-Diff-Measures}, we deduce that $\mathcal{M}_{\Phi'}=0$, which proves the
second part.
\end{proof}

We will now specialize our discussion to different reduction scenarios.
This will allow us to express  conditions $(i)$ and $(ii)$ in Theorem \ref{T:Intrinsic-Cond_Unimd} 
in simpler terms that are useful in applications. In doing this
 we shall recover several results existing in the literature in a unified manner \cite{CaCoLeMa,Jo, Ko, ZeBo}.

\medskip
{\bf Invariant measures under the dimension assumption}

Suppose that the dimension assumption \eqref{E:dim-assum} holds, that is,
\[
\mathcal{V}p + D = TQ.
\]
Then the horizontal space $\mathfrak{H}$ of the nonholonomic connection equals
$\mathcal{H}$, and the distribution    $\widehat{\mathcal H} = T\widehat{Q}$.
As a direct consequence of Corollary \ref{C:Nonho-Invariant-Measure}
 we recover the
characterization of systems with an invariant measure given in \cite{ZeBo} in a local form.
Here, we present the intrinsic version:
\begin{corollary}\label{Toni-Dmitry}\cite{ZeBo}
The reduced nonholonomic dynamics preserves a volume form if and only if there exists
a basic volume form $\Phi$ on $\widehat D^*$ such that the following   conditions hold:
\begin{enumerate}
 \item
 For every section  $ Z$ of $ \mathcal{V}^Dp/G$ we have
 \[
\mbox{\em div}_{\Phi}(X_{Z_\ell}) = 0. 
 \]
 \item
 The $1$-form $\widehat{\omega}_\Phi$ on $\widehat{Q}$ defined by \eqref{E:Def-1-form-om}
 is exact.
  \end{enumerate}
 Furthermore, if {\em (i)} and {\em (ii)} hold, and $\hat{\omega}_\Phi = -d\hat{\sigma}$, then the dynamics preserves the volume form
  \[
\Phi'=\exp ({\widehat \sigma\circ \tau_{\widehat D^*}})\, \Phi.
\]
 \end{corollary}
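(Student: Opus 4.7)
The plan is to specialize Corollary \ref{C:Nonho-Invariant-Measure} to the situation where $\mathcal{V}p + D = TQ$, and observe that the dimension assumption collapses the family $\mathcal{R}$ of 1-forms to a single 1-form.

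First, I would recall that under the dimension assumption \eqref{E:dim-assum}, the horizontal distribution of the generalized nonholonomic connection satisfies $\mathfrak{H} = \mathcal{H}$, and consequently
\[
\widehat{\mathcal{H}}(\hat q) = (T_qp)(\mathcal{H}(q)) = T_{\hat q}\widehat{Q}
\]
for every $q \in Q$ with $p(q) = \hat q$. This means that the defining relation \eqref{E:Def-1-form-om} for $\widehat{\omega}_\Phi$ must hold for \emph{every} section $\widehat{Y}$ of $T\widehat{Q}$, and hence $\widehat{\omega}_\Phi$ is uniquely determined by $\Phi$; in the language used after \eqref{E:Def-1-form-om}, the affine bundle has rank $\dim Q - \mathrm{rank}(D + \mathcal{V}p) = 0$, so $\mathcal{R}$ consists of a single element.

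Now I would directly invoke Corollary \ref{C:Nonho-Invariant-Measure}. Its condition (i) is identical to condition (i) of the present statement, so nothing needs to be said there. Condition (ii) in Corollary \ref{C:Nonho-Invariant-Measure} reads: there exists $\widehat{\omega}_\Phi \in \mathcal{R}$ and $\widehat{\sigma} \in C^\infty(\widehat{Q})$ with $\widehat{\omega}_\Phi = -d\widehat{\sigma}$. Since $\mathcal{R}$ is now a singleton, the existential quantifier over $\mathcal{R}$ is vacuous, and this condition is exactly the exactness of the unique 1-form $\widehat{\omega}_\Phi$. This yields the equivalence asserted in the corollary.

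Finally, for the explicit invariant volume form, I would simply quote the second part of Corollary \ref{C:Nonho-Invariant-Measure}: once (i) and (ii) hold with $\widehat{\omega}_\Phi = -d\widehat{\sigma}$, the modular vector field of $\Phi$ equals $X_{\widehat{\sigma}\circ \tau_{\widehat{D}^*}}$, and applying \eqref{E:Modular-VF-Diff-Measures} to $\Phi' = \exp(\widehat{\sigma}\circ \tau_{\widehat{D}^*})\Phi$ gives $\mathcal{M}_{\Phi'} = 0$, so $\Phi'$ is preserved by every Hamiltonian vector field, in particular by the one governing the reduced nonholonomic dynamics. Since all of the heavy lifting is done by Theorem \ref{T:Intrinsic-Cond_Unimd} and Corollary \ref{C:Nonho-Invariant-Measure}, there is no serious obstacle here; the only point requiring care is the observation that the dimension assumption forces $\widehat{\mathcal{H}} = T\widehat{Q}$, which makes $\widehat{\omega}_\Phi$ uniquely defined and therefore lets us replace ``there exists a 1-form in $\mathcal{R}$ that is exact'' by the simpler statement ``$\widehat{\omega}_\Phi$ is exact''.
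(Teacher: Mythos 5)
Your proposal is correct and follows the paper's own route: the authors likewise obtain this corollary as a direct specialization of Corollary \ref{C:Nonho-Invariant-Measure}, using the observation that the dimension assumption forces $\widehat{\mathcal H}=T\widehat Q$ so that $\widehat\omega_\Phi$ is uniquely determined and the family $\mathcal{R}$ reduces to a single element. Nothing further is needed.
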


The local version of condition (i) in the above corollary is again given by \eqref{E:cond1}.
However, the local version of condition (ii) given in \eqref{E:cond2} greatly simplifies in
this case. Since $\widehat {\mathcal{H}}=T\widehat Q$, the local vector fields 
$\widehat{Y}_{{\alpha}}$ can be chosen as 
\[
\widehat{Y}_{{\alpha}} = \frac{\partial}{\partial \hat{q}^{{\alpha}}},
\]
where $(\hat q^\alpha)$ are local coordinates for $\hat Q$. Equation \eqref{E:def-omega-local}
implies that $\widehat \omega_\Phi$ is locally given by
\begin{equation}
\label{E:localomega-dimassump}
\widehat \omega_\Phi=   \left ( 
 C_{\alpha a}^a + C_{\alpha \beta}^{\beta}
 \, \right ) \, d \hat{q}^{\alpha}.
\end{equation}

\medskip
{\bf Invariant measures for $G$-Chaplygin systems}

Now, suppose that the nonholonomic system is a Chaplygin system, see  e.g. \cite{Koi}, that is,
\[
TQ = \mathcal{V}p \oplus D.
\]
In this case the distribution $\mathcal{V}^Dp = 0$ and  the condition  (i) in Theorem
\ref{T:Intrinsic-Cond_Unimd}
  is empty.
On the other hand, condition (ii) yields the following  result that was proved in \cite{CaCoLeMa}.
\begin{corollary}\label{Cantrijn-Leon-Martin}\cite{CaCoLeMa}
For a $G$-Chaplygin system, the reduced nonholonomic dynamics preserves a volume form if and only if the $1$-form $\widehat{\omega}_\Phi$ on $\widehat{Q}$ is exact. 
Furthermore, if $\widehat \omega_\Phi = -d \widehat \sigma$, with $\widehat \sigma \in C^{\infty}(\widehat Q)$, then the dynamics preserves the volume form
\[
\Phi' = \exp({\widehat \sigma \circ \tau_{\widehat D^*}})\,\Phi.
\] 
\end{corollary}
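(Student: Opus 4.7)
The plan is to deduce Corollary \ref{Cantrijn-Leon-Martin} as a direct specialization of Corollary \ref{Toni-Dmitry} (equivalently, of Theorem \ref{T:Intrinsic-Cond_Unimd} together with Corollary \ref{C:Nonho-Invariant-Measure}). I would proceed in three short steps, each following at once from the $G$-Chaplygin hypothesis $TQ = \mathcal{V}p \oplus D$.

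First, I would note that the direct sum $TQ = \mathcal{V}p \oplus D$ implies in particular the dimension assumption $\mathcal{V}p + D = TQ$, so the horizontal distribution of the generalized nonholonomic connection satisfies $\mathfrak{H} = \mathcal{H}$ and consequently $\widehat{\mathcal H} = T\widehat Q$. Therefore the affine family $\mathcal{R}$ defined by \eqref{E:Def-1-form-om} collapses to a single $1$-form $\widehat \omega_\Phi$ defined on all of $T\widehat Q$, and the statement \emph{``$\widehat \omega_\Phi$ is exact''} is unambiguous.

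Second, I would observe that the same direct sum forces $\mathcal{V}^D p = \mathcal{V}p \cap D = \{0\}$, so the subbundle $\mathcal{V}^D p/G \subset \widehat D$ is the zero subbundle. Its only section is the zero section, and the associated linear function $Z_\ell$ vanishes identically; hence condition (i) of Corollary \ref{Toni-Dmitry} is vacuous and automatically holds for any basic volume form $\Phi$ on $\widehat D^*$.

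Finally, I would apply Corollary \ref{Toni-Dmitry} to conclude that the reduced nonholonomic dynamics preserves a volume form on $\widehat D^*$ if and only if condition (ii) holds, that is, $\widehat \omega_\Phi$ is exact; and that when $\widehat \omega_\Phi = -d\widehat \sigma$ with $\widehat \sigma \in C^\infty(\widehat Q)$, the explicit preserved volume form is $\Phi' = \exp({\widehat \sigma \circ \tau_{\widehat D^*}})\,\Phi$. Since this is a clean specialization of results already established, no substantive obstacle arises: the only care required is in the identifications $\widehat{\mathcal H} = T\widehat Q$ and $\mathcal{V}^D p = \{0\}$, both of which are immediate from the Chaplygin hypothesis.
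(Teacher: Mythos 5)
Your proposal is correct and follows essentially the same route as the paper: the authors likewise observe that the Chaplygin condition $TQ = \mathcal{V}p \oplus D$ forces $\mathcal{V}^Dp = 0$, so that condition (i) of Theorem \ref{T:Intrinsic-Cond_Unimd} is vacuous and only the exactness of the (now uniquely determined) $1$-form $\widehat\omega_\Phi$ remains, with the preserved volume then given by Corollary \ref{C:Nonho-Invariant-Measure}. No gaps.
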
 

In this case, the  local expression \eqref{E:localomega-dimassump} for $\widehat \omega_{\Phi}$ 
further simplifies to give
\begin{equation}
\label{E:Omega-Chaplygin}
\widehat{\omega}_\Phi=  C_{\alpha \beta}^{\beta}  \, d\hat q^\alpha.
\end{equation}
%Moreover,  if $\widehat \omega_{\Phi}=-d\widehat \sigma$, then
%we can present the formula for the invariant measure in an elegant way. As 
%shown in \cite{Koiller}, for a Chaplygin system, the reduced space $D^*/G$ is naturally
%identified with $T^*\widehat Q$ via {\em compression}. Thus, if $\mu$ is the symplectic volume
%on $T^*\widehat Q$ and $\Phi = e^{\hat \zeta \circ \tau_{\widehat D^*}}\mu$ then an invariant volume form
%$\Phi'$ is given by
%\[
%\Phi' = e^{(\widehat \sigma + \widehat \zeta) \circ \tau_{\widehat D^*}}\mu.
%\]  

\medskip
{\bf Invariant measures for LL systems}

Now suppose that the nonholonomic system is LL. This means that the configuration manifold coincides with the symmetry Lie group, i.e. $Q = G$.  In this case, both the metric  $\mathcal{G}$ and the constraint distribution $D$ are left-invariant.
Hence they are respectively determined at the Lie algebra ${\frak g}$ by a symmetric, positive definite,  inertia tensor $\I:\mathfrak{g}\to \mathfrak{g}^*$, and a subspace $\partial \subset {\frak g}$ that is not a Lie subalgebra.
For all $h \in G$ we have
\[
\mathcal{G}(h)(u_h, v_h) =  \left \langle \, \I ( (T_hl_{h^{-1}})(u_h)), T_hl_{h^{-1}}(v_h)  \, \right \rangle, \; \; \; D(h) = (T_{\frak e}l_h)(\partial),
\]
where $u_h, v_h\in T_hG$, $\frak{e}$ is the identity element in $G$, $\langle \cdot , \cdot \rangle$ denotes the duality pairing,
and the mapping $l_h:G\to G$ is left multiplication by $h$. 

For an LL system, the condition (ii) in Theorem \ref{T:Intrinsic-Cond_Unimd} is empty.
On the other hand,  the 
local expression \ref{E:cond1} of condition (i) simplifies to:
 \begin{equation*}
C_{ab}^{b}  = 0, \qquad \mbox{ for all } a.
\end{equation*}
Moreover, in this case the vector fields $Z_a$ can be taken as left extensions of a basis 
$\{e_a\}$  of $\partial$ (i.e. $Z_a(h)=T_{\frak e} l_h(e_a)$ for $h\in G$).
Then if we denote  by ${\mathcal P}: {\frak g} \to \partial$ the orthogonal projector corresponding to the metric defined
in ${\frak g}$  by the inertia matrix $\I$, the following result follows
\begin{corollary}\label{linear}
The reduced  nonholonomic dynamics  preserves a volume form if
and only if
\begin{equation}\label{Lie-algebra-linear}
\displaystyle \sum_{b = 1}^{n} \langle \, e^{b}, {\mathcal
P}[e_{a}, e_{b}]_{\frak g} \, \rangle = 0, \; \; \; \mbox{ for  all $ a$,}
%\alpha \in \{1, \dots , n\}
\end{equation}
where $\{e_{a}\}$ is a basis of $\partial$ and
$\{e^{a}\}$ is the dual basis of $\partial^*$. In addition,
if (\ref{Lie-algebra-linear}) holds then the nonholonomic dynamics
preserves the euclidean volume on $\partial^*$.
\end{corollary}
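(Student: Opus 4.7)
The plan is to specialize Corollary \ref{C:Nonho-Invariant-Measure} to the LL setting and observe how drastically it simplifies. Since $Q=G$ and $G$ acts on itself by left translation, $\mathcal{V}p = TG$, so $\mathcal{V}^D p = D\cap \mathcal{V}p = D$ and the horizontal complement $\mathcal{H}=(\mathcal{V}^D p)^\perp\cap D$ vanishes. The shape space $\widehat{Q}=Q/G$ is a single point, the reduced phase space $\widehat{D}^*$ is canonically identified with $\partial^*$, and the distribution $\widehat{\mathcal{H}}$ is trivial. Thus the Greek index range in the decomposition $\widehat{D}=\mathcal{V}^D p/G \oplus \mathcal{H}/G$ is empty, and condition (ii) of Theorem \ref{T:Intrinsic-Cond_Unimd} is trivially satisfied: any 1-form on a point is zero, hence equal to $-d\widehat{\sigma}$ for any constant $\widehat{\sigma}$.

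It remains to translate condition (i), whose local form \eqref{E:cond1} reduces to $C_{ab}^b=0$ for all $a$ since no Greek terms survive. I would take the $G$-invariant basis $s_I = Z_a$ to be the left extensions $Z_a(h)=T_{\mathfrak{e}}l_h(e_a)$ of the chosen basis $\{e_a\}$ of $\partial$. The characterization \eqref{E:Nonho-bracket} of the reduced bracket gives $\mathcal{P}[Z_a,Z_b]=C_{ab}^c\, Z_c$, and since the Lie bracket of two left-invariant vector fields is the left extension of the Lie algebra bracket, evaluation at the identity yields $\mathcal{P}[e_a,e_b]_{\mathfrak{g}} = C_{ab}^c\, e_c$ in $\partial$. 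Pairing with the dual basis element $e^b$ and summing over $b$ produces
\[
C_{ab}^b = \sum_{b=1}^{n}\langle e^b,\mathcal{P}[e_a,e_b]_{\mathfrak{g}}\rangle,
\]
which establishes the stated equivalence.

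For the second assertion, since $\widehat{Q}$ is a point, any basic volume form $\Phi$ on $\widehat{D}^*=\partial^*$ is a constant multiple of the Euclidean volume $dp_1\wedge\cdots\wedge dp_n$. By the second part of Corollary \ref{C:Nonho-Invariant-Measure}, under the hypothesis the dynamics preserves $\exp(\widehat{\sigma}\circ\tau_{\widehat{D}^*})\,\Phi$, and $\widehat{\sigma}$ being constant this is again a constant multiple of the Euclidean volume on $\partial^*$. There is no real obstacle in this argument: the whole content of the corollary is a direct specialization of the already established machinery, and the only computation of substance is the identification of the structure constants $C_{ab}^c$ via left-invariance of the $Z_a$.
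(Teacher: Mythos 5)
Your proposal is correct and follows essentially the same route as the paper: specialize Corollary \ref{C:Nonho-Invariant-Measure} to $Q=G$, note that the shape space is a point so condition (ii) of Theorem \ref{T:Intrinsic-Cond_Unimd} is vacuous and condition (i) reduces to $C_{ab}^b=0$, identify the structure constants via the left-invariant extensions $Z_a$ evaluated at the identity, and observe that every basic volume on $\partial^*$ is a constant multiple of the Euclidean one. No gaps.
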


The conclusion about the euclidean volume is established by noticing that
any volume of basic type on $\partial^*$ is a constant multiple of it.

%
%Note however, that any volume of basic type on $\partial^*$ is a constant scalar
%multiple of the euclidean volume $\Phi_E$ on the vector space $\partial^*$ (with respect
%to the basis $\{e^b\}$). Therefore, Corollary \ref{linear} 
% can be restated to say that $\Phi_E$
%  is the only possible candidate for a preserved measure, and that it is indeed
% preserved if and only if condition \eqref{Lie-algebra-linear} holds.

\begin{remark}{\rm If   the co-dimension of
$\partial$ in $\frak g$  equals to $1$, then the condition 
\eqref{Lie-algebra-linear} can be given in simpler terms. Let $\beta \in {\frak g}^*$ span
the annihilator $\partial^0$ of $\partial$, and let $\eta := \I^{-1}\beta \in {\frak g}$.
%
% and that $\partial^0 = <\beta>$, with
%$\beta \in {\frak g}^* - \{0\}$, $\partial^0$ being the
%annihilator of $\partial$. Denote by $\eta$ the vector on ${\frak g}$ given by 
%$\beta(\xi) = <\xi, \eta>$, for $\xi \in {\frak g}$, by $k = \displaystyle
%\frac{1}{<\eta, \eta>}$ and by ${\mathcal
%M}_{\frak g} \in {\frak g}^*$ the modular character of ${\frak
%g}$. 
Then, it is easy to prove that condition
(\ref{Lie-algebra-linear}) is equivalent to the requirement that 
\[
\frac{1}{\langle \beta, \eta \rangle} \mbox{ad}^*_{\eta}\beta + {\mathcal M}_{\frak g} = \mu
\beta,
\]
for a certain $\mu \in \R$, 
where $\mbox{ad}^*: {\frak g} \times {\frak g}^* \to {\frak g}^*$ is the
coadjoint representation of ${\frak g}$, and  ${\mathcal
M}_{\frak g} \in {\frak g}^*$ is the modular character of ${\frak
g}$ (we recall that ${\mathcal M}_{\frak g}(\xi) = \mbox{trace} (\mbox{ad}_{\xi})$, for $\xi \in {\frak g}$). 
This result had been  proven by
Jovanovic \cite{Jo} (see also \cite{Ko}). }
\end{remark}

\section{Algorithm to investigate the existence of invariant measures}
\label{S:algorithm}

In this section we give an algorithm to determine if a nonholonomic system with symmetry
possesses an invariant measure. The algorithm is written so that the conditions in Theorem
\ref{T:Intrinsic-Cond_Unimd} can be applied in practice in a systematic way. We restrict
our attention to a more specific kind of systems that satisfy the following conditions:
\begin{enumerate}
%\item[{\bf C1.}] The intersection  $\mathcal{V}^Dp:=D\cap  \mathcal{V} p$ has constant rank.

\item[{\bf C1.}] The first de Rham cohomology group of the shape space $\widehat Q=Q/G$ is trivial.

\item[{\bf C2.}] 
 There exists
an open dense set $\widehat U\subset \widehat Q$ with a global chart that defines coordinates $(\hat q^{\hat \iota})$.
\end{enumerate}

For the examples that we treat in the next section, the space
$\widehat Q$ is $\R^2$, $S^2$ or $SO(3)$, that  satisfy both conditions {\bf C1} and  {\bf C2}. We recall
that in our developments we are always assuming that the intersection 
$\mathcal{V}^Dp:=D\cap  \mathcal{V} p$ has constant rank.

The steps are described above under the assumption that $(\mathcal G,D)$ is a symmetric nonholonomic system
on $Q$. Most of  the steps  only involve computations that can be systematically performed with a symbolic algebra program.

\begin{enumerate}
\item[{\bf Step 1.}] Find a basis  $\{Z_a, Y_\alpha \}$ of $G$-invariant  vector fields  of $D$ in such a way
that $\{Z_a \}$ is a basis of $\Gamma(\tau_{\mathcal{V}^Dp})$ and $\{ Y_\alpha \}$ is a basis of $\Gamma(\tau_{\mathcal H})$.

The vector fields $Z_a$ and $Y_\alpha$ need not be defined globally. 
It is sufficient that they are defined on the dense open set $p^{-1}(\widehat U)\subset  Q$.
In the examples treated below, the vector fields  $\{Z_a, Y_\alpha \}$ were found by inspection. 
If the dimension of $ \mathcal{V}^Dp$ is zero, skip steps 2 and 3 and go to step 4. 
On the other hand, if the dimension of $\mathcal H$ is zero, the algorithm terminates in step 3.

\item[{\bf Step 2.}] Compute the structure coefficients $C_{aI}^J$ defined by
\begin{equation*}
\mathcal P \left ( [ Z_a , Z_b ] \right )= C_{a b}^dZ_d + C_{a b}^\alpha Y_\alpha , \qquad 
\mathcal P\left (  [Z_a , Y_\alpha] \right )= C_{a \alpha}^bZ_b + C_{a \alpha}^\beta Y_\beta,
% \qquad
%\lcf Y_\alpha , Y_\beta \rcf_{\hat{\mathcal{P}}} = C_{ \alpha \beta}^aZ_a + C_{\alpha \beta}^\gamma Y_\gamma.
\end{equation*}
where $\mathcal P$ is the $\mathcal{G}$-orthogonal projection onto $D$ and $[\cdot, \cdot]$ is the standard commutator of vector
fields.
%Notice that this step involves computing standard vector field commutators and their $\mathcal G$-orthogonal projections onto $D$. 
%Moreover, 
Notice that by $G$-invariance of the basis  $\{Z_a, Y_\alpha \}$ and the metric $\mathcal G$,
the structure coefficients $C_{aJ}^K$ are functions of $\hat q^{\hat \iota}$.

\item[{\bf Step 3.}] A necessary condition for the existence of an invariant measure (coming from item (i)
in Theorem \ref{T:Intrinsic-Cond_Unimd} and expressed locally in \eqref{E:cond1}) is that 
\begin{equation*}
C_{a b}^b+C_{a \alpha}^\alpha =0, \qquad \mbox{for all $a$}.
\end{equation*}
If the dimension of $\mathcal H$ is zero, this condition is also sufficient.

\item[{\bf Step 4.}] Compute the coefficients $C_{\alpha \beta}^I$
defined by 
\begin{equation*}
\mathcal P \left ( [Y_\alpha , Y_\beta ] \right ) = C_{ \alpha \beta}^aZ_a + C_{\alpha \beta}^\gamma Y_\gamma.
\end{equation*}
As in step 2, here one needs to compute the standard commutator of vector fields and their
$\mathcal G$-orthogonal projection onto $D$. As before, one should interpret the coefficients $C_{\alpha J}^K$ as functions of $\hat q^{\hat \iota}$.

\item[{\bf Step 5.}] Consider the projections $\widehat Y_{\alpha}$ (via $p$) on $\widehat U$ of the vector fields  $Y_\alpha$, and write them as
\begin{equation*}
\widehat Y_\alpha= \widehat Y_\alpha^{\hat \iota} \frac{\partial}{\partial \hat q^{\hat \iota}}.
\end{equation*}
Now, denote by $\mathcal{R}$ the family of 1-forms $\widehat \omega$ on $\widehat U$ that satisfy
\begin{equation*}
\widehat \omega (\widehat Y_\alpha) = \frac{\partial \widehat Y_\alpha^{\hat \iota} }{\partial \hat q^{\hat \iota}} + C_{\alpha a}^a +
C_{\alpha \beta}^\beta, \qquad \forall \alpha.
\end{equation*}
A necessary condition for the existence of an invariant measure is that there exists
a 1-form $\widehat \omega$ in the family $\mathcal{R}$ that satisfies $\widehat \omega=-d\widehat \sigma_{\widehat U}$ for a certain $\widehat \sigma_{\widehat U} \in C^\infty(\widehat
U)$. In practice, one checks if there is a  $1$-form $\widehat \omega$ within $\mathcal{R}$ that satisfies the weaker condition $d\widehat \omega=0$.

\item[{\bf Step 6}]  Let $(\hat q^{\hat \iota}, p_a,p_\alpha)$ be a system
of local coordinates on $\widehat D^*$ induced by the basis $\{Z_a, Y_\alpha \}$. Then the volume form
\begin{equation*}
\Phi=\exp (\widehat \sigma_{\widehat U})\, \bigwedge_{\hat \iota} d\hat q^{\hat \iota} \bigwedge _a dp_a 
\bigwedge_\alpha dp_\alpha
\end{equation*}
is preserved by the system on the open set $(\tau_{\widehat D^*})^{-1}(\widehat U)\subset \widehat D^*$.
One should finally verify that the above volume form admits an invariant extension to $\widehat D^*$.

\begin{remark}
If the dimension assumption holds, then such an invariant extension always exists. To see this
fix a basic volume form $\Psi$ on $\widehat D^*$,  and suppose that
\begin{equation}
\label{E:Change-vol}
\left . \Psi \right |_{(\tau_{\widehat D^*})^{-1}(\widehat U)}=\exp (\widehat \zeta_{\widehat U} \circ \tau_{ \widehat D^*}) \bigwedge_{\hat \iota} d\hat q^{\hat \iota} \bigwedge _a dp_a 
\bigwedge_\alpha dp_\alpha
, \qquad \mbox{with} \qquad \widehat \zeta_{\widehat U}
\in C^\infty (\widehat U).
\end{equation}
In such a case, there is a smooth extension $\widehat \xi \in C^\infty (\widehat Q)$ of the real function 
$\widehat \sigma_{\widehat U}-\widehat \zeta_{\widehat U}$ and
\begin{equation*}
\Psi'=\exp (\widehat \xi \circ \tau_{ \widehat D^*}) \Psi
\end{equation*}
is an invariant volume form.

In fact, if $\widehat \omega_\Psi$ is the (global) 1-form on $\widehat Q$ given by \eqref{E:Def-1-form-om} then a direct computation using \eqref{E:Change-vol}
proves that
\begin{equation}
\label{E:Change-1-form}
\left . (\widehat \omega_\Psi) \right |_{\widehat U}=\widehat \omega+d\widehat \zeta_{\widehat U}=-d(\widehat \sigma_{\widehat U}
-\widehat \zeta_{\widehat U}).
\end{equation}
In particular, $\widehat \omega_\Psi$ is closed in $\widehat U$ and, since  $\widehat U$ is dense, we obtain that $\widehat \omega_\Psi$
is closed (in $\widehat Q$). This, using {\bf C1}, implies that $\widehat \omega_\Psi$ is exact, that shows that
condition (ii) of Theorem \ref{T:Intrinsic-Cond_Unimd} is satisfied.
\end{remark}

\end{enumerate}

\section{Examples}
\label{S:examples}

\subsection{Body with planar section rolling over a fixed sphere}

Following \cite{BoMaKi,Vor1,Vor2,Yar}, consider the motion of a rigid body that possesses a planar face\footnote{an 
everyday life example of 
such type of rigid body is a shoe (without  heel).} that rolls without slipping over a fixed sphere of 
radius $R$. We consider a body frame $\{E_1, E_2, E_3 \}$, whose origin is located at the center of mass $C$ of the body, and such
that the $E_3$-axis 
is normal to the planar face of the body. The spatial frame   $\{e_1, e_2, e_3 \}$ 
has its origin at
the center $O$ of the sphere. The distance between the center of mass $C$ and
the planar face will be denoted by $\ell$. We assume that our choice of body
frame is such that $E_3$ is the outward normal vector to the sphere at the contact point
$P$ (see Figure \ref{F:esfera}).

The two frames are related by an attitude matrix $g\in SO(3)$.
Let $x\in \R^3$ be the spatial  coordinates of the vector $\rvec{OC}$. 
The body coordinates of $\rvec{OC}$
are $X=g^{-1}x$. At any configuration, the condition 
that the planar face of the body  is tangent to the
sphere is expressed by 
\begin{equation}
\label{E:HolConst}
X_3=R+\ell.
\end{equation}

\begin{figure}[ht]
\centering
\includegraphics[width=10cm]{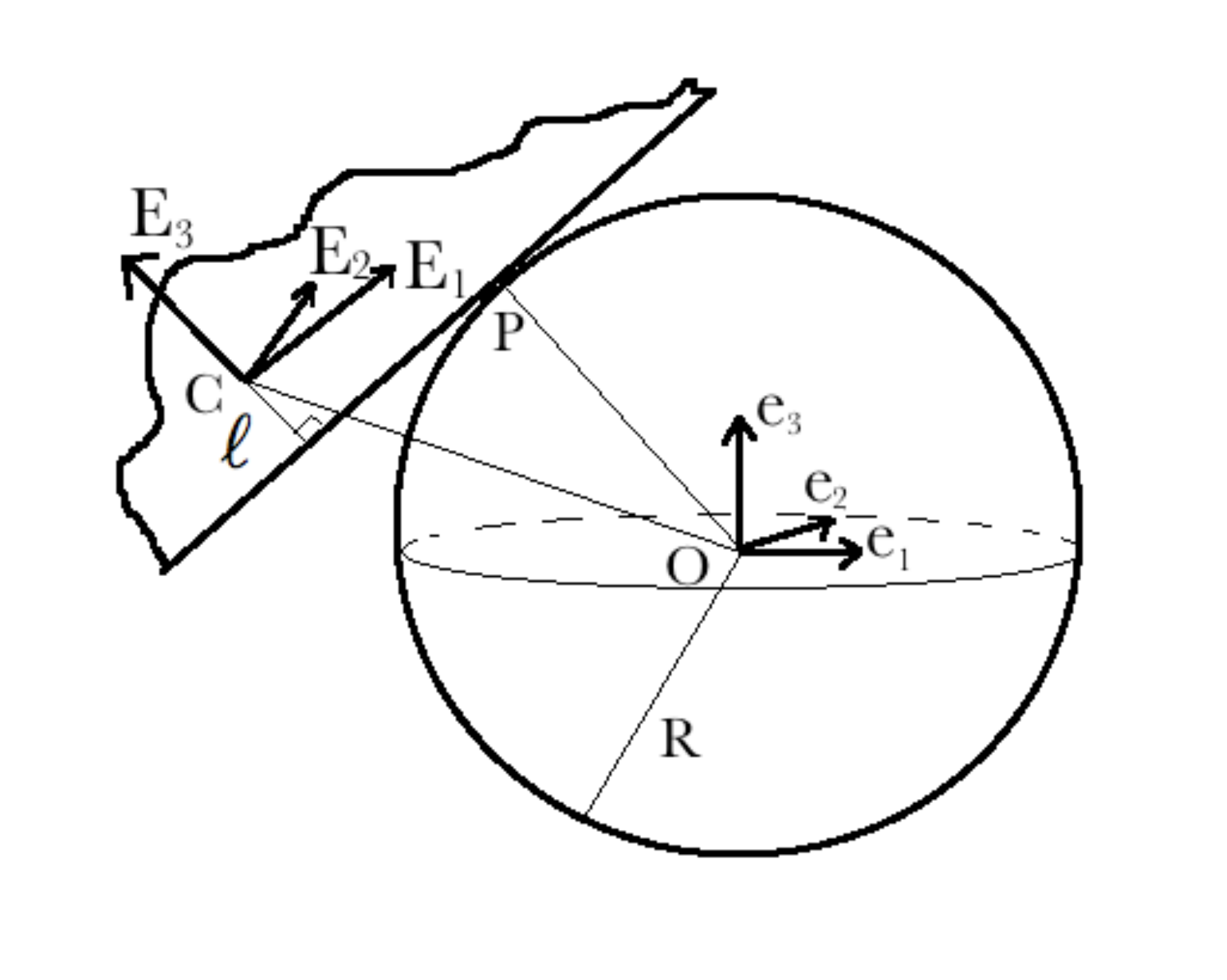}
\caption{\small{Body with planar section rolling over sphere} \label{F:esfera}}
\end{figure}

The constraint that the planar face rolls without slipping is equivalent to the requirement that the
contact point $P$ is at rest. This is expressed by the relations
\begin{equation}
\label{E:Rolling-Const}
\dot X_1 = -R\Omega_2, \qquad \dot X_2 =R\Omega_1,
\end{equation}
where the time derivative of the coordinates $X_1, X_2$ is computed in the 
body frame and the vector $\boldsymbol{\Omega}=(\Omega_1, \Omega_2, \Omega_3)\in \R^3$ is the 
angular velocity of the body expressed in the body frame. We think of $\boldsymbol{\Omega}$
 as an element
in the Lie algebra $so(3)$ corresponding to the skew-symmetric matrix $g^{-1}\dot g$
with the usual identification of $so(3)$ with $\R^3$ via the {\em hat map} (see e.g.
\cite{MaRa}).

In view of our discussion, we can take 
$Q=SO(3)\times \R^2$ with  coordinates $(X_1, X_2)$ for $\R^2$ as the configuration
space. Indeed, if we know the attitude matrix $g$, and the value of the
coordinates $(X_1, X_2)$, then, in view of \eqref{E:HolConst}, we can specify the
configuration of the body. The point of contact $P$ is easily determined since the vector
$\rvec{OP}$ has space coordinates $Rg \tiny{\left ( \begin{array}{c} 0 \\ 0 \\ 1\end{array} \right )} $.

Let $\lvec{e_1}, \lvec{e_2}, \lvec{e_3}$ be the left invariant moving frame  of $SO(3)$ obtained by
left translation of the canonical basis at the group identity. We have the commutation relationships:
\begin{equation*}
[\lvec{e_1}, \lvec{e_2}]= \lvec{e_3}, \qquad [\lvec{e_2}, \lvec{e_3}]= \lvec{e_1}, \qquad [\lvec{e_3}, \lvec{e_1}]= \lvec{e_2}.
\end{equation*}
The entries $\Omega_1, \, \Omega_2, \, \Omega_3$ of the angular velocity in the
body frame $\boldsymbol{\Omega}$ are quasi-velocities with respect to this moving frame.

The no-slip  constraints \eqref{E:Rolling-Const} define  the constraint distribution $D\subset TQ$ that is generated by the vector fields
\begin{equation}
\label{E:Basis-Sphere}
D=\mbox{span} \left \{ \, \frac{\partial}{\partial X_1} - \frac{1}{R}   \lvec{e_2} \, , \,  \frac{\partial}{\partial X_2} + \frac{1}{R}   \lvec{e_1}\, , \,  \lvec{e_3} \right \}.
\end{equation}

The kinetic energy of the body is given by
\begin{equation*}
\mathcal{K}=\frac{1}{2}\langle \I \boldsymbol{\Omega} , \boldsymbol{\Omega} \rangle +\frac{m}{2} || \dot x||^2,
\end{equation*}
where the derivative $\dot x$ is computed on the space frame, $\langle \cdot , \cdot \rangle$ is the euclidean scalar product in $\R^3$, $m$ is the 
mass of the body, and the $3\times 3$ symmetric, positive definite matrix $\I$ with
entries $I_{ij}$ is the inertia
tensor of the body. Notice that in view of our choice
of body axes, $\I$ need not
be diagonal. However, by an appropriate rotation of the axes $E_1, E_2$, we
can assume that $I_{12}=0$.

Using that $g\in SO(3)$ we have
\begin{equation*}
|| \dot x||^2= ||g^{-1} \dot x||^2,
\end{equation*}
and hence, since
\begin{equation*}
g^{-1} \dot x =\dot X + \boldsymbol{\Omega} \times X,
\end{equation*}
we can write
\begin{equation}
\label{E:kinetic-energy}
\mathcal{K}=\frac{1}{2}\langle \I \boldsymbol{\Omega} , \boldsymbol{\Omega} \rangle +\frac{m}{2} || \boldsymbol{\Omega} \times X ||^2 +m \left \langle \dot X , \boldsymbol{\Omega} \times X
\right \rangle +\frac{m}{2}  ||\dot X ||^2.
\end{equation}

Denote by $\{ \lvec{e^1}, \lvec{e^2}, \lvec{e^3} \}$ the dual basis
to  $\{ \lvec{e_1}, \lvec{e_2}, \lvec{e_3} \}$. 
A straightforward calculation using \eqref{E:HolConst} shows that the  kinetic energy $\mathcal{K}$ defines the following
 metric $\mathcal{G}$ in $Q$, 
 \begin{equation}
 \label{E:metric-sphere}
\begin{split}
\mathcal{G}&=(I_{11} + m (X_2^2+(R+\ell)^2))\lvec{e^1}\otimes \lvec{e^1} + (I_{22} + m (X_1^2+(R+\ell)^2))\lvec{e^2}\otimes \lvec{e^2}   \\
& \qquad + (I_{33} + m (X_1^2+X_2^2))\lvec{e^3}\otimes \lvec{e^3} -  2m X_1X_2 \lvec{e^1}\otimes \lvec{e^2} +2(I_{13}-m X_1(R+\ell))
\lvec{e^1}\otimes \lvec{e^3}  \\ & \qquad 
+2(I_{23}-mX_2(R+\ell))\lvec{e^2}\otimes \lvec{e^3} 
+2m(R+\ell) \lvec{e^2}\otimes dX_1 -2mX_2\lvec{e^3}\otimes dX_1  \\ & \qquad + 2mX_1\lvec{e^3}\otimes dX_2  -2m(R+\ell) \lvec{e^1}\otimes dX_2 
+ m ( dX_1 \otimes dX_1 +dX_2 \otimes dX_2).
\end{split}
\end{equation}

The group $G=SO(3)$ acts on $Q$ by left multiplication on the $SO(3)$ factor and leaves the metric $\mathcal{G}$ and the constraint
distribution invariant. 
The   
space $\mathcal{V}^Dp$ has dimension 1 and is generated by 
$ \lvec{e_3}$. The  shape space  $\widehat Q=Q/G= \R^2$ that  has a global chart and its first de
Rham cohomology group is zero. Hence, we are in the framework to apply the algorithm described
in section \ref{S:algorithm}.

The vector fields 
\begin{equation*}
\begin{split}
&Z_{1}=\lvec{e_3}, \qquad Y_{1}=-\frac{1}{R}\lvec{e_2  }+ \left ( \frac{I_{23}-m\ell X_2}{R(I_{33} + m (X_1^2+X_2^2))}\right ) \lvec{e_3} + \frac{\partial}{\partial X_1}\, , \\ 
&Y_{2}=\frac{1}{R}\lvec{e_1} + \left ( \frac{-I_{13}+m\ell X_1}{R(I_{33} + m (X_1^2+X_2^2))} \right ) \lvec{e_3} + \frac{\partial}{\partial X_2} \, ,
\end{split}
\end{equation*}
satisfy the properties of step 1 of the algorithm. To avoid confusion with the use of subindices, in the
treatment of this example,  we denote the
 vector fields  $Z_1, Y_1, Y_2$ respectively by $v_1, v_2, v_3$.
 
To perform step 2, we compute the commutators:
\begin{equation*}
\left [ v_{1} \, , \,  v_{2} \right ]= \left [ Z_{1} \, , \,  Y_{1} \right ] = \frac{1}{R}\lvec{e_1} \, , \qquad 
\left [ v_{1} \, , \,  v_{3} \right ] = \left [ Z_{1} \, , \,  Y_{2} \right ] = \frac{1}{R}\lvec{e_2}.
\end{equation*}
Next, we need to calculate the $\mathcal{G}$-orthogonal projection of these vectors
onto $D$ and express them in terms of the basis $\{ v_1 \, , \,  v_2 \, , \,  v_3 \}$.
This is done by solving the linear system of equations $Ty_J=b_J$, where 
\begin{equation*}
\label{E:Matrix-A}
T=\left ( \begin{array}{ccc} 
\mathcal{G}( v_1,  v_1) & 0 & 0 \\
0 &\mathcal{G}( v_2,  v_2)&\mathcal{G}( v_1,  v_2)
 \\
 0 &\mathcal{G}( v_2,  v_1)&\mathcal{G}( v_2,  v_2) \end{array}
 \right ), \qquad b_J =\left ( \begin{array}{c} 
\mathcal{G}( v_1, \left [ v_1 \, , \,  v_{J} \right ] ) \\
 \mathcal{G}( v_2, \left [ v_1 \, , \,  v_{J} \right ] )  \\
 \mathcal{G}( v_3, \left [ v_1 \, , \,  v_{J} \right ] ) \end{array}
 \right ), \quad J=2,3.
 \end{equation*}
The entries $y_J^K$ of $y_J$ are the structure constants $C_{1, J}^K$.
After solving the systems of equations corresponding to $J=2,3$, with the aid of MAPLE\texttrademark, we obtain:
 \begin{equation*}
C_{1, 2}^{2}+C_{1,3}^{3}=\frac{m}{R^3\det (T)}\left (
 -I_{23}(I_{11} +m\ell^2)X_1 +I_{13}(I_{22} +m\ell^2)X_2 +m\ell (I_{11}-I_{22} )X_1X_2 \right ).
\end{equation*}
According to step 3 of the algorithm, a necessary condition for the existence of an invariant measure
is that the above expression equals zero. Since such equality must hold for all values of $X_1$ and $X_2$, we obtain the
necessary conditions for the existence of an invariant measure:
\begin{equation*}
\begin{split}
\label{E:FirstCondgeneral}
  (I_{11}+m\ell^2)I_{23} =0, \qquad (I_{22}+m\ell^2)I_{13}
=0, \qquad (I_{11}-I_{22})\ell =0.
\end{split}
\end{equation*}
The first two conditions imply that $I_{13} =I_{23}=0$ (i.e.,  the inertia tensor is
diagonal), since $I_{11}, \, I_{22}>0$. The third condition is satisfied if either $\ell =0$
or $I_{11}=I_{22}$. 

The above conditions are also sufficient for the existence of an invariant volume. Indeed,  the measure for the first case was found in \cite{Yar} and for the second case already in
\cite{Vor1} (see also \cite{BMK}), where the equations of motion were also explicitly integrated.

Therefore we have:

\begin{theorem}
\label{T:planar-section-body}
 The reduced equations for a rigid body with a
planar face that rolls without slipping on a fixed  sphere possess an invariant measure if and only if at least one of the following conditions hold:
\begin{enumerate}
\item  the center of mass is in the base plane ($\ell=0$), which actually implies that the body is flat,
\item the body is axially symmetric ($I_{11}=I_{22},\, I_{13} =I_{23}=0$).
\end{enumerate}
\end{theorem}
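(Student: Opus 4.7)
The plan is to apply the algorithm of Section \ref{S:algorithm} to the symmetric nonholonomic system $(\mathcal{G}, D)$ on $Q = SO(3) \times \mathbb{R}^2$ described above. The shape space $\widehat{Q} = \mathbb{R}^2$ clearly satisfies \textbf{C1} and \textbf{C2}, and the intersection $\mathcal{V}^D p = D \cap \mathcal{V} p$ has constant rank one, spanned by $\lvec{e_3}$; hence the algorithm applies with $\dim(\mathcal{V}^D p / G) = 1$ and $\dim(\mathcal{H}/G) = 2$, so there is a single $a$-index and two $\alpha$-indices.

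For necessity, I would execute Steps 1--3. After selecting the $G$-invariant basis $\{Z_1, Y_1, Y_2\}$ adapted to $D = \mathcal{V}^D p \oplus \mathcal{H}$, Step 2 amounts to computing $[Z_1, Y_1] = \lvec{e_1}/R$ and $[Z_1, Y_2] = \lvec{e_2}/R$ and then projecting them $\mathcal{G}$-orthogonally onto $D$. This reduces to inverting the Gram matrix $T = (\mathcal{G}(v_I, v_J))$, which is partially block-diagonal since $\mathcal{G}(Z_1, Y_\alpha) = 0$ was arranged at Step 1. The trace of interest $C_{1,2}^2 + C_{1,3}^3$ then becomes a rational function of $(X_1, X_2)$ with denominator $R^3 \det(T) > 0$. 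Demanding its identical vanishing on $\mathbb{R}^2$ and comparing monomial coefficients in $X_1, X_2$ yields the three scalar equations $(I_{11} + m\ell^2)I_{23} = 0$, $(I_{22} + m\ell^2)I_{13} = 0$, $(I_{11} - I_{22})\ell = 0$. Positivity of $I_{11}, I_{22}$ forces $I_{13} = I_{23} = 0$, and the last equation splits the analysis into the two cases of the statement.

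For sufficiency, in each case I would carry out Steps 4--6. In case (2) the metric \eqref{E:metric-sphere} collapses to a much simpler tensor because $I_{13} = I_{23} = 0$ and $I_{11} = I_{22}$; in case (1) the simplification comes from $\ell = 0$ and $I_{13} = I_{23} = 0$. In either case the projected vector fields are $\widehat{Y}_\alpha = \partial/\partial X_\alpha$, and the $1$-form $\widehat{\omega} \in \mathcal{R}$ becomes explicit. It then suffices to verify $d\widehat{\omega} = 0$ on $\widehat{U} = \mathbb{R}^2$; by the remark at the end of Section \ref{S:algorithm} and the triviality of $H^1(\mathbb{R}^2)$, this produces a primitive $\widehat{\sigma}$ and hence an invariant volume form $\exp(\widehat{\sigma} \circ \tau_{\widehat{D}^*})\,\Phi$. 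Alternatively, one may invoke the explicit invariant densities constructed classically in \cite{Yar} for case (1) and in \cite{Vor1,BMK} for case (2).

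The main obstacle I anticipate is the algebra in Step 2: even with the partial block structure of $T$, the symbolic inversion produces bulky rational expressions whose numerator factors neatly into the three linear conditions only after cancellations specific to \eqref{E:metric-sphere}. This is tractable but essentially requires a symbolic algebra system (as noted in the algorithm itself). The sufficiency direction is, by comparison, a routine closedness check once the two cases are isolated.
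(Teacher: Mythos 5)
Your proposal is correct and follows essentially the same route as the paper: necessity via Steps 1--3 of the algorithm (the vanishing of $C_{1,2}^{2}+C_{1,3}^{3}$, whose numerator's monomial coefficients in $X_1,X_2$ give exactly the three conditions $(I_{11}+m\ell^2)I_{23}=0$, $(I_{22}+m\ell^2)I_{13}=0$, $(I_{11}-I_{22})\ell=0$), and sufficiency by appealing to the explicit invariant densities of \cite{Yar} and \cite{Vor1,BMK}, which is precisely what the paper does. The only cosmetic difference is that you also sketch the alternative of verifying $d\widehat{\omega}=0$ directly in Steps 4--6, which the paper skips in favor of citing the known measures.
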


The preserved measures  have the form
$$
(I_{11} I_{22} + mI_{11}X_1^2+mI_{22}X_2^2)\cdot (I_{33}+ m(X_1^2+X_2^2))^{1/2} \,  
d X_1\wedge dX_2 \wedge d\Omega_1 \wedge d\Omega_2 \wedge d\Omega_3, %\qquad
%A=m(X_1^2+X_2^2)
$$
and, respectively,
$$
(I_{11}+m(X_1^2+X_2^2)+m\ell^2)\cdot (I_{11} (I_{33}+m(X_1^2+X_2^2))+m\ell^2 I_{33})^{1/2} \,
d X_1\wedge dX_2\wedge d\Omega_1 \wedge d\Omega_2 \wedge d\Omega_3.
$$

\subsection{The Chaplygin top}

Consider the motion of an inhomogeneous ball that rolls without slipping on the plane. 
If  the center of mass of the sphere coincides with the geometric center,  we recover the well known problem of the Chaplygin sphere, that possesses an invariant measure \cite{chapsphere}.
Here, we consider the general case that is sometimes referred to as the Chaplygin top \cite{Schneider}.
 In the case of an axisymmetric ball, this problem was originally studied in detail in \cite{Routh} and \cite{Ch_r}.

Let $R$ be the radius of the sphere. The space frame $\{e_1, e_2, e_3 \}$ is chosen so that the rolling takes place on the plane
$z=-R$. 
 We consider a body frame $\{E_1, E_2, E_3 \}$, whose origin is located at the  center of mass of the
 sphere $M$, and such that the  geometric center $C$ of the sphere 
lies on the $E_3$-axis.  Let $0\leq \ell \leq R$ be the distance  between the geometric center of the sphere
and the center of mass. The body coordinates of the center $C$ of the sphere are $(0,0,-\ell)$. See Figure
\ref{F:ChapTop} below.

\begin{figure}[ht]
\centering
\includegraphics[width=10cm]{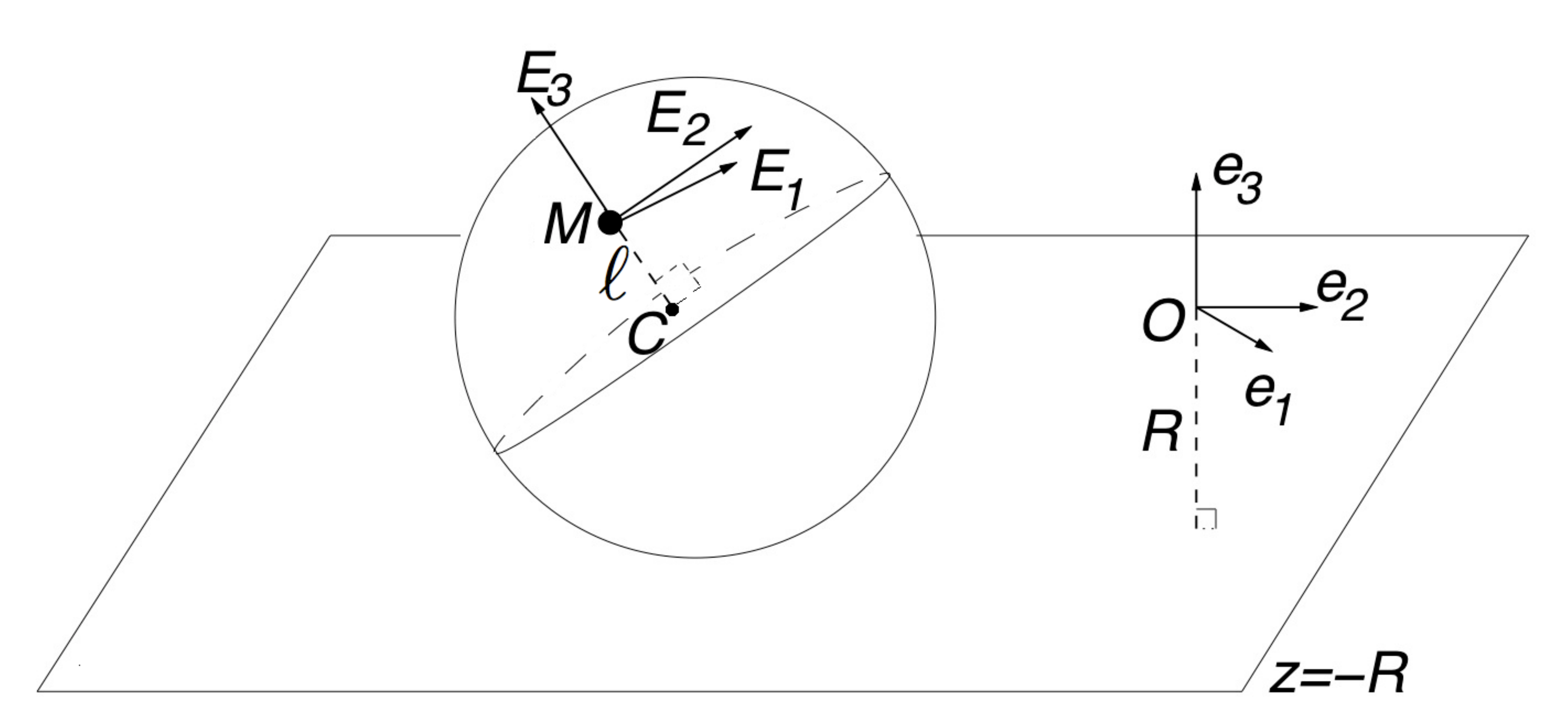}
\caption{\small{Chaplygin Top}\label{F:ChapTop} }
\end{figure}

The configuration space of the system is $Q=SO(3)\times \R^2$. A matrix $g\in SO(3)$ specifies the orientation of the ball by relating the body and the space frame. We will use Euler angles as local
coordinates for 
$SO(3)$. We use  the {\em $x$-convention}, see e.g. \cite{MaRa} and write a matrix $g\in SO(3)$ 
as 
\begin{equation*}
g=\left(
\begin{array}{ccc}
 \cos \psi \cos \varphi - \cos \theta \sin \varphi \sin \psi & -\sin \psi \cos \varphi - \cos \theta \sin \varphi \cos \psi & \sin \theta \sin \varphi \\
\cos \psi \sin \varphi + \cos \theta \cos \varphi  \sin \psi  & -\sin \psi \sin \varphi + \cos \theta \cos \varphi  \cos \psi  & -\sin \theta \cos \varphi  \\
 \sin \theta \sin \psi   & \sin \theta \cos \psi  & \cos \theta  
\end{array}
\right),
\end{equation*}
where the Euler angles $0<\varphi , \psi <2\pi, \, 0<\theta <\pi$. According to this convention, we
obtain the following expressions for the angular velocity in space coordinates $ \boldsymbol{\omega}$,
and in body coordinates $ \boldsymbol{\Omega}$:
\begin{equation}
\label{E:Omega-omega}
 \boldsymbol{\omega}=\left (\begin{array}{c} \dot \theta \cos \varphi +\dot \psi \sin \varphi \sin \theta \\ \dot \theta \sin \varphi - \dot \psi \cos \varphi \sin \theta  \\ \dot 
\varphi + \dot \psi \cos \theta \end{array} \right ), \qquad  \boldsymbol{\Omega}=\left ( \begin{array}{c} \dot \theta \cos \psi +\dot \varphi \sin \psi \sin \theta \\ -\dot \theta \sin \psi + \dot \varphi \cos \psi \sin \theta  \\ \dot 
\varphi\cos \theta + \dot \psi  \end{array} \right ).
\end{equation}
Let $(x,y,0)$ be the spatial coordinates of the geometric center $C$ of the sphere.  The constraints of rolling without slipping are
\begin{equation}
\label{E:Constraints-rolling-Chap-top}
\dot x= R\omega_2 =R (\dot \theta \sin \varphi - \dot \psi \cos \varphi \sin \theta ), \qquad \dot y = -R\omega_1
=-R(\dot \theta \cos \varphi +\dot \psi \sin \varphi \sin \theta).
\end{equation}
The kinetic energy of the sphere is given by
\begin{equation}
\label{E:KinEnergy_ChapTop}
\mathcal{K}=\frac{1}{2}\langle \I \boldsymbol{\Omega} , \boldsymbol{\Omega} \rangle +\frac{m}{2} || \dot {\bf u}||^2,
\end{equation}
where ${\bf u}$ are the space
coordinates of the center of mass. Here, 
 $\I$ is the inertia tensor of the sphere with respect to the center of mass with entries $I_{ij}$ $i,j=1,2,3$. Since we have already made a choice of the body frame, we cannot assume that  $\I$  is diagonal. However, by an
 appropriate rotation of the body frame around the $E_3$-axis, we can assume that $I_{12}=0$. 
 
 The expression for ${\bf u}$  in our coordinates is given by
 \begin{equation}
 \label{E:Def-u}
{\bf u}=\left (  \begin{array}{c} x \\ y \\ 0 \end{array}  \right ) + g \left (  \begin{array}{c} 0 \\ 0 \\ \ell \end{array}  \right ) = \left (  \begin{array}{c} x +\ell  \sin \theta \sin \varphi \\ y -\ell  \sin \theta \cos \varphi\\ \ell \cos \theta \end{array}  \right ) .
\end{equation} 
 Therefore 
  \begin{equation*}
\dot {\bf u}=  \left (  \begin{array}{c} \dot x + \ell\dot \theta  \cos \theta \sin \varphi  + \ell \dot \varphi \sin \theta \cos \varphi  \\ \dot y -\ell \dot \theta  \cos \theta\cos \varphi + \ell \dot \varphi \sin \theta \sin \varphi  \\ -\ell \dot \theta \sin \theta \end{array}  \right ) .
\end{equation*} 
 The local expression for the kinetic energy metric $\mathcal{G}$ of the problem is obtained by substituting the
 expressions for $\dot {\bf u}$ and $\boldsymbol{\Omega}$ onto \eqref{E:KinEnergy_ChapTop}.
 
\medskip{{\bf Symmetries}}

 There is a freedom in the choice of origin and orientation of the space axes $\{ e_1, e_2 \}$. This corresponds to a symmetry of the system defined by a left action of the Euclidean group $SE(2)$ on $Q$. 
 Let 
 \begin{equation*}
h=\left ( \begin{array}{ccc} \cos \vartheta & -\sin \vartheta& v \\  \sin \vartheta & \cos \vartheta& w \\ 0 & 0 & 1 \end{array} \right )
\end{equation*}
denote a generic element on $SE(2)$. The action of $h$ on a point $q\in Q$ with local coordinates
$(\varphi, \theta, \psi, x,y)$ is given by
\begin{equation*}
h\cdot  q = (\varphi + \vartheta ,  \theta, \psi, x \cos\vartheta  - y\sin \vartheta +v , x \sin \vartheta  + y\cos \vartheta +w).
\end{equation*}
One can check that both the constraints and the kinetic energy are invariant under the lift of the action
to $TQ$. The action of $SE(2)$ on $Q$ is free and proper and the shape space $\widehat Q =Q/G =S^2$.
In our local coordinates the orbit projection $p:Q\to S^2$ is given by
\begin{equation*}
p(\varphi, \psi, \theta, x,y) = ( \theta, \psi),
\end{equation*} 
where $( \theta, \psi)$ are spherical coordinates on the unit sphere $\gamma_1^2+\gamma_2^2+\gamma_3^2=1$, defined by
\begin{equation*}
\gamma_1=\sin \theta \sin \psi, \qquad \gamma_2= \sin \theta \cos \psi, \qquad  \gamma_3= \cos \theta.
\end{equation*}
The vector $\boldsymbol{\gamma} = (\gamma_1,\gamma_2,\gamma_3)=g^{-1}e_3$ is the the usual
{\em Poisson vector} whose entries are the  body coordinates of the unit vector that is normal to the plane of
rolling, and evolves according to the kinematic equation
\begin{equation*}
\dot {\boldsymbol{\gamma}}=\boldsymbol{\gamma}\times \boldsymbol{\Omega}.
\end{equation*}

The vertical subbundle $\mathcal{V}p$ is spanned by 
\begin{equation*}
\mathcal{V}p=\mbox{span} \left \{ \frac{\partial}{\partial \varphi}\,  , \,    \frac{\partial}{\partial x} \,  , \,  
 \frac{\partial}{\partial y} \right \}.
\end{equation*}
On the other hand, the constraint distribution $D$ is spanned by the $SE(2)$- invariant vector fields
\begin{equation*}
\begin{split}
Z_{1}&= \frac{\partial}{\partial \varphi}\,  , \\  X_{1}&=\frac{\partial}{\partial \theta} +R\sin \varphi 
 \frac{\partial}{\partial x} - R\cos \varphi 
 \frac{\partial}{\partial y}\,  , \\ X_{2}&=\frac{\partial}{\partial \psi} -R\cos \varphi \sin \theta
 \frac{\partial}{\partial x}-R\sin \varphi \sin \theta
 \frac{\partial}{\partial y}.
 \end{split}
\end{equation*}
It is then clear that the intersection $\mathcal{V}^Dp=D \cap \mathcal{V}p$ has constant dimension 1 and is spanned by $Z_1$. Moreover, since $S^2$ is simply connected, its 
first de Rham cohomology group is zero. Also, the angles $(\theta, \psi)$ form a coordinate chart on an
open dense set of $S^2$, so all of the conditions to apply the algorithm in Section \ref{S:algorithm}
hold.

The following vector fields, together with $Z_{1}$ satisfy the requirements of step 1 of the algorithm:
\begin{equation*}
Y_{\alpha} := X_\alpha - \frac{\mathcal{G}(Z_{1},X_\alpha)}{\mathcal{G}(Z_{1},Z_{1})} \,
Z_{1} \, \qquad \alpha=1, 2.
\end{equation*}
As in our previous example, to avoid confusion with the use of subindices, we denote the
vector fields $Z_1, Y_1, Y_2$ respectively by $v_1, v_2, v_3$.

We compute the standard commutators:
\begin{equation*}
\begin{split}
[v_1, v_2] &= R\cos \varphi  \frac{\partial}{\partial x} + R \sin \varphi  \frac{\partial}
{\partial y} + \lambda_1v_1 \, , \\
[v_1, v_3] &= R\sin \varphi \sin \theta \frac{\partial}{\partial x} - R \cos \varphi \sin \theta \frac{\partial}
{\partial y} + \lambda_2 v_1,
\end{split}
\end{equation*}
where  $\lambda_1, \lambda_2$ are certain functions of $(\theta, \psi)$. We should now
compute the projection of the above vector fields  onto $D$ and express them
as a linear combination of $v_{1}, v_{2}, v_{3}$ to determine the
coefficients $C_{1, J}^K$ with $J=2,3$. In fact,  looking ahead at step 3 of the algorithm,
we are interested in computing $C_{1, J}^J$ for $J=2, 3$.
A simple linear algebra argument shows that $C_{1,2}^{2}$
 coincides with the component of $X_{1}$ when the orthogonal projection
 of 
 \begin{equation*}
R\cos \varphi  \frac{\partial}{\partial x} + R \sin \varphi  \frac{\partial}{\partial y} 
\end{equation*}
onto $D$ is expressed in terms of the basis $Z_{1}, X_{1}, X_{2}$.
 A similar idea can be used to compute $C_{1, 3}^{3}$. Using these observations
 and with the aid of MAPLE\texttrademark\, we obtain:
 \begin{equation}
 \label{E:trace-Chap-Top}
C_{1, 2}^{2}+C_{1, 3}^{3}=\frac{m\ell R \sin^3 \theta}{\mbox{det}
(T)} \left ( a_1(\theta) \cos \psi + a_2(\theta) \cos (2\psi)  + b_1(\theta) \sin \psi + b_2(\theta)
\sin (2\psi) \right )
\end{equation}
where $T$ is the (positive definite) matrix 
\begin{equation*}
T=\left ( \begin{array}{ccc} 
\mathcal{G}( Z_{1},  Z_{1}) & \mathcal{G}( Z_{1},  X_{1}) &  \mathcal{G}( Z_{1},  X_{2}) \\
 \mathcal{G}( Z_{1},  X_{1}) &\mathcal{G}( X_{1},  X_{1})&\mathcal{G}( X_{1},  X_{2})
 \\
  \mathcal{G}( Z_{1},  X_{2}) &\mathcal{G}( X_{1},  X_{2})&\mathcal{G}( X_{2},  X_{2})
 \end{array} \right ),
\end{equation*}
 and the coefficient functions $a_1, a_2, b_1, b_2$ are given by:
 \begin{equation*}
\begin{split}
a_1(\theta)&=-I_{13}\left ( \frac{3m\ell R}{2} + (I_{22} + m(R^2+\ell^2	))\cos (\theta) +  \frac{m\ell R}{2}\cos (2\theta) \right ), \\
a_2(\theta)&=I_{13}I_{23}\sin \theta , \\
b_1(\theta)&=I_{23}\left ( \frac{3m\ell R}{2} + (I_{11} + m(R^2+\ell^2	))\cos (\theta) +  \frac{m\ell R}{2}\cos (2\theta) \right ), \\
b_2(\theta)&= \frac{\sin \theta}{2}\left ( (I_{22}-I_{11})(I_{33}+mR^2+m\ell R\cos \theta)-I_{23}^2+I_{13}^2 \right ).
\end{split}
\end{equation*}

 The necessary condition for the existence of an invariant measure, coming from step 3
 of the algorithm, is that the expression \eqref{E:trace-Chap-Top}
 vanishes identically for all $(\theta, \psi)$ in the chart, that is, for all $0<\theta < \pi$, $0<\psi<2\pi$. By linear independence of $\cos \psi, \cos(2\psi), \sin \psi, \sin(2\psi)$, it follows that this
 can only happen if either $\ell=0$ or all of the coefficient functions $a_1, a_2, b_1, b_2$ vanish.
 A quick examination of the above expressions, shows that the latter case can only hold if
 $I_{23}=I_{13}=0$ and $I_{11}=I_{22}$.
 
 That these conditions are also sufficient is proven in \cite{Yar} (see also \cite{BMK}).
 There it is shown that the reduced equations of motion can be presented in the vectorial form
 \begin{equation}
 \label{E:Motion-Chap-Top}
\dot {\bf K} = {\bf K} \times \boldsymbol{\Omega} + mR(\boldsymbol{\gamma} \times \boldsymbol
{\Omega}) \times ( \boldsymbol{\Omega} \times \boldsymbol{\rho}), \qquad \dot{\boldsymbol{\gamma}}=\boldsymbol{\gamma}\times \boldsymbol{\Omega},
\end{equation}
and possess the invariant measure:
\begin{equation*}
\frac{d\boldsymbol{\gamma} \wedge d{\bf K} }{\sqrt{I_{11}I_{33}+m\langle \boldsymbol{\rho}, \I  \boldsymbol{\rho} \rangle}} = (I_{11}+m|| \boldsymbol{\rho}||^2)\sqrt{I_{11}I_{33}+m\langle \boldsymbol{\rho}, \I  \boldsymbol{\rho} \rangle} \, d\boldsymbol{\gamma} \wedge d\boldsymbol{ \Omega}\,.
\end{equation*}
In the above formulae $ \boldsymbol{\rho}$ is the vector that connects the contact point with the
center of mass of the sphere written in body coordinates, and ${\bf K}$ is the angular momentum of the ball with respect
to the contact point, also written with respect to the body frame. Explicitly we have:
\begin{equation*}
 \boldsymbol{\rho}=R\boldsymbol{\gamma}+ \left ( \begin{array}{c} 0 \\ 0 \\ \ell \end{array} \right ),
 \qquad {\bf K}=\I \boldsymbol{ \Omega} + m \boldsymbol{\rho} \times (\boldsymbol{ \Omega}\times  \boldsymbol{\rho}).
\end{equation*}

 Therefore we have:
\begin{theorem}
\label{T:Chaplygin-top}
The reduced equations for the Chaplygin top possess an invariant measure if and only if at least one of the following two conditions hold:
\begin{enumerate}
\item The center of mass of the sphere coincides with the geometric center ($\ell =0$).
\item The ball is axially symmetric ($I_{11}=I_{22}, I_{13} =I_{23}=0$).
\end{enumerate}
\end{theorem}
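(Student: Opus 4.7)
The plan is to apply the algorithm of Section \ref{S:algorithm} to the data already set up in the excerpt, namely the vector fields $v_1=Z_1$, $v_2=Y_1$, $v_3=Y_2$ and the explicit formula \eqref{E:trace-Chap-Top} for $C_{1,2}^{2}+C_{1,3}^{3}$. Since $\mathcal{V}^Dp/G$ is one-dimensional and spanned by $Z_1$, the only equation arising from condition (i) of Theorem \ref{T:Intrinsic-Cond_Unimd} (equivalently the Step~3 condition \eqref{E:cond1}) is the vanishing of the right-hand side of \eqref{E:trace-Chap-Top} for all $(\theta,\psi)$ in the coordinate chart on $\widehat Q=S^2$.

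For the necessity direction I would argue as follows. The factor $m\ell R\sin^{3}\theta/\det(T)$ is nonzero throughout the chart unless $\ell=0$, so either $\ell=0$ (yielding the first case) or the trigonometric polynomial
\[
a_1(\theta)\cos\psi+a_2(\theta)\cos(2\psi)+b_1(\theta)\sin\psi+b_2(\theta)\sin(2\psi)
\]
must vanish identically. By linear independence of $\cos\psi,\cos(2\psi),\sin\psi,\sin(2\psi)$ on any open $\psi$-interval, each of $a_1,a_2,b_1,b_2$ must vanish for all $\theta\in(0,\pi)$. The coefficient $a_2(\theta)=I_{13}I_{23}\sin\theta$ forces $I_{13}I_{23}=0$. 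Inspecting $a_1(\theta)$ and $b_1(\theta)$ as polynomials in $\cos\theta$ whose $\cos\theta$-coefficients involve the strictly positive factors $I_{22}+m(R^2+\ell^2)$ and $I_{11}+m(R^2+\ell^2)$ respectively (together with the nonzero constant $3m\ell R/2$ in the degree-zero term), one deduces $I_{13}=I_{23}=0$. Substituting this into $b_2(\theta)\equiv 0$ and using positivity of $I_{33}+mR^2+m\ell R\cos\theta$ for $\ell\le R$ forces $I_{11}=I_{22}$, producing the axial symmetry condition in the case $\ell\neq 0$.

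For sufficiency I would invoke the results already cited in the paragraph preceding the statement of the theorem: when $\ell=0$ the invariant measure of the Chaplygin sphere applies, and when the ball is axially symmetric the explicit measure
\[
(I_{11}+m\|\boldsymbol{\rho}\|^{2})\sqrt{I_{11}I_{33}+m\langle\boldsymbol{\rho},\I\boldsymbol{\rho}\rangle}\,d\boldsymbol{\gamma}\wedge d\boldsymbol{\Omega}
\]
displayed in the excerpt is preserved by the reduced equations \eqref{E:Motion-Chap-Top}. Alternatively, one could complete the algorithm by verifying Step~5: condition (ii) of Theorem \ref{T:Intrinsic-Cond_Unimd} reduces to finding $\widehat\sigma\in C^{\infty}(S^{2})$ satisfying \eqref{E:cond2}; because $H^{1}_{\mathrm{dR}}(S^{2})=0$ and the chart $(\theta,\psi)$ is dense in $S^{2}$, closedness of $\widehat\omega_{\Phi}$ on the chart (which can be checked directly in either case) implies exactness, and the resulting local primitive extends globally, as explained in the remark at the end of Section \ref{S:algorithm}.

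The main obstacle I anticipate is the careful bookkeeping in the necessity argument: one must rule out the mixed cases $I_{13}=0$, $I_{23}\neq 0$ and vice versa. The key observation that unlocks this is that $a_2$ enforces $I_{13}I_{23}=0$, while $a_1$ and $b_1$ are independent linear combinations in $I_{13}$ and $I_{23}$ whose coefficients as polynomials in $\cos\theta$ are strictly sign-definite when $\ell\neq 0$. Once this symmetry breaking is handled, the remaining scalar condition $b_2\equiv 0$ cleanly yields $I_{11}=I_{22}$, and the sufficiency part is essentially a citation.
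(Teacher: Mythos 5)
Your proposal is correct and follows essentially the same route as the paper: necessity comes from the Step~3 condition applied to \eqref{E:trace-Chap-Top}, using linear independence of $\cos\psi,\cos(2\psi),\sin\psi,\sin(2\psi)$ to force $a_1=a_2=b_1=b_2\equiv 0$ unless $\ell=0$, and sufficiency is obtained by citing the explicit invariant measures of \cite{chapsphere} and \cite{Yar,BMK}. Your coefficient analysis merely spells out the ``quick examination'' the paper leaves implicit, and does so correctly.
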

We stress that the above conditions were known to be sufficient but we have just shown that they are also
necessary.
\subsection{Dynamically balanced sphere rolling on the exterior/interior of a circular cylinder} 

 We consider the motion of an inhomogeneous sphere, whose center of mass coincides with its geometric center $C$,
that rolls without slipping on the surface of an infinitely long vertical circular cylinder of radius $|r|$. 
If the ball is homogeneous the problem, for general shapes of  cylinders,   was considered in \cite{St}, and is integrable as was shown in \cite{BMK}. We also mention that the problem of a homogeneous ball rolling on   circular cylinder in the presence of gravity was considered by Routh.

The space frame is chosen so that the $z$-axis coincides with the axis of the cylinder. 
Let $\vartheta$ be the 
polar angle of the center of the sphere $C$ on the $xy$ plane. Then, the space coordinates of $C$ are $((R+r)\cos \vartheta, (R+r)\sin\vartheta, z)$ where $R$ is the radius of the
ball.  There are two regions of the parameter $r$ that are physically meaningful. If $r<-R$ this corresponds to 
a sphere rolling on the interior of a circular cylinder of larger radius. If $r>0$, the sphere rolls on the exterior of
a circular cylinder of positive radius.
 Figure
\ref{F:BallonCylinder} below illustrates the case $r>0$.

\begin{figure}[ht]
\centering
\includegraphics[width=5cm]{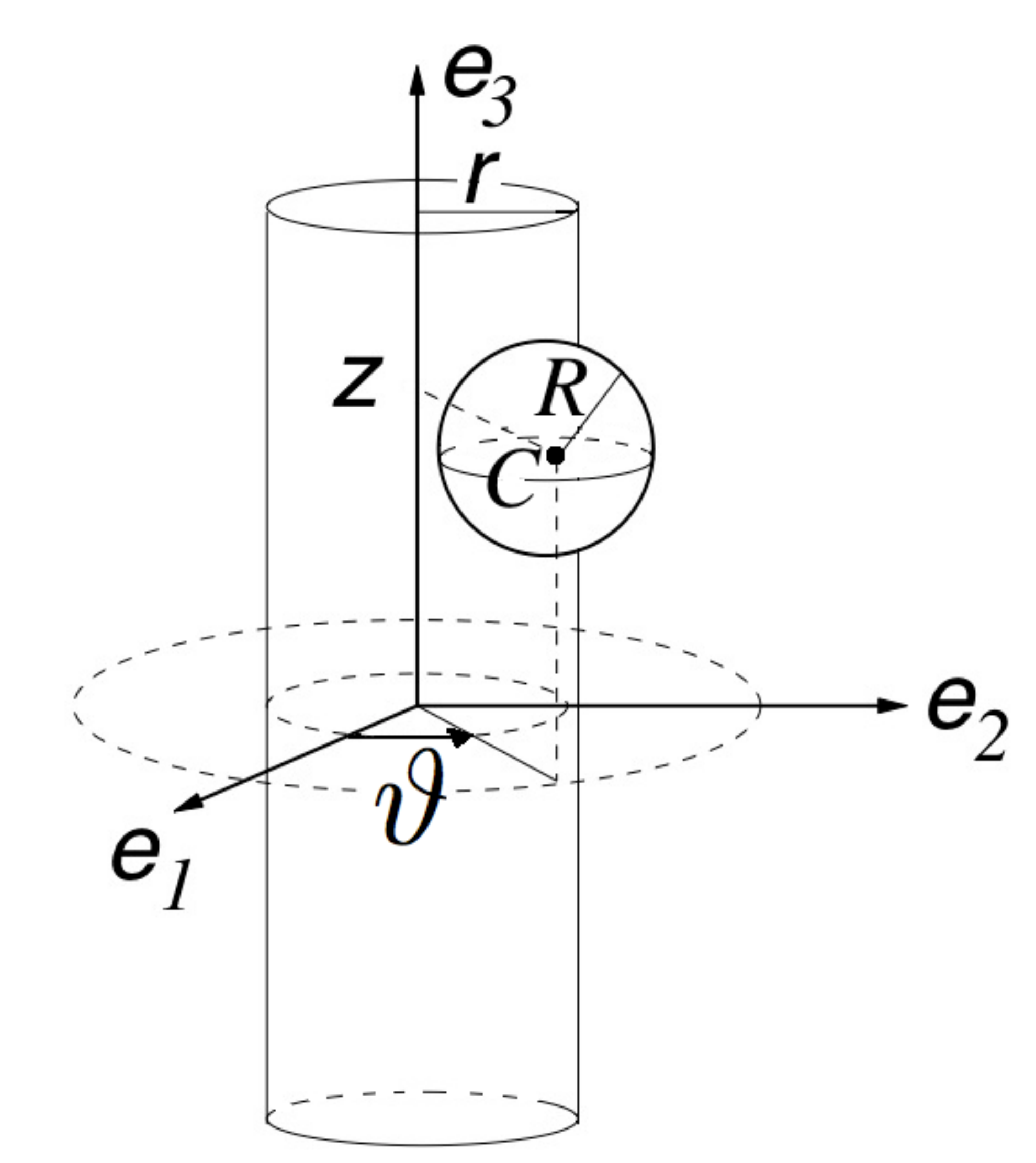}
\caption{\small{Dynamically balanced ball rolling on the exterior of a cylinder ($r>0$).} \label{F:BallonCylinder}}
\end{figure}

 The configuration of the ball is completely determined by the position of its center $C$ and by a rotation
matrix $g\in SO(3)$ that relates the space frame with a fixed body frame, whose origin lies on the center
of the sphere $C$ and that we assume to be aligned with the
principal axes of inertia of the sphere. Hence the configuration space for the problem is
$Q=SO(3)\times \R\times S^1$. Our coordinates for $\R\times S^1$ are $(z, \vartheta)$.

The constraint of rolling without slipping is obtained by requiring that the contact point of the ball
with the cylinder is at rest. This gives
\begin{equation}
\label{E:Const-ball-cylinder}
\dot \vartheta =  \left ( \frac{R}{R+r}\right )\omega_3, \qquad \dot z=R(\omega_1\sin \vartheta -\omega_2 \cos \vartheta),
\end{equation}
where $\boldsymbol{\omega}=(\omega_1,\omega_2,\omega_3)$ is the angular velocity of the sphere written
in space coordinates. In terms of the Euler angle convention introduced in the previous examples,
we have
\begin{equation*}
\dot \vartheta = \left ( \frac{R}{R+r}\right ) \left ( \dot 
\varphi +  \cos \theta \dot \psi  \right ), \qquad \dot z=R\left (- \sin (\varphi- \vartheta)\dot \theta + \sin \theta
\cos  (\varphi- \vartheta) \dot \psi \right ).
\end{equation*}

Since the center of mass of the sphere coincides with the geometric center, the kinetic energy is the
sum of the rotational and translational kinetic energies:
\begin{equation}
\label{E:Kinetic-Energy-ball-cylinder}
K= \frac 12 \langle \I \boldsymbol{\Omega}, \boldsymbol{\Omega} \rangle + \frac{m}{2}\left ( (R+r)^2 \dot \vartheta^2 + \dot z^2 
\right ).
\end{equation}
Here, $\boldsymbol{\Omega}$ denotes the angular velocity written in the body frame and 
$ \I $ is the inertia tensor that is a diagonal $3\times 3$ positive definite matrix $ \I=diag(I_1, I_2,I_3)$ by our choice of body frame.
 
\medskip{{\bf Symmetries}} 

Since both the orientation and the height
of the space frame is arbitrary, we expect to have a $G=\R\times S^1$ symmetry.
The group action on a point $(g,z,\vartheta)$ in the configuration space $Q=SO(3)\times \R \times S^1$ is defined as:
\begin{equation*}
(z',\vartheta ') : (g,z,\vartheta) \mapsto (\frak{R}_{\vartheta '} g,z+z',\vartheta+ \vartheta ')
\end{equation*}
where %$\frak{R}_{\vartheta '}$ is defined in analogy with \eqref{E:DefR}.
\begin{equation}
\label{E:DefR}
\frak{R}_{\vartheta '}=\left ( \begin{array}{ccc} \cos \vartheta ' & -\sin \vartheta ' & 0\\
\sin \vartheta ' & \cos \vartheta ' & 0 \\ 0 & 0 & 1 \end{array} \right ).
\end{equation}

This is a left action that with our choice of Euler angles is locally given by
\begin{equation*}
(z',\vartheta ') : (\varphi,\theta, \psi,z,\vartheta) \mapsto (\varphi+\vartheta ',\theta, \psi,z+z',\vartheta+ \vartheta ').
\end{equation*}
One can check that both the kinetic energy and the constraints are invariant under the lift
of this action to $TQ$.

The action of $G=\R\times S^1$ on $Q$ is free and proper and the shape space $\widehat Q =Q/G =SO(3)$.
In our local coordinates the orbit projection $p:Q\to SO(3)$ is given by
\begin{equation}
\label{E:Projection-onto-SO(3)}
p(\varphi, \psi, \theta, z,\vartheta) = ( \xi , \psi , \theta),
\end{equation} 
where $\xi =\varphi-\vartheta$, and $( \xi , \psi , \theta)$ are Euler angles for 
$SO(3)$ chosen with respect to
 the $x$-convention.

The vertical subbundle $\mathcal{V}p$ is spanned by 
\begin{equation*}
\mathcal{V}p=\mbox{span} \left \{ \frac{\partial}{\partial \varphi} + \frac{\partial}{\partial \vartheta}\,  , \,    \frac{\partial}{\partial z} \, \right \}.
\end{equation*}
On the other hand, the  3-dimensional constraint distribution $D$ is spanned by the $G$- invariant vector fields
\begin{equation}
\label{E:VF-ball-cylinder}
\begin{split}
Y_1 &=-\frac{\cos \theta \sin (\varphi- \vartheta )}{\sin \theta}\frac{\partial }{\partial \varphi}+\cos (\varphi- \vartheta) \frac{\partial }{\partial \theta} + \frac{ \sin (\varphi- \vartheta)}{\sin \theta}\frac{\partial }{\partial \psi} , \\
Y_2&=R\frac{\partial}{\partial z} - \frac{\cos \theta \cos (\varphi- \vartheta)}{\sin \theta}\frac{\partial }{\partial \varphi}-\sin (\varphi- \vartheta)\frac{\partial }{\partial \theta} +\frac{ \cos (\varphi- \vartheta)}{\sin \theta}\frac{\partial }{\partial \psi}, \\
Y_3&=\frac{\partial }{\partial \varphi}+\left ( \frac{R}{R+r}\right )\frac{\partial }{\partial \vartheta}.
\end{split}
\end{equation}

It is then clear that  $\mathcal{V}p \cap D =\{0\}$ everywhere on $Q$. Therefore we have $TQ=D\oplus \mathcal{V}p$, the dimension assumption is satisfied and we are dealing with
a Chaplygin system. The vector fields $Y_1, Y_2, Y_3$ defined above verify the requirements
of step 1 in the algorithm. Moreover, the condition in step 3 is vacuously satisfied.
We thus concentrate in examining the condition in step 5. 

The projections $\widehat Y_\alpha =Tp (Y_\alpha)$ are given by
\begin{equation*}
\begin{split}
\widehat Y_1 &=-\frac{\cos \theta \sin \xi}{\sin \theta}\frac{\partial }{\partial \xi}+\cos \xi \frac{\partial }{\partial \theta} + \frac{ \sin \xi}{\sin \theta}\frac{\partial }{\partial \psi} , \\
\widehat Y_2&= - \frac{\cos \theta \cos \xi}{\sin \theta}\frac{\partial }{\partial \xi}-\sin \xi\frac{\partial }{\partial \theta} +\frac{ \cos \xi}{\sin \theta}\frac{\partial }{\partial \psi}, \\
\widehat Y_3&= \left ( \frac{r}{R+r}\right )\frac{\partial }{\partial \xi}.
\end{split}
\end{equation*}
 
 Following step 5, we consider the one-form $\widehat \omega$ that satisfies
\begin{equation*}
\begin{split}
\widehat \omega (\widehat Y_1) &=\frac{\partial}{\partial \xi}\left (-\frac{\cos \theta \sin \xi}{\sin \theta} \right ) + \frac{\partial}{\partial \theta} (\cos (\xi)) + \frac{\partial}{\partial \psi} \left ( \frac{ \sin \xi}{\sin \theta} \right ) + f_1(\xi, \theta , \psi) =-\frac{\cos \theta \cos \xi}{\sin \theta} + f_1(\xi, \theta , \psi), 
 \\
\widehat \omega (\widehat Y_2) &=\frac{\partial}{\partial \xi}\left (-\frac{\cos \theta \cos \xi}{\sin \theta} \right ) + \frac{\partial}{\partial \theta} (-\sin (\xi)) + \frac{\partial}{\partial \psi} \left ( \frac{ \cos \xi}{\sin \theta} \right ) + f_2(\xi, \theta , \psi) =\frac{\cos \theta \sin \xi}{\sin \theta} + f_2(\xi, \theta , \psi),\\
\widehat \omega (\widehat Y_3) &=\frac{\partial}{\partial \xi}\left (\frac{r}{R+r} \right ) + f_3(\xi, \theta , \psi) = f_3(\xi, \theta , \psi),
\end{split}
\end{equation*}
where $f_\alpha (\xi, \theta , \psi) = C_{\alpha  \beta}^\beta$ for $\alpha=1,2,3$.
The one-form $\widehat \omega$ is uniquely determined by the above conditions (this is always the case when the dimension assumption holds). Using the explicit expressions for 
$\widehat Y_\alpha$ we find
\begin{equation*}
\begin{split}
\widehat \omega =& \left ( \frac{R+r}{r} \right ) f_3\, d\xi +\left ( - \frac{\cos \theta}{\sin \theta} +
f_1\cos \xi - f_2\sin \xi \right ) \, d\theta \, + \\
& \qquad + \left ( \sin \theta \sin \xi f_1 + \sin \theta \cos \xi f_2 +  \left ( \frac{R+r}{r} \right )\cos \theta f_3 \right ) \, d\psi \, .
\end{split}
\end{equation*}
According to the algorithm, since $(\xi, \theta, \psi)$ are coordinates
in a chart that is open and dense in $SO(3)$, and the first de Rham cohomology group of
$SO(3)$ is trivial, a preserved measure exists if and only if $\widehat \omega$ is closed.

With the aid of a symbolic algebra software one can compute explicit expressions for
$f_\alpha(\xi, \theta, \psi)$. These expressions are too long to be included in the present paper but we can provide the MAPLE\texttrademark \, file upon request. Below we discuss our findings.

 If the sphere is homogeneous ($I_1=I_2=I_3$) one gets
$f_1=f_2=f_3=0$ which implies $\widehat \omega=-d\ln(\sin (\theta))$ and there exists an invariant measure. This result is well known (see e.g. \cite{BMK}). 
We will give an explicit formula for the preserved volume and the reduced equations of
motion at the end of section \ref{S:Ball-on-wire}.

In the general case, the condition that $\widehat \omega$ is closed implies that the functions
\begin{equation*}
\begin{split}
G_1(\xi, \theta , \psi)&:=\left ( \frac{R+r}{r} \right )\frac{\partial f_3}{\partial \theta} -\frac{\partial }{\partial \xi } \left ( f_1\cos \xi - f_2\sin \xi \right ) , \\
G_2(\xi, \theta , \psi)&:=\left ( \frac{R+r}{r} \right )\frac{\partial f_3}{\partial \psi} -\frac{\partial }{\partial \xi }\left ( \sin \theta \sin \xi f_1 + \sin \theta \cos \xi f_2 +  \left ( \frac{R+r}{r} \right )\cos \theta f_3 \right ),
\end{split}
\end{equation*}
vanish identically in the chart $0<\xi<2\pi, \, 0<\theta <\pi, \, 0<\psi <2\pi$. However we found that 
\begin{equation*}
G_1\left (\frac{\pi}{2}, \frac{\pi}{2} , \frac{\pi}{3} \right )=(I_2-I_1) \left ( \frac{R}{R+r} \right ) \left (\frac{\sqrt{3}mR^2}{4((I_1+I_2)mR^2+I_1I_2+(mR^2)^2)} 
\right ),
\end{equation*}
 Hence, a necessary condition for the existence of an invariant measure is
that $I_1=I_2$. Under this assumption one finds:
\begin{equation*}
G_2\left (\frac{\pi}{2}, \frac{\pi}{4} , \frac{\pi}{3} \right )=(I_3-I_1)\left (\frac{\sqrt{2} mR^2(mR^2+\frac{r}{R+r}(I_1+mR^2))}{(mR^2(I_1+I_3)+2I_1I_3)(I_1+mR^2)} \right ),
\end{equation*}
that can only vanish if $I_1=I_3$. Therefore, we have shown
\begin{theorem}
\label{T:Ball-on-cylinder}
The reduced equations for a dynamically balanced sphere that rolls without slipping on the exterior/interior of an infinite circular cylinder possess an invariant measure if and only if the ball is homogeneous ($I_1=I_2=I_3$).
\end{theorem}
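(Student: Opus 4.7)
The plan is to apply the algorithm of Section \ref{S:algorithm} together with Corollary \ref{Cantrijn-Leon-Martin}. Since $\mathcal{V}p\cap D=\{0\}$ everywhere, the problem is a $G$-Chaplygin system with $G=\mathbb{R}\times S^1$ and shape space $\widehat{Q}=SO(3)$. Condition \textbf{C1} holds because $H^1_{dR}(SO(3))=0$, and the Euler angles $(\xi,\theta,\psi)$ provide a chart on an open dense subset of $SO(3)$, so \textbf{C2} holds as well. Hence Corollary \ref{Cantrijn-Leon-Martin} reduces the problem to checking whether the globally-defined one-form $\widehat{\omega}_\Phi$ on $SO(3)$ is exact, which by \textbf{C1} is equivalent to being closed. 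Moreover, since the dimension assumption is satisfied, Step 3 of the algorithm is vacuous and $\widehat{\omega}$ is uniquely determined.

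The sufficiency direction is known: when $I_1=I_2=I_3$ the kinetic energy is bi-invariant under $SO(3)$, the coefficients $C_{\alpha\beta}^\beta$ vanish, and one already sees in the text that $\widehat{\omega}=-d\ln(\sin\theta)$, so the invariant volume exists. For the necessity direction, I would proceed as follows. First, write down the projections $\widehat{Y}_1,\widehat{Y}_2,\widehat{Y}_3$ of the basis \eqref{E:VF-ball-cylinder} as already computed in the excerpt. Then, using the local characterization from Step 5, the one-form $\widehat{\omega}$ takes the explicit shape
\begin{equation*}
\widehat{\omega}=\Bigl(\tfrac{R+r}{r}\Bigr)f_3\,d\xi+\Bigl(-\tfrac{\cos\theta}{\sin\theta}+f_1\cos\xi-f_2\sin\xi\Bigr)d\theta+\Bigl(\sin\theta\sin\xi\,f_1+\sin\theta\cos\xi\,f_2+\tfrac{R+r}{r}\cos\theta\,f_3\Bigr)d\psi,
\end{equation*}
where $f_\alpha(\xi,\theta,\psi)=C_{\alpha\beta}^\beta$ must be computed from the definition in Step 4, i.e.\ by taking the Lie brackets $[Y_\alpha,Y_\beta]$, $\mathcal{G}$-orthogonally projecting them onto $D$, and reading off the diagonal coefficients with respect to $\{Y_1,Y_2,Y_3\}$.

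The central computational step, and the main obstacle, is producing the explicit $\xi,\theta,\psi$-dependence of $f_1,f_2,f_3$. The bracket and projection computations involve the $3\times 3$ Gram matrix of the $Y_\alpha$ with respect to the metric \eqref{E:Kinetic-Energy-ball-cylinder}, whose entries mix the principal moments $I_1,I_2,I_3$ with $mR^2$ and trigonometric functions of $(\xi,\theta)$; the inverse of this Gram matrix contributes rational trigonometric factors to $f_\alpha$. This is best handled with a computer algebra system, which I would run once, store the expressions for $f_1,f_2,f_3$, and then substitute into
\begin{equation*}
G_1:=\tfrac{R+r}{r}\tfrac{\partial f_3}{\partial\theta}-\tfrac{\partial}{\partial\xi}(f_1\cos\xi-f_2\sin\xi),\qquad G_2:=\tfrac{R+r}{r}\tfrac{\partial f_3}{\partial\psi}-\tfrac{\partial}{\partial\xi}\bigl(\sin\theta\sin\xi\,f_1+\sin\theta\cos\xi\,f_2+\tfrac{R+r}{r}\cos\theta\,f_3\bigr),
\end{equation*}
which are two of the components of $d\widehat{\omega}$. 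Closedness requires $G_1\equiv G_2\equiv 0$ on the chart.

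Rather than trying to prove $G_1=G_2=0$ holds on the whole chart as a symbolic identity in $(I_1,I_2,I_3)$, I would obstruct it by plugging in judicious numerical values of the angles, exactly as indicated in the excerpt. Evaluating $G_1$ at $(\xi,\theta,\psi)=(\pi/2,\pi/2,\pi/3)$ yields an expression proportional to $(I_2-I_1)$, which forces $I_1=I_2$. Under the additional hypothesis $I_1=I_2$, evaluating $G_2$ at $(\pi/2,\pi/4,\pi/3)$ gives an expression proportional to $(I_3-I_1)$ whose remaining factor is manifestly positive, forcing $I_1=I_3$. Thus $I_1=I_2=I_3$, concluding the necessity. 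Combined with the sufficiency, the theorem follows.
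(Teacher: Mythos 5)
Your proposal is correct and follows essentially the same route as the paper: recognize the Chaplygin structure so that only exactness (equivalently, by $H^1_{dR}(SO(3))=0$, closedness) of the uniquely determined one-form $\widehat{\omega}$ matters, compute $f_1,f_2,f_3$ symbolically, and obstruct closedness by evaluating the components $G_1$ and $G_2$ of $d\widehat{\omega}$ at the specific points $(\pi/2,\pi/2,\pi/3)$ and $(\pi/2,\pi/4,\pi/3)$ to force $I_1=I_2$ and then $I_1=I_3$. No substantive difference from the paper's argument.
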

 
It is interesting to notice that in the limit when the radius $r$ of the cylinder is infinitely larger than the radius of the sphere one recovers the classical Chaplygin sphere problem that possesses an invariant measure for arbitrary distributions of mass. The other limit case, when the radius of the cylinder $r\to 0$ will be considered in the following section.

To our knowledge, this is the first  mechanical example of a Chaplygin system that has been formally proved not to possess an invariant measure. A mathematical example  had been given in \cite{CaCoLeMa}. 
 
\subsection{Dynamically balanced ball  rolling on a vertical wire}
\label{S:Ball-on-wire}

Consider the motion of an inhomogeneous sphere, whose center of mass coincides with its geometric center,
that rolls without slipping on an infinitely long vertical wire. This corresponds to setting $r= 0$ in the previous section. We follow the same notation and
definitions for $z$ and $\vartheta$.

%as shown in the figure.
%\begin{figure}[h]
%\label{F:esfera-wire}
%\centering
%\includegraphics[width=10cm]{chap-ball-wire.pdf}
%\caption{\small{Dynamically balanced ball  rolling on a vertical wire}}
%\end{figure}
%This corresponds to setting $r= 0$ in the previous section. We follow the same notation and
%definitions for $z$ and $\vartheta$. 

Putting $r=0$, the constraints of rolling without slipping \eqref{E:Const-ball-cylinder}    become
\begin{equation}
\label{E:Const-ball-on-wire}
\dot \vartheta =  \omega_3, \qquad \dot z=R(\omega_1\sin \vartheta -\omega_2 \cos \vartheta).
\end{equation}
Note that $D$ is still generated by the invariant vector fields $Y_1, Y_2, Y_3$ defined in \eqref{E:VF-ball-cylinder}. However, the problem cannot be treated in the framework of the last section since  the 
vector field $Y_3$ becomes
\begin{equation*}
Z_{1}:=\frac{\partial }{\partial \varphi}+\frac{\partial }{\partial \vartheta},
\end{equation*}
that lies on the vertical subbundle $\mathcal{V}p$. Therefore, in this case, the intersection
$\mathcal{V}^Dp$ has constant rank 1 and we are no longer dealing with a Chaplygin system.
Moreover, notice that 
\begin{equation*}
\dim(D+\mathcal{V}p)=4 < 5=\dim(TQ),
\end{equation*}
at all points $q\in Q$. Hence the dimension assumption is not satisfied.

We follow the steps of the algorithm to analyze the existence of an invariant measure for the reduced equations.
The first thing that we need to do, coming from step 1, is to redefine the vector fields $Y_1, Y_2$ in \eqref{E:VF-ball-cylinder} so that they
are $\mathcal{G}$-orthogonal to $Z_{1}$. For the sake of clearness we change the notation of  the vector fields
in \eqref{E:VF-ball-cylinder}  to
\begin{equation*}
\begin{split}
X_{1} &:=-\frac{\cos \theta \sin (\varphi- \vartheta )}{\sin \theta}\frac{\partial }{\partial \varphi}+\cos (\varphi- \vartheta) \frac{\partial }{\partial \theta} + \frac{ \sin (\varphi- \vartheta)}{\sin \theta}\frac{\partial }{\partial \psi} , \\
X_{2}&:=R\frac{\partial}{\partial z} - \frac{\cos \theta \cos (\varphi- \vartheta)}{\sin \theta}\frac{\partial }{\partial \varphi}-\sin (\varphi- \vartheta)\frac{\partial }{\partial \theta} +\frac{ \cos (\varphi- \vartheta)}{\sin \theta}\frac{\partial }{\partial \psi}, 
\end{split}
\end{equation*}
and define:
\begin{equation*}
Y_\alpha := X_\alpha - \frac{\mathcal{G}(X_\alpha, Z_{1})}{\mathcal{G}(Z_{1}, Z_{1})} \, Z_{1} \, , \qquad \alpha=1,2.
\end{equation*}
Once again, to avoid confusion with the use of subindices, during the treatment of this example we denote
the vector fields $Z_1, Y_1, Y_2$ respectively by $v_1, v_2, v_3$.

For simplicity we only consider the case in which the ball is axially symmetric. We assume that the inertia tensor is
$\I=diag(I_1,I_1,I_3)$. In this case the local expression for the metric tensor $\mathcal{G}$ is
\begin{equation*}
\mathcal{G}=(I_1\sin ^2\theta +I_3\cos^2\theta)\, d\varphi^2 +I_1\, d\theta^2 +I_3\, d\psi^2+ I_3\cos\theta \, d\varphi \otimes
d\psi +mR^2\, d \vartheta^2 +m\, dz^2,
\end{equation*}
and we get
\begin{equation*}
\begin{split}
v_2=Y_1 &=\left( \frac{-(I_3+mR^2) \cos \theta \sin (\varphi- \vartheta )}{\sin \theta (I_1\sin ^2\theta +I_3\cos^2\theta +mR^2)} \right )\frac{\partial }{\partial \varphi}\,+\,\cos (\varphi- \vartheta) \frac{\partial }{\partial \theta} \,+\, \frac{ \sin (\varphi- \vartheta)}{\sin \theta}\frac{\partial }{\partial \psi} + \\ &\qquad \qquad +\,\left (\frac{(I_1-I_3) \cos \theta \sin \theta \sin (\varphi- \vartheta )}{I_1\sin ^2\theta  +I_3\cos^2\theta +mR^2} \right )\frac{\partial }{\partial \vartheta}\,, \\
v_3=Y_{2}&=R\frac{\partial}{\partial z} + \left( \frac{-(I_3+mR^2) \cos \theta \cos (\varphi- \vartheta )}{\sin \theta (I_1\sin ^2\theta +I_3\cos^2\theta +mR^2)} \right )\frac{\partial }{\partial \varphi}\,-\sin (\varphi- \vartheta)\frac{\partial }{\partial \theta} +\frac{ \cos (\varphi- \vartheta)}{\sin \theta}\frac{\partial }{\partial \psi} \, + \\ &\qquad \qquad +\,\left (\frac{(I_1-I_3) \cos \theta \sin \theta \cos (\varphi- \vartheta )}{I_1\sin ^2\theta  +I_3\cos^2\theta +mR^2} \right )\frac{\partial }{\partial \vartheta}\,.
\end{split}
\end{equation*}

It is readily seen that the commutators
\begin{equation*}
[v_1, v_J] =0, \qquad J=1, 2, 3.
\end{equation*}
Therefore, the coefficients $C_{1, J}^K=0$ for all $J, K$, and the condition in step 3 of the algorithm is verified.
We now concentrate in the the study of the condition given in step 5. Recall that the principal 
bundle projection $p:Q\to SO(3)$ is given by \eqref{E:Projection-onto-SO(3)}. The projected vector fields $\widehat v_J =Tp(v_J)$, $J=2,3$, are then
\begin{equation*}
\begin{split}
\widehat v_{2} =-\frac{\cos \theta \sin \xi}{\sin \theta}\frac{\partial }{\partial \xi}+\cos \xi \frac{\partial }{\partial \theta} + \frac{ \sin \xi}{\sin \theta}\frac{\partial }{\partial \psi} , \qquad
\widehat v_{3}= - \frac{\cos \theta \cos \xi}{\sin \theta}\frac{\partial }{\partial \xi}-\sin \xi\frac{\partial }{\partial \theta} +\frac{ \cos \xi}{\sin \theta}\frac{\partial }{\partial \psi}.
\end{split}
\end{equation*}
Following step 5 of the algorithm we consider the family $\mathfrak{R}$ of one-forms $\widehat \omega$
that satisfy
\begin{equation*}
\widehat \omega (\widehat v_{2} )= -\frac{\cos \theta \cos \xi}{\sin \theta} +f_1(\xi, \theta) \, , \qquad \widehat \omega (\widehat v_{3} )= \frac{\cos \theta \sin \xi}{\sin \theta} + f_2(\xi, \theta) \, ,
\end{equation*}
where
\begin{equation*}
f_1(\xi, \theta)= C_{2 , 3}^{3} \, , \qquad        f_2(\xi, \theta) = C_{3 , 2}^{2} \, .
\end{equation*}
The independence of $f_1, f_2$ on $\psi$ is obvious since all objects involved in the calculation are independent
of $\psi$ (see Remark \ref{R:ExtraSymmetry} below). The above relationships imply
that the  family $\mathcal{R}$  can be described as:
\begin{equation}
\label{E:Familyofoneforms}
\widehat \omega =a(\xi, \theta) \, d\theta + 
 b(\xi, \theta) \, d\psi \, +\, \lambda (d\xi + \cos \theta \, d\psi ) \, ,
\end{equation}
where $\lambda =\lambda(\xi, \theta,\psi)$ is arbitrary, and
\begin{equation*}
a(\xi, \theta):=  -\frac{\cos\theta}{\sin \theta} + f_1 \cos \xi - f_2\sin \xi, \qquad b(\xi, \theta):=  f_1\sin \theta\sin \xi + f_2 \sin \theta \cos \xi\, .
\end{equation*}
Step 5 of the algorithm states that a necessary  condition for the existence of an invariant
measure is that there exists a member of this family that is exact.

With the aid of a symbolic algebra software (we can provide the MAPLE\texttrademark \, file upon request), one gets:
\begin{equation*}
f_1(\xi, \theta) =\frac{mR^2I_1(I_3-I_1)}{D} \cos \theta \sin \theta \cos \xi \, , \qquad f_2(\xi, \theta) =\frac{mR^2(I_1+mR^2)(I_1-I_3)}{D} \cos \theta \sin \theta \sin \xi \, ,
\end{equation*}
where $D$ is the determinant of the symmetric positive definite matrix
\begin{equation*}
\left ( \begin{array}{ccc} 
\mathcal{G}( v_{1},  v_{1}) & 0 & 0 \\
0 &\mathcal{G}( v_{2},  v_{2})&\mathcal{G}( v_{2},  v_{3})
 \\
 0 &\mathcal{G}( v_{2},  v_{3})&\mathcal{G}( v_{3},  v_{3}) \end{array}
 \right ). 
\end{equation*}

\medskip{{\bf The homogeneous case}}

 In this case we have  $I_1=I_3$ and hence $f_1=f_2=0$. By taking
$\lambda=0$ we get
\begin{equation*}
\widehat \omega =  -\frac{\cos\theta}{\sin \theta}  \, d\theta \, = -d(\ln(\sin \theta))
\end{equation*}
which is exact. We shall prove that in this case there exists an invariant measure for the problem below by explicitly computing the reduced equations of motion and giving a formula
for the measure.

\medskip{{\bf The inhomogeneous case}} 

We now show that if $I_1\neq I_3$ then the one-form \eqref{E:Familyofoneforms}
is not closed for any smooth function $\lambda$. This implies that no member of $\mathfrak{R}$ is exact and, by step 5 of the algorithm, that there is no preserved measure. The condition $d\widehat \omega=0$ yields the relations
\begin{equation*}
\begin{split}
\frac{\partial \lambda}{\partial \theta}= \frac{\partial a}{\partial \xi} \, ,\qquad \lambda  = \frac{1}{\sin \theta} \left (  \frac{\partial b}{\partial \theta} +\cos \theta  \frac{\partial \lambda}{\partial \theta} \right ) \, ,\qquad \frac{\partial b}{\partial \xi} + \cos \theta \frac{\partial \lambda}{\partial \xi} =  \frac{\partial \lambda}{\partial \psi} \, .
\end{split}
\end{equation*}
Combining the first two relations we get an explicit formula for $\lambda$ that is independent of $\psi$. Substitution into the third
relation gives the following necessary condition for the existence of an invariant measure
\begin{equation*}
\frac{\partial b}{\partial \xi} +  \frac{\cos \theta}{\sin \theta} \left ( \frac{\partial ^2b}{\partial \xi \partial \theta }  + 
\cos \theta \frac{\partial ^2a}{\partial \xi ^2 }
\right ) =0.
\end{equation*}
However, using MAPLE\texttrademark \, we find
\begin{equation*}
\left . \frac{\partial b}{\partial \xi} +  \frac{\cos \theta}{\sin \theta} \left ( \frac{\partial ^2b}{\partial \xi \partial \theta }  + 
\cos \theta \frac{\partial ^2a}{\partial \xi ^2 }
\right ) \, \right |_{(\xi=\pi, \theta = \pi/4)}=-\frac{\sqrt{2}}{4} \left ( \frac{(I_1-I_3)^2(mR^2)^2}{I_1(I_3+mR^2)((mR^2)^2+I_1I_3+mR^2(I_1+I_3)} \right ),
\end{equation*}
that can only vanish if $I_1=I_3$. Therefore, we have shown that there is no preserved measure if $I_1\neq I_3$.

\begin{remark} 
\label{R:ExtraSymmetry}
The assumption that the ball is axially symmetric simplifies the analysis significantly since all
the quantities that appear are independent of the angle $\psi$. This is a consequence of an additional symmetry of
the problem that corresponds to rotations of the sphere about its axis of symmetry. The abelian Lie group $\tilde G= \R\times S^1\times S^1$  acts on the configuration space $Q=SO(3)\times \R\times S^1$ by
the rule
\begin{equation*}
(z',\vartheta ',\psi') : (g,z,\vartheta) \mapsto (\frak{R}_{\vartheta '} g \, \frak{R}_{\psi '},z+z',\vartheta+ \vartheta ')
\end{equation*}
where the matrices $\frak{R}_{\vartheta '} \, , \frak{R}_{\psi '}$ are defined as in \eqref{E:DefR}. This action is
free and proper, and leaves the constraints and the kinetic energy invariant. Nevertheless, the methods described in this paper and in \cite{ZeBo} cannot be used to analyze the existence of an invariant measure of the resulting reduced equations. The reason for this is that the rank of the intersection $\mathcal{V}^Dp=\mathcal{V}p\cap D$ is not 
constant throughout $Q$. If the axis of symmetry of the sphere is perpendicular to the vertical wire, then the rank 
of $\mathcal{V}^Dp$ is 2. 
In any other configuration this rank is 1. Therefore, we are forced to work with the action of the smaller symmetry
group $G=\R\times S^1$ (for which the dimension assumption is not satisfied).
\end{remark}

\medskip{{\bf The reduced equations and the expression for the invariant measure for a homogeneous sphere that rolls without slipping on a vertical cylinder and wire}}

The expression for the invariant measure for a homogeneous ball that rolls without slipping on a vertical cylinder or wire can be readily obtained
as we now show. We treat both problems simultaneously by considering $r<-R$ or $r\geq 0$ where,
as before,  $|r|$ denotes the radius of the cylinder (the case of the wire corresponds to 
the case $r=0$). We begin by introducing the {\em modified Poisson vectors}
\begin{eqnarray*}
\boldsymbol{\alpha}:=\cos \vartheta g^{-1}e_1 +\sin  \vartheta g^{-1}e_2, \qquad 
\boldsymbol{\beta}:=-\sin \vartheta g^{-1}e_1 +\cos  \vartheta g^{-1}e_2 , \qquad 
\boldsymbol{\gamma}:= g^{-1}e_3.
\end{eqnarray*}
They form an orthonormal basis of $\R^3$ and can be considered as the columns of an element in 
$SO(3)$. The constraints \eqref{E:Const-ball-cylinder} can be written in terms of these vectors
as
\begin{equation}
\label{E:Const-ball-on-wire-2}
\dot \vartheta = \left ( \frac{R}{R+r} \right ) \langle  \boldsymbol{\Omega} ,\boldsymbol{\gamma} \rangle \qquad \dot z=-R
 \langle  \boldsymbol{\Omega} ,\boldsymbol{\beta} \rangle .
 \end{equation}
Using the  constraints one can obtain the following kinematical evolution equations for the Poisson vectors\begin{equation}
\begin{split}
\label{E:Evolution-Poisson}
\dot{\boldsymbol{\alpha}}&=\left ( \frac{R}{R+r} \right ) \langle \boldsymbol{\gamma}, \boldsymbol{\Omega} \rangle \boldsymbol{\beta} + \boldsymbol{\alpha} \times \boldsymbol{\Omega} ,\\
\dot{\boldsymbol{\beta}}&=-\left ( \frac{R}{R+r} \right )\langle \boldsymbol{\gamma}, \boldsymbol{\Omega} \rangle \boldsymbol{\alpha} + \boldsymbol{\beta} \times \boldsymbol{\Omega} , \\
\dot{\boldsymbol{\gamma}}&= \boldsymbol{\gamma} \times \boldsymbol{\Omega} .
\end{split}
\end{equation}
Recall that  the Lagrangian of the system \eqref{E:Kinetic-Energy-ball-cylinder}
is given by
\begin{equation*}
\label{E:Kinetic-Energy-ball-wire}
\mathcal{K}= \frac 12 \langle \I \boldsymbol{\Omega}, \boldsymbol{\Omega} \rangle + \frac{m}{2}\left ( (R+r)^2 \dot \vartheta^2 + \dot z^2 
\right ),
\end{equation*}
and, in view of the constraints \eqref{E:Const-ball-on-wire-2}, we obtain the following equations of motion
\begin{eqnarray} \label{E:Motion-ball-wire}
\I \dot {\boldsymbol{\Omega}}&=& \I  \boldsymbol{\Omega} \times  \boldsymbol{\Omega} - \lambda_1 \left ( \frac{R}{R+r}\right )  \boldsymbol{\gamma} + \lambda_2 R\boldsymbol{\beta} \\
m(R+r)^2\ddot \vartheta &=&\lambda_1, \nonumber \\
m\ddot z &=& \lambda_2, \nonumber
\end{eqnarray}
where $\lambda_1, \lambda_2 \in \R$ are Lagrange multipliers. By differentiating the constraints  \eqref{E:Const-ball-on-wire-2}, we obtain the following expressions for the Lagrange multipliers:
\begin{equation*}
\begin{split}
\lambda_1&=m(R+r)^2\ddot \vartheta =mR(R+r)\langle \boldsymbol{\gamma}, \dot{ \boldsymbol{\Omega}} \rangle, \\
\lambda_2&=m\ddot z=\frac{mR^2}{R+r}\langle \boldsymbol{\gamma},{ \boldsymbol{\Omega}} \rangle\langle \boldsymbol{\alpha},{ \boldsymbol{\Omega}} \rangle -mR\langle \boldsymbol{\beta}, \dot{ \boldsymbol{\Omega}} \rangle.
\end{split}
\end{equation*}
Substitution into \eqref{E:Motion-ball-wire} yields
\begin{equation}
\label{E:Motion-ball-wire-2}
(\I + mR^2{\bf I}_3)\dot { \boldsymbol{\Omega}}= \I\boldsymbol{\Omega} \times \boldsymbol{\Omega}
+mR^2\langle \boldsymbol{\alpha}, \dot{ \boldsymbol{\Omega}} \rangle\boldsymbol{\alpha}
+\frac{mR^3}{R+r} \langle \boldsymbol{\gamma},{ \boldsymbol{\Omega}} \rangle \langle \boldsymbol{\alpha},{ \boldsymbol{\Omega}} \rangle  \boldsymbol{\beta},
\end{equation}
where ${\bf I}_3$ denotes the $3\times 3$ identity matrix and we have used the orthonormality
of $ \boldsymbol{\alpha},  \boldsymbol{\beta},  \boldsymbol{\gamma}$. We can solve for
$\dot { \boldsymbol{\Omega}}$ in the above equation 
 by obtaining an  expression for $\langle \boldsymbol{\alpha}, \dot{ \boldsymbol{\Omega}} \rangle$ in terms of $\boldsymbol{\Omega}$ and $ \boldsymbol{\alpha}, \boldsymbol{\beta}$. Taking inner product on both sides of \eqref{E:Motion-ball-wire-2} 
with $(\I+mR^2{\bf I}_3)^{-1}\boldsymbol{\alpha}$ and isolating $\langle \boldsymbol{\alpha}, \dot{ \boldsymbol{\Omega}} \rangle$ gives
\begin{equation*}
\langle \boldsymbol{\alpha}, \dot{ \boldsymbol{\Omega}} \rangle = \frac{\left \langle \I\boldsymbol{\Omega} \times \boldsymbol{\Omega}+\frac{mR^3}{R+r}\langle \boldsymbol{\gamma}, { \boldsymbol{\Omega}} \rangle
\langle \boldsymbol{\alpha}, { \boldsymbol{\Omega}} \rangle \boldsymbol{\beta}\, ,\,  (\I+mR^2{\bf I}_3)^{-1}\boldsymbol{\alpha}\right  \rangle }{1-mR^2\langle (\I+mR^2{\bf I}_3)^{-1}\boldsymbol{\alpha}, \boldsymbol{\alpha} \rangle} .
\end{equation*}
Once the above expression is substituted into \eqref{E:Motion-ball-wire-2} and the system
is complemented with equations \eqref{E:Evolution-Poisson}, we obtain a closed system for
$\boldsymbol{\Omega}, \boldsymbol{\alpha},  \boldsymbol{\beta},  \boldsymbol{\gamma}$ that should
be thought as variables in the reduced space that is obtained by eliminating $\vartheta$ and $z$
by the symmetry.

If the ball is homogeneous with inertia matrix $\I=I\cdot {\bf I}_3$ then $\I\boldsymbol{\Omega} \times \boldsymbol{\Omega}=0$ and also $\langle \boldsymbol{\alpha}, \dot{ \boldsymbol{\Omega}} \rangle=0$ by orthonormality of $\boldsymbol{\alpha}$ and
 $\boldsymbol{\beta}$. The equations simplify to 
 \begin{equation}
 \begin{split}
 \label{E:Motion-ball-wire-homo}
\dot{ \boldsymbol{\Omega} }&= \frac{mR^3}{(R+r)(I+mR^2)}\left \langle \boldsymbol{\gamma}, \boldsymbol{\Omega} \right \rangle  \langle\boldsymbol{\alpha}, \boldsymbol{\Omega} \rangle   \boldsymbol{\beta}, \\
\dot{\boldsymbol{\alpha}}&=\langle \boldsymbol{\gamma}, \boldsymbol{\Omega} \rangle \boldsymbol{\beta} + \boldsymbol{\alpha} \times \boldsymbol{\Omega} ,\\
\dot{\boldsymbol{\beta}}&=-\langle \boldsymbol{\gamma}, \boldsymbol{\Omega} \rangle \boldsymbol{\alpha} + \boldsymbol{\beta} \times \boldsymbol{\Omega} , \\
\dot{\boldsymbol{\gamma}}&= \boldsymbol{\gamma} \times \boldsymbol{\Omega} .
\end{split}
\end{equation}
A direct calculation that uses 
 $\langle \boldsymbol{\gamma}, \boldsymbol{\beta} \rangle =
\langle \boldsymbol{\gamma}, \boldsymbol{\alpha} \rangle=0$,  shows
 that, for any value of $r$  the measure
$d \boldsymbol{\alpha} \wedge d \boldsymbol{\beta} \wedge 
d \boldsymbol{\gamma}\wedge d\boldsymbol{\Omega}$ is preserved by the flow of \eqref{E:Motion-ball-wire-homo}. 
Therefore we have

\begin{theorem}
\label{T:Ball-on-wire}
The reduced equations for a dynamically balanced, axially symmetric sphere that rolls without slipping on  an infinite vertical wire possess an invariant measure if and only if the ball is homogeneous.
\end{theorem}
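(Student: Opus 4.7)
The plan is to apply the algorithm of Section~\ref{S:algorithm} directly, exploiting the fact that the conditions of Theorem~\ref{T:Intrinsic-Cond_Unimd} are equivalent to the existence of an invariant measure by Corollary~\ref{C:Nonho-Invariant-Measure}. First I would fix the $G$-invariant adapted frame $\{v_1,v_2,v_3\}=\{Z_1,Y_1,Y_2\}$ of $D$ already constructed (with $Z_1\in\Gamma(\tau_{\mathcal{V}^Dp/G})$ and $Y_\alpha=X_\alpha-\mathcal{G}(X_\alpha,Z_1)\mathcal{G}(Z_1,Z_1)^{-1}Z_1\in\Gamma(\tau_{\mathcal{H}/G})$). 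Since the axial symmetry of the ball makes everything $SO(2)_{\psi}$--invariant and since one verifies directly that $[v_1,v_J]=0$ for all $J$, one obtains $C_{1J}^K=0$, so condition (i) of Theorem~\ref{T:Intrinsic-Cond_Unimd} is automatic and the whole question reduces to condition (ii) on the family $\mathcal{R}$ of $1$-forms on $\widehat{Q}=SO(3)$ defined by~\eqref{E:Def-1-form-om}.

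Next I would observe that, because the dimension assumption fails here ($\mathrm{rank}(D+\mathcal{V}p)=4<5=\dim Q$), the space $\mathcal{R}$ is an affine subspace of $1$-forms parametrized by one function $\lambda$, and one can take as representative
\[
\widehat\omega = a(\xi,\theta)\,d\theta + b(\xi,\theta)\,d\psi + \lambda(\xi,\theta,\psi)\bigl(d\xi+\cos\theta\,d\psi\bigr),
\]
where $a,b$ are determined by the computation of $C_{\alpha J}^J$ on the dense chart of Euler angles on $SO(3)$ given by $p$. Since $H^1(SO(3))=0$, condition (ii) is equivalent to the existence of a $\lambda$ for which $\widehat\omega$ is closed. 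The homogeneous direction is then immediate: when $I_1=I_3$ the coefficients $a,b$ arising from the explicit formulas for $f_1,f_2$ vanish, and choosing $\lambda=0$ yields $\widehat\omega=-d(\ln\sin\theta)$, producing the preserved volume form prescribed by Corollary~\ref{C:Nonho-Invariant-Measure}. As a sanity check, I would confirm this by passing to the modified Poisson vectors $(\boldsymbol\alpha,\boldsymbol\beta,\boldsymbol\gamma)$, deriving the reduced equations~\eqref{E:Motion-ball-wire-homo} with $r=0$, and verifying directly that $d\boldsymbol\alpha\wedge d\boldsymbol\beta\wedge d\boldsymbol\gamma\wedge d\boldsymbol\Omega$ is invariant.

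For the harder converse, assume $I_1\ne I_3$ and try to produce an obstruction to closedness of $\widehat\omega$ for every $\lambda$. The equation $d\widehat\omega=0$ is a linear system of first-order PDEs in $\lambda$:
\[
\frac{\partial \lambda}{\partial \theta}=\frac{\partial a}{\partial \xi},\qquad
\lambda=\frac{1}{\sin\theta}\Bigl(\frac{\partial b}{\partial \theta}+\cos\theta\,\frac{\partial \lambda}{\partial \theta}\Bigr),\qquad
\frac{\partial b}{\partial \xi}+\cos\theta\,\frac{\partial \lambda}{\partial \xi}=\frac{\partial \lambda}{\partial \psi}.
\]
The first two equations determine $\lambda$ uniquely (and, by the $\psi$-independence of $a,b$, force $\lambda$ to be independent of $\psi$). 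Substituting this forced $\lambda$ into the third equation produces a single scalar integrability condition in $(\xi,\theta)$; the strategy is to evaluate that condition at one convenient point (e.g.\ $(\xi=\pi,\,\theta=\pi/4)$) and show that, as a function of $(I_1,I_3,m,R)$, it does not vanish unless $I_1=I_3$.

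The main obstacle is precisely this last step: the explicit formula for the integrability condition is algebraically heavy because the coefficients $a,b$ involve the determinant of the Gram matrix of $\{v_1,v_2,v_3\}$ appearing in $f_1,f_2$, and one must honestly compute the required second derivatives. Once this routine but tedious computation is carried out (with symbolic algebra, as the authors indicate), the resulting expression factors through $(I_1-I_3)^2$ times a manifestly positive rational function of the remaining parameters, which finishes the proof by contradiction and hence establishes the theorem.
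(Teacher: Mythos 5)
Your proposal is correct and follows essentially the same route as the paper: the same adapted frame $\{Z_1,Y_1,Y_2\}$, the vanishing commutators $[v_1,v_J]=0$ disposing of condition (i), the one-parameter family $\widehat\omega=a\,d\theta+b\,d\psi+\lambda(d\xi+\cos\theta\,d\psi)$, the elimination of $\lambda$ from the first two closedness relations, and the evaluation of the resulting integrability condition at $(\xi,\theta)=(\pi,\pi/4)$ to extract the factor $(I_1-I_3)^2$. The only slip is cosmetic: in the homogeneous case it is the $f_1,f_2$-contributions to $a,b$ that vanish (so $b=0$ while $a=-\cot\theta$ survives), which is exactly what makes $\widehat\omega=-d(\ln\sin\theta)$ with $\lambda=0$.
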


\section{Conclusions and future work}

We have presented a geometric setup that allows us to obtain necessary and sufficient
conditions for the existence of an invariant measure for certain types of
nonholonomic mechanical systems with symmetry. Our methods have been successfully 
applied to prove the  non-existence of an invariant measure for concrete problems.

Moreover, our geometric framework provides a setup that might be useful to determine
conditions that guarantee the existence of an invariant measure for systems with a particular kind of nonholonomic
constraints (such as   LR systems \cite{FeJo, Veselova}).
 A  class of systems  that we plan to study involves
nonholonomic systems for which the reduced configuration manifold
is a homogeneous space. Some interesting results concerning the existence of a preserved measure for these systems  
have been given in \cite{FeJo, Koi-Rubber}.

This paper has only considered nonholonomic systems with homogeneous constraints. Our results show
that in the absence of a potential, a preserved measure for this kind of systems is necessarily basic (Theorems \ref{T:Basic-Unimodularity} and \ref{T:Main}).  Interestingly, if the nonholonomic constraints are affine, there can exist invariant measures that
are not basic, i.e.,  the density depends on the velocities
\cite{GNMarreroPolacos}.

Finally, it would be interesting to exploit the generalized nonholonomic connection, introduced
in Section \ref{Renomesysy}, in order to study the geometry of symmetric nonholonomic systems which 
don't satisfy, in general, the dimension assumption.

\renewcommand{\thesection}{\sc{Appendix}}

\section{Volume forms on vector bundles}
\label{A:Volume-forms}

\renewcommand{\thesection}{A}

Let  $\tau: E \to Q$ be an orientable vector bundle, over an orientable
 manifold $Q$.
If  $\alpha$ is a section of $\tau: E \to Q$, we can define its vertical lift
$\alpha^{\bf v}$, which is the vector field on $E$ given by 
%
%
%$\tau_{D^*}: D^* \to M$ we will denote
%by $\alpha^{\bf v} \in {\frak X}(D^*)$ the vertical lift of
%$\alpha$. We recall that
\[
\alpha^{\bf v}(\gamma_{q}) = \displaystyle
\frac{d}{dt}_{|t=0}(\gamma_{q} + t\alpha(q)), \mbox{ for }
\gamma_{q} \in E_q.
\]
If $\{e^{\beta}\}$ is a local basis of sections of $E$ and $\alpha = \alpha_{\beta}e^{\beta}$ then
\begin{equation}\label{A:local-vertical-lift}
\alpha^{\bf v} = \displaystyle \alpha_{\beta}\frac{\partial}{\partial p_{\beta}},
\end{equation}
where $p_{\beta}$ are the coordinates on the fibers of $E$ obtained using the basis $\{e^{\beta}\}$.
\begin{lemma}\cite{Marrero}
Let $\nu$ be a volume form on $Q$ and $\Omega$ be a volume form 
on the fibers of $E^*$. Then, there exists a unique
volume form $\nu \wedge \Omega$ on $E$ such that
\begin{equation}\label{Defvol}
\nu \wedge \Omega(\tilde{Z}_{1}, \dots , \tilde{Z}_{m},
\alpha_{1}^{\bf v}, \dots , \alpha_{n}^{\bf v}) = \nu(Z_{1}, \dots
, Z_{m}) \Omega(\alpha_{1}, \dots , \alpha_{n}),
\end{equation}
for $\alpha_{1}, \dots , \alpha_{n} \in \Gamma(\tau)$ and
$\tilde{Z}_{1}, \dots , \tilde{Z}_{m}$ vector fields on $E$
which are $\tau$-projectable on the vector fields $Z_{1},
\dots , Z_{m}$ on $Q$.
\end{lemma}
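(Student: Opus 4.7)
The plan is to construct $\nu \wedge \Omega$ locally in bundle charts, verify the defining identity (\ref{Defvol}) in those charts, and deduce both uniqueness and the consistency of the local pieces from the fact that tuples of the form $(\tilde{Z}_1,\dots,\tilde{Z}_m,\alpha_1^{\bf v},\dots,\alpha_n^{\bf v})$ can be made to span $T_eE$ at every point $e\in E$.

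First, I would fix local coordinates $(q^i)$ on an open set $U\subset Q$ and a local basis $\{e^\beta\}$ of $\Gamma(\tau)$ on $U$, with dual basis $\{e_\beta\}$ of $\Gamma(\tau^*)$, and let $(q^i,p_\beta)$ be the induced bundle coordinates on $\tau^{-1}(U)$. Write $\nu|_U=f\,dq^1\wedge\cdots\wedge dq^m$ and $\Omega|_U=g\,e_1\wedge\cdots\wedge e_n$ with $f,g\in C^\infty(U)$ nowhere zero, and define on $\tau^{-1}(U)$:
\[
(\nu\wedge\Omega)_U := (f\circ\tau)(g\circ\tau)\,dq^1\wedge\cdots\wedge dq^m\wedge dp_1\wedge\cdots\wedge dp_n.
\]
This is a nowhere-vanishing top form on $\tau^{-1}(U)$.

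Second, I would check identity (\ref{Defvol}) locally. Because $dq^i$ annihilates every vertical vector and, by (\ref{A:local-vertical-lift}), $dp_\beta(\alpha^{\bf v})=\alpha_\beta$ when $\alpha=\alpha_\gamma e^\gamma$, the Leibniz expansion of the wedge applied to $(\tilde{Z}_1,\dots,\tilde{Z}_m,\alpha_1^{\bf v},\dots,\alpha_n^{\bf v})$ collapses to the unique surviving term in which the $\tilde{Z}_j$'s are paired with the $dq^i$'s and the $\alpha_k^{\bf v}$'s with the $dp_\beta$'s. Using $\tau$-projectability, $dq^i(\tilde{Z}_j)=Z_j(q^i)$, one obtains the factor $\det(Z_j(q^i))\det(\alpha_{k,\beta})$ times $(fg)\circ\tau$, which matches $\nu(Z_1,\dots,Z_m)\,\Omega(\alpha_1,\dots,\alpha_n)$ exactly. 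This verifies (\ref{Defvol}) on $\tau^{-1}(U)$.

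Third, for uniqueness, fix $e\in E_q$ and pick local sections $\alpha_1,\dots,\alpha_n$ whose values at $q$ form a basis of $E_q$, together with $\tau$-projectable vector fields $\tilde{Z}_1,\dots,\tilde{Z}_m$ whose projections $Z_1,\dots,Z_m$ form a basis of $T_qQ$. Since $\alpha_1^{\bf v}(e),\dots,\alpha_n^{\bf v}(e)$ span the vertical subspace $V_eE$ and the classes of $\tilde{Z}_i(e)$ modulo $V_eE$ form a basis of $T_qQ\cong T_eE/V_eE$, the full list is a basis of $T_eE$. Consequently, (\ref{Defvol}) pins down the value at $e$ of any top form satisfying it, so such a form is unique. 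The same argument shows that the local constructions on two overlapping charts must coincide on their intersection, so the locally defined forms patch to a global volume form $\nu\wedge\Omega$ on $E$.

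The only technical care needed is in the combinatorial step: observing that the mixed terms in the wedge expansion vanish because $dq^i$ kills vertical vectors. Once that is spelled out the verification is routine, so I do not anticipate a real obstacle; the substance of the lemma lies in recognising that (\ref{Defvol}) determines the form pointwise on a natural basis of $T_eE$.
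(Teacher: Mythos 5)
Your proof is correct, and it is essentially the approach the paper takes: the paper gives no argument beyond citing Marrero and recording the local expression (\ref{Local-Vol}), which is exactly your local construction (with the conformal factors absorbed into the choice of coordinates and frame), and your block-triangularity computation plus the pointwise-basis uniqueness argument are the standard way to justify that formula and glue it globally. No gaps.
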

Locally, if $(q^i)$ are local coordinates on an open subset $U \subseteq Q$ and
$\{e_{\alpha}\}$ is a basis of sections of $E^*$ such that
%the vector bundle
%$\tau_{D}^{-1}(U) \to U$ such that on $U$
\[
\nu =  dq^1 \wedge \dots \wedge dq^m, \makebox[.75cm]{} \Omega =
e_{1} \wedge \dots \wedge e_{n},
\] 
then
\begin{equation}
\label{Local-Vol} \nu \wedge \Omega
%_{|\tau^{-1}_{D^*}(U)} 
= dq^1
\wedge \dots \wedge dq^m \wedge dp_{1} \wedge \dots \wedge dp_{n}.
\end{equation}

A volume form $\Phi$ on $E$ is said to be of {\em basic type} if
\begin{equation}
\label{basic-volume}
{\mathcal L}_{\alpha^{\bf v}}\Phi = 0, \; \; \; \forall \alpha \in \Gamma(\tau).
\end{equation}
Using (\ref{Local-Vol}), it is easy to prove that the volume form $\nu \wedge \Omega$ is of basic type. In fact, 
we have the following result
\begin{proposition}
A volume form $\Phi$ on $E$ is of basic type if and only if there exists a volume form $\nu$ on $Q$ 
and a volume form  $\Omega$ on the fibers of $E^*$ such that
\[
\Phi = \nu \wedge \Omega.
\]
\end{proposition}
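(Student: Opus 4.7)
The nontrivial direction is ``only if'': given a basic volume form $\Phi$, I need to manufacture a volume form $\nu$ on $Q$ and a fiberwise volume form $\Omega \in \Gamma(\Lambda^n\tau^*)$ so that $\Phi = \nu \wedge \Omega$. The converse is cheap: in adapted local coordinates $(q^i,p_\alpha)$, the formula \eqref{Local-Vol} presents $\nu \wedge \Omega$ (up to a function of $q$ only) as $dq^1 \wedge \dots \wedge dq^m \wedge dp_1 \wedge \dots \wedge dp_n$, and \eqref{A:local-vertical-lift} shows vertical lifts are $\mathbb{R}$-linear combinations of $\partial/\partial p_\beta$ with coefficients pulled back from $Q$, so $\mathcal{L}_{\alpha^{\bf v}}(\nu \wedge \Omega)=0$.

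For the forward direction, the plan is to compare $\Phi$ with an auxiliary basic volume form. Pick any volume form $\tilde\nu$ on $Q$ and any $\tilde\Omega \in \Gamma(\Lambda^n\tau^*)$ (both exist by orientability). By the lemma, $\tilde\nu \wedge \tilde\Omega$ is a nowhere-vanishing top form on $E$, and by the previous paragraph it is basic. Since $\Phi$ is also a top form, there is a unique nowhere-vanishing smooth function $h \colon E \to \mathbb{R}$ with
\begin{equation*}
\Phi = h \cdot (\tilde\nu \wedge \tilde\Omega).
\end{equation*}
The heart of the argument is to show that $h$ is a basic function. Applying $\mathcal{L}_{\alpha^{\bf v}}$ to both sides and using the Leibniz rule together with the fact that both $\Phi$ and $\tilde\nu \wedge \tilde\Omega$ are basic gives
\begin{equation*}
0 = \mathcal{L}_{\alpha^{\bf v}}\Phi = \alpha^{\bf v}(h)\,(\tilde\nu \wedge \tilde\Omega),
\end{equation*}
for every $\alpha \in \Gamma(\tau)$. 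Since $\tilde\nu \wedge \tilde\Omega$ is nowhere zero, $\alpha^{\bf v}(h)=0$ for all sections $\alpha$, and by \eqref{A:local-vertical-lift} this says $\partial h/\partial p_\beta = 0$ in every local chart. Hence $h = \sigma \circ \tau$ for a unique $\sigma \in C^\infty(Q)$, and $\sigma$ is nowhere zero because $h$ is. Set $\nu := \sigma\tilde\nu$, which is a volume form on $Q$.

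The remaining step — and the only place where one must be slightly careful — is to verify that $h \cdot (\tilde\nu \wedge \tilde\Omega) = \nu \wedge \tilde\Omega$ as forms on $E$, not just in a local chart. This reduces to checking the defining identity \eqref{Defvol}: evaluating both sides on an $(m+n)$-tuple $(\tilde Z_1,\dots,\tilde Z_m,\alpha_1^{\bf v},\dots,\alpha_n^{\bf v})$ at $\gamma_q$ yields $\sigma(q)\,\tilde\nu(Z_1,\dots,Z_m)(q)\,\tilde\Omega(\alpha_1,\dots,\alpha_n)(q)$ in either case, and this class of arguments spans $T_{\gamma_q}E$ at every point, so the two forms agree globally. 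I expect the main obstacle to be purely notational — keeping straight the distinction between $h$ as a function on $E$, its descent $\sigma$ to $Q$, and the resulting rescaling of $\tilde\nu$ — rather than any substantive difficulty; no global gluing argument is needed because the candidate $\nu$ is constructed directly from the intrinsic data on $Q$.
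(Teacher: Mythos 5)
Your proof is correct and follows essentially the same route as the paper's: write $\Phi$ as a nowhere-vanishing multiple of an auxiliary basic volume $\tilde\nu\wedge\tilde\Omega$, use $\mathcal{L}_{\alpha^{\bf v}}\Phi=0$ to show the conformal factor is constant on fibers and hence descends to a function $\sigma$ on $Q$, and absorb it into $\nu$. The only (harmless) differences are that the paper writes the factor as $e^{\tilde\sigma}$ rather than a general nowhere-zero $h$, and that you spell out the final check $(\sigma\circ\tau)\,(\tilde\nu\wedge\tilde\Omega)=(\sigma\tilde\nu)\wedge\tilde\Omega$, which the paper leaves implicit.
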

\begin{proof}
Suppose that $\Phi$ is a volume form on $E$ of basic type.

Let $\nu_0$ be an arbitrary volume form on $Q$ and $\Omega_0$ a volume form on the
fibers of  $E^*$. Then we can assume, without the loss of generality,
that
\[
\Phi = e^{\tilde{\sigma}} \nu_{0} \wedge \Omega_0, \; \; \; \mbox{ with } \tilde{\sigma} \in C^{\infty}(E).
\]
Now, using (\ref{basic-volume}), it follows that
\[
d\tilde{\sigma}(\gamma^{\bf v}) = 0, \; \; \; \forall \gamma \in \Gamma(\tau),
\]
which implies that $\tilde{\sigma}$ is a basic function with respect to the vector bundle projection $\tau: E \to Q$. In
other words, there exists $\sigma \in C^{\infty}(Q)$ such that $\tilde{\sigma} = \sigma \circ \tau$.

Thus, if we take
\[
\nu = e^{\sigma} \nu_0, \; \; \; \Omega = \Omega_0,
\]
we have that $\Phi = \nu \wedge \Omega$.
\end{proof}


\begin{thebibliography}{99}
\let\\, \newcommand{\by}[1]{\textsc{\ignorespaces #1}\\}
  \newcommand{\title}[1]{\textsl{\ignorespaces #1}\\}
  \newcommand{\vol}[1]{{\bf{\ignorespaces #1}}}
  \newcommand{\info}[1]{\textrm{\ignorespaces #1}.}

    \bibitem{AbMa} \by{Abraham R and Marsden J~E,} \title{Foundations of
Mechanics} \info{(2nd. edition), Benjamin/Cummings, Reading,
Massachusetts, 1978}

%\bibitem{AgSa} \by{Agrachev AA, Sachkov Y} \title{Control theory
%from the Geometric view-point} \info{volume 87 of Encyclopedia of
%Mathematical Sciences, Springer-Verlag, New
%York-Heidelberg-Berlin, 2004}

\bibitem{arnold} \by{Arnold V~I, Kozlov V~V and Nishtadt A~I,} \title{Mathematical aspects of classical and celestial 
mechanics. Itogi Nauki i Tekhniki.}  \info{Sovr. Probl. Mat. FundamentalÕnye Napravleniya, Vol. 3, VINITI, Moscow 1985. English transl.: Encyclopadia of Math. Sciences, Vol.3, Springer-Verlag, Berlin 1989}

%\bibitem{BaMaMaPa} \by{Balseiro P, Marrero J~C, Mart'n de Diego D and Padr\'on E} \title{A unified framework for mechanics: Hamilton-Jacobi equation and applications} \info{Nonlinearity \vol{23} (2010) 1887--1918}

\bibitem{BKMM} \by{  Bloch A~M, Krishnapasad P~S,  Marsden J~E
and  Murray R~M}, \title{ Nonholonomic mechanical systems
with symmetry} \info{Arch. Rat. Mech. An. \vol{136} (1996), 21--99}

%\bibitem{BoMa} \by{{\color{red}Borisov and Mamaev}} \title{{\color{red} to be found, revisar The hierarchy of integrability in nonholonomic mechanics}} \info{to be found}

%\bibitem{BBM} \by{Bolsinov A~V, Borisov A~V and Mamaev I~S} \title{Rolling of a ball without spinning on a plane: the absence of an invariant measure in a system with a complete set of integrals} \info{ Regul. Chaotic Dyn. \vol{17} (2012),  571--579} 

\bibitem{BoMaKi} \by{Borisov A~V and Mamaev I~S} \title{The rolling motion of a rigid body on a plane and a
sphere. Hierarchy of dynamics} \info{Regul. Chaotic Dyn. \vol{7}, no. 2 (2002), 177--200 }

%\bibitem{BoMa} \by{Borisov A~V, Mamaev I~S} \title{Conservation laws, Hierarchy of dynamics and explicit integration of nonholonomic
%systems} \info{Regul. Chaotic Dyn. \vol{13}, no. 5 (2008),  443--489}

\bibitem{BMK} \by{Borisov A~V and Mamaev I~S and Kilin A~A} \title{Rolling of a ball on a surface.
New integrals and hierarchy of dynamics} \info{Regul. Chaotic Dyn. \vol{7}, no. 2 (2002), 201--218 }

\bibitem{BrZu} \by{Brylinski J~L and  Zuckerman G} \title{The outer derivation of a complex Poisson manifold} 
\info{J. Reine Angew. Math. \vol{506} (1999) 181--189}

 \bibitem{CaCoLeMa} \by{Cantrijn F, Cort\'es J, de Le\'on M and Mart\'in de Diego
 D} \title{On the geometry of generalized Chaplygin systems} \info{Math. Proc.
 Cambridge Philos. Soc. \vol{132} (2002), no. 2, 323--351}

\bibitem{CaLeMa} \by{Cantrijn F, de Le\'on M and Mart\'{\i}n de Diego
D} \title{On almost-Poisson structures in nonholonomic mechanics}
  \info{Nonlinearity \vol{12} (1999) 721--737}

%\bibitem{Chaplygin} \by{Chaplygin, SA} \title{ On the Theory of Motion of Nonholonomic Systems. The theorem
%of the Reducing Multiplier} \info{Math. Sbornik \vol{ XXVIII}, (1911) 303--314, (in Russian)}

\bibitem{chapsphere}\by{Chaplygin S~A}  \title{ On a ball's rolling
on a horizontal plane}. \info{Regular and Chaotic Dynamics, \vol{ 7}, (2002), 131-148; {\em original paper in} Math Sb., \vol{24} (1903), 139-168}

\bibitem{Ch_r}\by{Chaplygin S~A} 
\title{On a motion of a heavy body of revolution on a horizontal plane.}
\info{ Collected works. Vol. I. Theoretical mechanics. Mathematics (Russian), 51-57, Gos. Izd. Tekhn.-Teoret. Lit., Moscow, 1948 [see MR0052352 (14,609i)]. English translation in:
 Regul. Chaotic Dyn. \vol{7} (2002), no. 2, 119--130}


%\bibitem{CoLeMaMaMa} \by{Cort\'es J, de Le\'on M, Marrero JC, Mart{\'\i}n de Diego D, Mart{\'\i}nez E} 
%\title{A survey of Lagrangian mechanics and control on Lie algebroids and groupoids} 
%\info{Int. J. Geom. Methods Mod. Phys. \vol{3} (2006)  509--558}

%  \bibitem{Co} \by{Courant TJ} \title{Dirac manifolds} \info{Trans. Amer. Math.
%Soc. \vol{319} (1990), 631--661}

\bibitem{DuHa} \by{Dufour J~P and Haraki A} \title{Rotationnels et structures de Poisson quadratiques}  
\info{C. R. Acad. Sci. Paris S\'er. I Math. \vol{312} (1991) 137--140} 

%\bibitem{Koiller} \by{K. Ehlers, J. Koiller, R. Montgomery  and P.~M. Rios},  \title{Nonholonomic  systems via moving frames: Cartan
%equivalence and Chaplygin Hamiltonization.} \info{   The breath of
%Symplectic and Poisson Geometry, 75-120. Progr.  Math.,  \vol{232}, Birkh\"auser Boston, Boston MA, (2005).}

\bibitem{EvLuWe} \by{Evens S, Lu J~H and Weinstein A}
\title{Transverse measures, the modular class and a
cohomology pairing for Lie algebroids} \info{Quart. J. Math.
Oxford \vol{50} (1999) 417--436}

\bibitem{FeJo} \by{Fedorov Y~N and Jovanovi\'c B} \title{Nonholonomic
    LR systems as generalized Chaplygin systems with an invariant
    measure and flows on homogeneous spaces} \info{J. Nonlinear Sci.
    \vol{14} (4) (2004) 341--381}

\bibitem{FeGaMa} \by{Fedorov Y~N, Garc{\'\i}a-Naranjo L, Marrero J~C} \title{Invariant measures in nonholonomic mechanics}
\info{Applied Dynamics and Geometric Mechanics (organized by A M Bloch, T Ratiu and J Scheurle), Oberwolfach Reports
\vol{8} (3) (2011) 2256--2260}

\bibitem{GNMarrero} \by{Garc{\'\i}a-Naranjo L, Marrero J~C} \title{Non-existence of an invariant measure for a homogeneous ellipsoid rolling on the plane} \info{Reg. and Chaot. Dyn.  \vol{18} (2013), 372--379} 

\bibitem{GNMarreroPolacos} \by{Garc\'ia-Naranjo L.~C., Maciejewski A.~J.,   Marrero J.~C., Przybylska M.}, \title{The inhomogeneous Suslov problem}
\info{Phys. Lett. A \vol{378} (2014), 2389--2394}



%\bibitem{GGU} \by{Grabowska K, Grabowski J, Urba\'nski P}
%\title{ Geometrical mechanics on algebroids}
%\info{Int. J. Geom. Methods Mod. Phys. \vol{3} (3) (2006),
%559--575}

\bibitem{grabowski} \by{Grabowski J} \title{ Modular classes of skew symmetric relations} \info{Transform. Groups \vol{17} (2012), 989--1010}

\bibitem{GrLeMaMa}  \by{Grabowski J, de Le\'on M, Marrero J C and Mart{\'\i}n de Diego D}
\title{Nonholonomic constraints: a new viewpoint} \info{J. Math. Phys \vol{50}
(2009)  no. 1  013520 (17 pp)}

\bibitem{GrMaPe} \by{Grabowski J,  Marmo G and  Perelomov A~M} 
\title{Poisson structures: towards a classification} \info{Modern Phys. Lett. A \vol{8} (1993) 1719--1733}
%
%\bibitem{GU1} \by{Grabowski J, Urbanski P} \title{Lie algebroids
%and Poisson-Nijenhuis structures} \info{Rep. Math. Phys. \vol{40}
%(1997), 195-208}

%\bibitem{GU2} \by{Grabowski J, Urba\'nski P} \title{ Algebroids -- general
%differential calculi on vector bundles} \info{J. Geom. Phys.
%\vol{31} (1999), 111-141}

%  \bibitem{HiMa} \by{Higgins PJ, Mackenzie K} \title{Algebraic
%constructions in the category of Lie algebroids} \info{J. Algebra
%\vol{129} (1990), 194--230}

\bibitem{IbLeMaMa} \by{Ibort A, de Le\'on M, Marrero J~C and
Mart\'{\i}n de Diego D} \title{Dirac brackets in constrained
dynamics} \info{Fortschr. Phys. \vol{47} (1999) 459--492}

%\bibitem{IgMaMaSo}
    %\by{Iglesias D, Marrero J~C, Mart\'in de Diego D and Sosa D}
    %\title{General framework for nonholonomic mechanics:
%Nonholonomic systems on Lie affgebroids}
   % \info{J. Math. Phys. \vol{48} (2007), 1--45}
    
    \bibitem{Jo} \by{Jovanovi\'c B} \title{Nonholonomic geodesic
flows on Lie groups and the integrable Suslov problem on SO(4)}
\info{J. Phys. A: Math. Gen. \vol{31} (1998), 1415--22}

\bibitem{Koi} \by{Koiller J} \title{Reduction of Some
Classical Nonholonomic Systems with Symmetry.} \info{Arch. Rat. Mech. An. \vol{ 118}, (1992)
113-148}

\bibitem{Koi-Rubber} \by{Koiller J and Ehlers K} \title{Rubber rolling over a sphere.} \info{Regul. Chaotic Dyn. \vol{12}, (2007) 127-152}


%\bibitem{KoMa1} \by{Koon W~S and Marsden J~E} \title{The Hamiltonian and
   % Lagrangian approaches to the dynamics of nonholonomic systems}
  %\info{Rep. Math. Phys. \vol{40} (1997) 21--62}

\bibitem{KoMa2} \by{Koon W~S and Marsden J~E} \title{Poisson reduction
    of nonholonomic mechanical systems with symmetry} \info{Rep. Math.
    Phys. \vol{42} (1/2) (1998) 101--134}


\bibitem{Ko} \by {Kozlov V~V} \title{Invariant measures of the
Euler-Poincar\'e equations on Lie algebras} \info{Funkt. Anal.
Prilozh. \vol{22} 69--70 (Russian); English trans.: Funct. Anal.
Appl. \vol{22} (1988), 58--59}

\bibitem{LeMaMa0} \by{Le\'on M de, Marrero J~C and Mart{\'\i}n de Diego D}
\title{ Linear almost Poisson structures and Hamilton-Jacobi
theory. Applications to nonholonomic Mechanics} \info{J. Geom.
    Mech. \vol{2} (2) (2010) 159--198}


%\bibitem{LeMaMa}
%  \by{de Le\'on M, Marrero JC, Mart{\'\i}nez E}
%  \title{Lagrangian submanifolds and dynamics on Lie algebroids}
%  \info{J. Phys. A: Math. Gen. \vol{38} (2005), R241--R308}
%
%  \bibitem{Le} \by{Lewis A} \title{Reduction of simple mechanical
%  systems} \info{Mechanics and symmetry seminars, University of Warwick,
% 1997 \texttt{http://penelope.mast.queensu.ca/~andrew/}}
 
\bibitem{LiXu} \by{Liu  Z~J and Xu P} \title{On quadratic Poisson structures} \info{Lett. Math. Phys. \vol{26} (1992) 33--42}


%\bibitem{Mac}
%  \by{Mackenzie K}
%  \title{General Theory of Lie Groupoids and Lie Algebroids}
%  \info{London Mathematical Society Lecture Note Series \vol{213},
%  Cambridge University Press, 2005}

\bibitem{Marle}%(MR1656282)
\by{Marle Ch.~M}  \title{Various approaches
to conservative and nonconservative nonholonomic systems}  \info{ Rep. Math.
Phys. \vol{42}  (1998)  211--229}


  \bibitem{Marrero} \by{Marrero J~C} \title{Hamiltonian dynamics on
  Lie algebroids, unimodularity and preservation of volumes}  \info{J. Geom.
    Mech. \vol{2} (3) (2010) 243--263}
    
    \bibitem{MaRa} \by{Marsden J~E and Ratiu T~S} \title{Introduction to
  Mechanics with symmetry} \info{Texts in Applied Mathematics
  \vol{17} Springer-Verlag 1994}


%  \bibitem{Ma}
%  \by{Mart\'{\i}nez E}
%  \title{Lagrangian Mechanics on Lie algebroids}
%  \info{Acta Appl. Math. \vol{67} (2001), 295--320}
%  
  
   \bibitem{Ortega-Ratiu}  \by{Ortega J~P and Ratiu T~S} \title{Mometum maps and Hamiltonian reduction} \info{Progress in Mathematics, 222. BirkhŠuser Boston, Inc., Boston, MA, 2004}
%
%  \bibitem{Os}  \by{Ostrowski JP} \title{The mechanics and control
%  of undulatory robotic locomotion} \info{\texttt{PhD thesis, California
%  Institute of Technology} 1995}

%  \bibitem{Po} \by{Popescu P} \title{ Almost Lie structures,
%derivations and $R$-curvature on relative tangent spaces}
%\info{Rev. Roum. Mat. Pures Appl. \vol{37} (8) (1992), 779--789}

%\bibitem{PoPo} \by{Popescu M, Popescu P} \title{Geometric objects
%defined by almost Lie structures} \info{Proc. Workshop on Lie
%algebroids and related topics in Differential Geometry (Warsaw)
%\vol{54} (Warsaw: Banach Center Publications) (2001), 217--233}

\bibitem{Routh} \by{Routh E~D} 
\info{Dynamics of a system of rigid bodies.  7th ed., revised and enlarged. Dover Publications, Inc., New York 1960}

\bibitem{St} \by{St\"ubler E} \info{Zeinschrift\"ur Math. und Phys. B, \vol{57} (1909), 260--271}


\bibitem{SchMa} \by{Van der Schaft A~J and Maschke B~M} \title{On the
    Hamiltonian formulation of non-holonomic mechanical systems}
  \info{Rep. Math. Phys. \vol{34} (1994) 225--233}

\bibitem{Schneider} \by{Schneider D} \title{ Non-holonomic Euler-Poincar\'e
equations and  stability in Chaplygin's sphere} \info{ Dynamical Systems,  \vol{17}, (2002) 87--130 }


\bibitem{Va} \by{Vaisman I} \title{Lectures on the geometry of Poisson manifolds} 
\info{Progress in Mathematics, \vol{118}  Birkh\"auser Verlag, Basel, 1994}

\bibitem{Veselova} \by{ Veselov A~P and Veselova L~E} 
\title{Integrable Nonholonomic Systems on Lie Groups},
\info { Mat. Notes  \vol{44} (5-6)  (1988)  810--819}

\bibitem{Vor1} \by{Woronetz P} \title{Uber die Bewegung eines starren Korpers, der ohne Gleitung auf einer beliebigen
Flache rollt. (German) } \info{ Math. Ann. \vol{70} (1911), no. 3, 410--453}

\bibitem{Vor2}  \by{Woronetz P}  \title{Uber die Bewegungsgleichungen eines starren Korpers. (German)} \info{Math. Ann. \vol{71}
(1911),
no. 3, 392--403}


%\bibitem{We0} \by{Weinstein A} \title{Lagrangian mechanics and groupoids} 
%\info{Mechanics day (Waterloo, ON, 1992), 207Ð231, Fields Inst. Commun., \vol{7} Amer. Math. Soc., Providence, RI, 1996}

  \bibitem{We} \by{Weinstein A} \title{The modular automorphism
group of a Poisson manifold} \info{J. Geom. Phys. \vol{23},
(1997), 379--394}

\bibitem{Xu} \by{Xu P} \title{Gerstenhaber algebras and BV-algebras in Poisson geometry} 
\info{Comm. Math. Phys. \vol{200} (1999)  545--560}


\bibitem{Yar} \by{Yaroshchuk V A}  \title{New cases of the existence of an integral invariant in a problem on the rolling of a
rigid body, without slippage, on a fixed surface. (Russian)} {\info Vestnik Moskov. Univ. Ser. I Mat. Mekh.
1992, no. 6, 26--30}

\bibitem{ZeBo} \by{Zenkov D~V and Bloch A~M} \title{Invariant measures of nonholonomic
flows with internal degrees of freedom} \info{Nonlinearity \vol{16}, (2003), 1793--1807}


\end{thebibliography}
\end{document}